\title[Down-up walk on independent sets]{Optimal mixing of the down-up walk on independent sets of a given size}
\author{Vishesh Jain}
\author{Marcus Michelen}
\address{University of Illinois Chicago}
\email{visheshj@uic,edu, michelen@uic.edu}
\author{Huy Tuan Pham}
\author{Thuy-Duong Vuong}
\address{Stanford University}
\email{huypham@stanford.edu, tdvuong@stanford.edu}
\renewcommand{\P}{\mathbb{P}}
\newcommand{\Var}{\operatorname{Var}}
\newcommand{\Ent}{\operatorname{Ent}}
\newcommand{\E}{\mathbb{E}}
\newcommand{\Z}{\mathbb{Z}}
\newcommand{\R}{\mathbb{R}}
\newcommand{\C}{\mathbb{C}}
\newcommand{\N}{\mathbb{N}}
\newcommand{\cC}{\mathcal{C}}
\newcommand{\cG}{\mathcal{G}}
\newcommand{\cE}{\mathcal{E}}
\newcommand{\cI}{\mathcal{I}}
\newcommand{\abs}[1]{|#1|
}
\newcommand{\mc}{\mathcal}
\newcommand{\blambda}{\boldsymbol{\lambda}}
\newcommand{\eps}{\varepsilon}
\newcommand{\ubar}{\overline{u}}
\newcommand{\ibar}{\overline{i}}
	\newtheorem{theorem}{Theorem}
	\newtheorem{definition}[theorem]{Definition}
	\newtheorem{lemma}[theorem]{Lemma}
	\newtheorem{corollary}[theorem]{Corollary}
	\newtheorem{fact}[theorem]{Fact}
	\newtheorem{prop}[theorem]{Proposition}
 \newtheorem*{claim*}{Claim}
\begin{document}

\maketitle

\begin{abstract}
Let $G$ be a graph on $n$ vertices of maximum degree $\Delta$. We show that, for any $\delta > 0$, the down-up walk on independent sets of size $k \leq (1-\delta)\alpha_c(\Delta)n$ mixes in time $O_{\Delta,\delta}(k\log{n})$, thereby resolving a conjecture of Davies and Perkins in an optimal form. Here, $\alpha_{c}(\Delta)n$ is the NP-hardness threshold for the problem of counting independent sets of a given size in a graph on $n$ vertices of maximum degree $\Delta$. Our mixing time has optimal dependence on $k,n$ for the entire range of $k$; previously, even polynomial mixing was not known. In fact, for $k = \Omega_{\Delta}(n)$ in this range, we establish a log-Sobolev inequality with optimal constant $\Omega_{\Delta,\delta}(1/n)$. 

At the heart of our proof are three new ingredients, which may be of independent interest. The first is a method for lifting $\ell_\infty$-independence from a suitable distribution on the discrete cube---in this case, the hard-core model---to the slice by proving stability of an Edgeworth expansion using a multivariate zero-free region for the base distribution. The second is a generalization of the Lee-Yau induction to prove log-Sobolev inequalities for distributions on the slice with considerably less symmetry than the uniform distribution. The third is a sharp decomposition-type result which provides a lossless comparison between the Dirichlet form of the original Markov chain and that of the so-called projected chain in the presence of a contractive coupling.
\end{abstract}

\section{Introduction}

Counting and sampling independent sets of a given size in a graph are intensely studied, fundamental computational tasks in a number of areas.  Specific examples include, for instance, the evaluation of the permanent of $0/1$-matrices (which is \#P-complete by a celebrated result of Valiant \cite{valiant1979complexity}); counting perfect matchings of planar graphs (which can be accomplished exactly in polynomial time by the classical FKT algorithm \cite{kasteleyn1967graph, temperley1961dimer}); and sampling stationary configurations of a conserved hard-core lattice gas on a graph with a prescribed number of particles \cite{bubley1997path}. Since these computational tasks are, in general, $\#$P-hard, the focus has been on algorithms for approximate counting and sampling. 

Given a graph $G$, let $\mathcal{I}_k(G)$ denote the set of independent sets of $G$ of size $k$, and let $\mu_k(G)$ denote the uniform distribution on $\mathcal{I}_k(G)$. For approximately sampling from $\mu_k(G)$ on general graphs $G$, the following extremely simple and natural Markov chain, popularly known as the down-up walk, has been investigated by various authors, going back essentially\footnote{The original work of Metropolis et al.~\cite{metropolis1953equation} considers a continuous analogue of the hard-core model known as the hard-sphere gas and worked with a generalization of the down-up walk: they introduce an additional parameter $\alpha$ and only allow ``down-up'' steps that move a point by at most $\alpha$ in the $\ell_\infty$ distance.  On the torus with $\alpha$ equalling the $\ell_\infty$ diameter, this is precisely the down-up walk.} to the original work of Metropolis, Rosenbluth, Rosenbluth, Teller and Teller \cite{metropolis1953equation} that introduced the Markov Chain Monte Carlo method: given the current state $I_{t} \in \mathcal{I}_k(G)$ at time $t$, independently select uniformly random vertices $u \in I_t ,v \in V(G)$, and let $I' = (I_t \setminus u) \cup v$. If $I' \in \mathcal{I}_k(G)$, then $I_{t+1} = I'$; if not, then $I_{t+1} = I_t$.  

In their seminal paper on path-coupling, Bubley and Dyer \cite{bubley1997path} showed that for all graphs on $n$ vertices with maximum degree $\Delta$, and for all $k \leq (1-\delta)n/(2\Delta+2)$, the down-up walk mixes in time $O_{\delta}(k\log{n})$. A few years ago, in one of the early works on the application of high-dimensional expander (HDX) techniques to MCMC, Alev and Lau \cite{alev2020improved} showed that the down-up walk mixes in time $O(k^{3}\log(n/k))$ for $k \leq n/(\Delta + |\lambda_{\min}(A_G)|)$, where $A_G$ denotes the adjacency matrix of the graph $G$; as a restriction on $k$ parameterized solely by the maximum degree, this translates to $k \leq n/(2\Delta)$ since for a graph of maximum degree $\Delta$, $|\lambda_{\min}(A_G)|$ could be as large as $\Delta$. 

The computational complexity of the problem of approximately sampling from $\mu_k(G)$ (or essentially equivalently, approximating $|\mathcal{I}_k(G)|$) on $n$-vertex graphs $G$ of maximum degree $\Delta$ was recently investigated systematically by Davies and Perkins \cite{DP21}. They showed that there is an explicit function \footnote{see \cref{sec:spectral-independence} for an interpretation of this function. Here, we only note that $\alpha_c(\Delta) = \frac{(1+o_{\Delta}(1))e}{(1+e)\Delta}$.}
\[\alpha_c(\Delta) := \frac{(\Delta-1)^{\Delta-1}}{(\Delta-2)^{\Delta} + (\Delta+1)(\Delta-1)^{\Delta-1}}\]
such that for any $\alpha < \alpha_{c}(\Delta)$, there is an FPRAS for $|\mathcal{I}_k(G)|$ for all $k \leq \alpha n$ and conversely, no FPRAS exists for $k \geq \alpha n$, $\alpha > \alpha_{c}(\Delta)$, unless NP = RP; by standard reductions, the same holds for approximately sampling from $\mu_k(G)$.  

Davies and Perkins conjectured (\cite[Conjecture~5]{DP21}) that in the above setting, the down-up walk mixes in polynomial time for all $k \leq \alpha n$ with $\alpha < \alpha_{c}(\Delta)$. Our main result resolves this conjecture in the affirmative in essentially the strongest-possible form by providing a bound on the mixing time which has optimal dependence on $k$ and $n$.

\begin{theorem}
    \label{thm:main}
    Let $\Delta \geq 3$ and $\delta \in (0,1)$. For a graph $G = (V,E)$ on $n$ vertices of maximum degree at most $\Delta$ and an integer $1 \leq k \leq (1-\delta)\alpha_c(\Delta)n$, the down-up walk on independent sets of size $k$
    has $\varepsilon$-mixing time $O_{\delta,\Delta}(k\log(n/\varepsilon))$.  \footnote{Recall that the $\varepsilon$-mixing time of a Markov chain with transition matrix $P$ and stationary distribution $\mu$ on state space $\Omega$ is defined to be $\tau_{\text{mix}}(\varepsilon) = \max_{\nu}\min\{t \geq 0: \text{TV}(\nu P^{t}, \mu)\leq \varepsilon\}$, where $\text{TV}$ denotes the total variation distance between probability distributions and the max ranges over all probability distributions $\nu$ on $\Omega$.}
\end{theorem}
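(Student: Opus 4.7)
The plan is to identify $\mu_k(G)$ with the hard-core measure $\mu_\lambda$ on $G$ conditioned on $|I|=k$, where the fugacity $\lambda=\lambda(G,k)$ is tuned so that $\E_{\mu_\lambda}|I|=k$. Because $k\le(1-\delta)\alpha_c(\Delta)n$, the corresponding $\lambda$ lies strictly below the tree uniqueness threshold $\lambda_c(\Delta)=(\Delta-1)^{\Delta-1}/(\Delta-2)^{\Delta}$, placing us inside the multivariate zero-free region for the hard-core partition function of $\Delta$-bounded-degree graphs established by Peters--Regts (and refined by Bencs--Csikv\'ari et al.). By now-standard polynomial interpolation arguments this already yields $\ell_\infty$-spectral independence for $\mu_\lambda$ on the full Boolean cube with an $O_{\Delta,\delta}(1)$ constant.

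\emph{Step 1: lifting $\ell_\infty$-independence from $\mu_\lambda$ to $\mu_k$.} The first technical ingredient is to transfer these bounds to the conditioned measure $\mu_k$. I would write $\Cov_{\mu_k}(x_u,x_v)$ in terms of ratios of local probabilities $\P_{\mu_\lambda}[|I|=k\mid x_u=\cdot,\,x_v=\cdot]$. Proving a two-term Edgeworth expansion for these local probabilities, with remainders controlled uniformly via the multivariate zero-free region---which in particular supplies sufficient decay of the characteristic function of $|I|$ under $\mu_\lambda$ away from the origin---shows that conditioning on $|I|=k$ perturbs pairwise covariances only by $O(1/n)$. Combined with the unconditioned bound, this yields $\ell_\infty$-spectral independence for $\mu_k$ with the same order of constant.

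\emph{Step 2: log-Sobolev for $\mu_k$ via a generalized Lee--Yau induction.} I would then upgrade $\ell_\infty$-independence to an optimal-constant log-Sobolev inequality for $\mu_k$ (in the linear regime $k=\Omega_\Delta(n)$) by induction on $n$ in the Lee--Yau style. The classical argument exploits the full $S_n$ symmetry of the uniform slice $\binom{[n]}{k}$; here $\mu_k$ has essentially no symmetry beyond $\mathrm{Aut}(G)$, and the role of symmetrization must be taken over by the $\ell_\infty$-bounds of Step 1 together with a careful choice of which vertex to peel off at each inductive step. The output is a log-Sobolev constant $\Omega_{\Delta,\delta}(1/n)$ for the natural projected chain on $\cI_k(G)$; for $k=o(n)$ the target $O(k\log(n/k))$ mixing can be bootstrapped from this linear regime in the standard way.

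\emph{Step 3: comparison to the concrete down-up walk, and the main obstacle.} The walk in \cref{thm:main} is the specific down-up chain defined in the introduction, which need not coincide with the projected chain treated in Step 2. The third ingredient---a lossless decomposition-type Dirichlet-form comparison valid whenever the two chains admit a contractive coupling---promotes the log-Sobolev estimate from the projected chain to the down-up walk itself without any polynomial loss. Feeding the resulting $\Omega_{\Delta,\delta}(1/n)$ log-Sobolev constant into the standard estimate $\tau_{\mathrm{mix}}(\eps)=O\bigl(\rho^{-1}\log(\log(1/\mu_{\min})/\eps)\bigr)$ with $\log(1/\mu_{\min})=O(k\log n)$ yields the desired $O_{\delta,\Delta}(k\log(n/\eps))$ bound. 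The main obstacle I expect is Step 1: pushing an Edgeworth expansion through the multivariate zero-free region, uniformly over all $\Delta$-bounded-degree graphs and all conditioning pairs $(u,v)$, is substantially more delicate than either a one-variable local CLT or a bare spectral-independence argument on the Boolean cube, and it is what forges the essential link between the unconstrained hard-core model and the rigid slice measure $\mu_k$.
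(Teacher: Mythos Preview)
Your Step~1 is essentially the paper's argument: write $\mu_k$ as $\mu_\lambda$ conditioned on $|I|=k$, use an Edgeworth expansion (with error controlled by the multivariate zero-free region) to show $\P_\lambda(|I|=k\mid\bullet)=\P_\lambda(|I|=k)+O_{\Delta,\delta}(n^{-3/2})$, and thereby transfer $\ell_\infty$-independence from $\mu_\lambda$ to $\mu_k$. This part is right in outline and in spirit.

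Steps~2 and~3, however, misplace two of the key ideas. First, the Lee--Yau induction in the paper is \emph{not} vertex-by-vertex on a general bounded-degree graph; a direct vertex peel would fail because after conditioning $k-O(1)$ vertices into the independent set the remaining marginals are $O(1/n)$, destroying any hope of a uniform LSI base case. The paper instead uses two prior reductions: (i) spectral independence plus marginal boundedness plus the annealing/localization machinery to reduce to $k\le n/(C\Delta^8)$; and (ii) a second annealing argument on the \emph{complement} measure to reduce to ``$\Delta$-good'' graphs, i.e.\ graphs with $\Omega_\Delta(n)$ connected components each of size $O_\Delta(\log n)$. Only then is a Lee--Yau--type induction run, and it peels off one \emph{connected component} at a time, not one vertex. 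Your sketch omits both reductions and proposes the wrong inductive unit.

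Second, the ``lossless decomposition in the presence of a contractive coupling'' is not a final comparison between two global chains on $\mathcal{I}_k(G)$. The down-up walk and the HDX down-up walk are already within constant factors by an elementary argument. Rather, the decomposition result (\cref{prop:opt-decomp} in the paper) is used \emph{inside} the Lee--Yau recursion: when you condition on a component $G$ and look at the induced chain on $G$, you need to bound its Dirichlet form by the corresponding terms of the full Dirichlet form without losing a spectral-gap factor. This is where Stein's method for Markov chains and the contractive coupling enter. Your Step~3 locates this tool at the wrong layer of the argument and therefore leaves the actual hard step of the induction---comparing $\mathcal{E}_G(\sqrt{f_G},\sqrt{f_G})$ with the full Dirichlet form---unaddressed.
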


Given the work of Bubley and Dyer \cite{bubley1997path}, it suffices to prove \cref{thm:main} for $k = \Omega_{\Delta}(n)$. In this case, we establish the stronger result that the down-up walk satisfies a log-Sobolev inequality with constant $\Omega_{\Delta,\delta}(1/n)$. Apart from immediately implying the above result on mixing times, this has various additional consequences for the stationary measure $\mu_k$ such as sub-Gaussian concentration of Lipschitz functions, transport-entropy inequalities, and hypercontractivity with respect to the corresponding semi-group (see \cite{bobkov2006modified}).  

\begin{theorem}
    \label{thm:main-lsi}
     Let $\Delta \geq 3$ and $\delta \in (0,1)$. For a graph $G = (V,E)$ on $n$ vertices of maximum degree at most $\Delta$ and an integer $c n \leq k\leq (1-\delta)\alpha_c(\Delta)n$, the down-up walk on independent sets of size $k$ satisfies a log-Sobolev inequality with constant $\Omega_{c,\delta,\Delta}(1/n)$. 
\end{theorem}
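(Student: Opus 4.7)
The plan is to transfer a log-Sobolev inequality from the hard-core model on $\{0,1\}^V$ down to its conditioning on the slice $\mathcal{I}_k(G)$, and then across to the down-up walk. First, I would tune the fugacity $\lambda=\lambda(k)$ so that the hard-core measure $\nu_\lambda$ on $G$ has expected independent-set size exactly $k$. The hypothesis $k\leq (1-\delta)\alpha_c(\Delta)n$ places this $\lambda$ strictly below the Peters--Regts style computational/zero-freeness threshold, so that the multivariate independence polynomial $Z_G(\boldsymbol{\lambda})$ admits a zero-free neighborhood of the diagonal point $\lambda\cdot\mathbf{1}$, stable under arbitrary vertex pinnings. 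Contour-integral estimates in this region then yield $\ell_\infty$-independence (pairwise influences bounded in the $\ell_\infty \to \ell_\infty$ operator norm) for $\nu_\lambda$, uniformly under all pinnings.

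The technical heart of the argument is to \emph{lift} $\ell_\infty$-independence from $\nu_\lambda$ on the cube to the slice measure $\mu_k := \nu_\lambda(\cdot \mid |I|=k)$. Conditional covariances under $\mu_k$ differ from their $\nu_\lambda$-analogues by reweightings involving ratios of point probabilities $\nu_\lambda(|I|=k)$ under various pinnings, so I would run a quantitative Edgeworth-type local CLT for the law of $|I|$. The characteristic function of $|I|$ equals $Z_G(\lambda e^{i\theta})/Z_G(\lambda)$, so the multivariate zero-free region supplies both quantitative decay bounds on $|\widehat{\nu_\lambda}(\theta)|$ away from $\theta=0$ and precise Taylor-type control near $\theta=0$. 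Feeding these into the local CLT---and verifying that the leading Gaussian correction cancels at the level of pairwise covariances---transfers $\ell_\infty$-independence (uniformly under pinnings) from $\nu_\lambda$ to $\mu_k$.

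With $\ell_\infty$-independence for $\mu_k$ and all its pinnings in hand, I would prove an LSI for the natural swap-type Glauber chain on $\mu_k$ by an adaptation of the Lee--Yau induction. The classical Lee--Yau argument exploits full $S_n$-symmetry of Bernoulli--Laplace to telescope entropy across a single coordinate; $\mu_k$ has essentially no such symmetry, but pinning-uniform $\ell_\infty$-independence provides the dimension-free one-step entropy contraction after conditioning on a single vertex that replaces symmetry and closes the induction on $n$, yielding LSI constant $\Omega_{c,\delta,\Delta}(1/n)$. Finally, to pass from this chain on $\mu_k$ to the down-up walk, I would invoke the sharp Dirichlet-form comparison: in the presence of a contractive coupling between one-step distributions (which path-coupling-type arguments furnish in the relevant regime, the lower bound $k\geq cn$ ensuring a nontrivial rejection probability), the Dirichlet form of the down-up walk is losslessly comparable to that of the projected chain, transferring the LSI constant without loss.

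The main obstacle will be the lift from the cube to the slice. Pinning-uniform $\ell_\infty$-independence is strong enough that na\"ive conditional-covariance estimates under $\mu_k$ do not suffice; one really needs a local CLT for $|I|$ with errors uniform across all boundary pinnings, sharp enough that first-order perturbations of the ratios $\nu_\lambda(|I|=k)/\nu_\lambda(|I|=k')$ do not degrade covariances by more than a constant factor. Making the multivariate zero-free region do this quantitative work---rather than merely control global partition-function ratios---is where almost all of the difficulty of \cref{thm:main-lsi} resides; once this input is available, the modified Lee--Yau induction and the Dirichlet-form decomposition are comparatively mechanical.
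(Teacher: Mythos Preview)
Your plan for lifting $\ell_\infty$-independence from the hard-core model to the slice is correct and matches the paper: the multivariate zero-free region drives an Edgeworth expansion for $\nu_\lambda(|I|=k)$ and, crucially, shows that the cumulants of $|I|$ are stable under single-vertex pinnings, so the relevant ratios are $1+O(n^{-1})$. But the passage from $\ell_\infty$-independence to the LSI has genuine gaps as you describe it. The claim that pinning-uniform $\ell_\infty$-independence ``provides the dimension-free one-step entropy contraction after conditioning on a single vertex'' is where the argument breaks: spectral/$\ell_\infty$-independence yields variance contraction, but upgrading to entropy contraction requires uniform marginal lower bounds, and these fail for $\mu_k$---after pinning $k-1$ vertices into the set, every remaining marginal is $O(1/n)$. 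The paper explicitly flags this obstruction; it uses marginal bounds (which do hold after pinning at most $k-\Omega_\Delta(k)$ vertices) only to reduce to the regime $k\le n/(C\Delta^8)$, and does not attempt to close a single-vertex Lee--Yau induction via spectral independence alone.

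Relatedly, your placement of the contractive-coupling/decomposition step is off. The Bubley--Dyer contraction is only available for $k/n$ of order $1/\Delta$, not throughout $k\le(1-\delta)\alpha_c(\Delta)n$, so it cannot serve as a final ``lossless'' comparison; in the paper it is used \emph{inside} the Lee--Yau recursion (after the reduction to small $k$), precisely to bound the Dirichlet form of the chain projected onto a single connected component by terms of the global Dirichlet form. That step is not mechanical---it goes through a Stein-equation argument for Markov chains. You are also missing an entire structural reduction: the paper's Lee--Yau induction removes one \emph{connected component} per step, not one vertex, and only closes once the graph has been reduced (by a separate averaging argument over random vertex subsets) to one whose components all have size $O_\Delta(\log n)$; tracking the occupancy ratio along this component-removal process then requires a nontrivial martingale/Bernstein analysis. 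Calling the post-lift portion ``comparatively mechanical'' inverts the actual distribution of difficulty.
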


While log-Sobolev and modified log-Sobolev inequalities for canonical walks with respect to the uniform distribution on the (multi)slice have been established in numerous works (e.g., \cite{diaconis1981generating, diaconis1996logarithmic, lee1998logarithmic, filmus2022log, salez2021sharp} and the references therein), to the best of our knowledge, \cref{thm:main-lsi} is the first instance of an asymptotically optimal log-Sobolev inequality for a highly non-symmetric natural distribution supported on the Boolean slice.

\subsection{Overview of techniques} We conclude with a brief overview of our techniques.\\ 
\paragraph{\bf Spectral independence for $\mu_k$} Let $G$ be a graph on $n$ vertices with maximum degree $\Delta \geq 3$. Since $\alpha_{c}(\Delta) < \frac{1}{\Delta+1}$, it follows that the condition $k \leq (1-\delta)\alpha_{c}(\Delta)n$ is closed under pinning vertices of $G$ to belong to the independent set. Hence, one might hope to prove rapid mixing of the down-up walk by establishing $O_{\delta,\Delta}(1)$-spectral independence for $\mu_k$ (viewed as a distribution on $\binom{[n]}{k}$). For proving spectral independence of a distribution for the purposes of establishing rapid mixing (as opposed to optimal mixing) of the down-up walk, there are primarily three techniques: Oppenheim's trickle down theorem \cite{oppenheim2018local}, zero-free regions of multivariate generating polynomial \cite{chen2022spectral}, and tree recursions using a suitable potential function (e.g.~\cite{chen2023rapid}). The first of these was used by Alev and Lau \cite{alev2020improved} and works until $k \leq n/(2\Delta)$. While the latter two techniques have been successfully used in the case of the hard-core model in the uniqueness region, it is unclear how to adapt them to our cardinality-constrained model. A salient challenge here is that any potential approach must be able to ``witness'' the threshold $\alpha_{c}(\Delta)n$, which is most naturally interpreted in terms of the uniqueness threshold $\lambda_{c}(\Delta)$ of the hard-core model, thereby ruling out several purely ``slice-based'' approaches.

In order to be able to witness $\alpha_{c}(\Delta)n$, we view $\mu_k$ as being obtained by rejection sampling from a hard-core model at a suitable activity $\lambda$ in the uniqueness region; we note that this is precisely the efficient sampling algorithm for $\mu_k$, $k\leq (1-\delta)\alpha_{c}(\Delta)n$ provided by Davies and Perkins \cite{DP21}. Adopting this viewpoint, spectral independence of $\mu_k$ (in the strengthened form, known as $\ell_\infty$-independence) follows from the known $\ell_\infty$-independence of the hard-core model in the uniqueness region, provided we have a fine understanding of $\P_{\lambda,G}(|I|=k)$ -- the probability that an independent set drawn from the hard-core model at activity $\lambda$ has size $k$. Specifically, we need to show that $\P_{\lambda,G}(|I|=k) = \Omega_{\delta,\Delta}(n^{-1/2})$ (this follows immediately from a local central limit theorem (LCLT) for the hard-core model established in \cite{jain2022approximate}) and critically, that if $G'$ is a graph obtained from $G$ by adding or removing $O_{\Delta}(1)$ vertices, then $|\P_{\lambda,G'}(|I|=k) - \P_{\lambda,G}(|I| = k)| = O_{\delta,\Delta}(n^{-3/2})$; in contrast to the lower bound on the probability, which is a consequence of the Gaussian behavior of $|I|$ (as encapsulated by the LCLT), the second point requires us to show that the deviations from Gaussianity are optimally stable under perturbations of the underlying graph $G$. We accomplish this by establishing an Edgeworth expansion for the probability to within arbitrarily small polynomial error (this requires estimates on high Fourier coefficients of $|I|$ from \cite{jain2022approximate}) and then showing that the coefficients of the higher order terms in the expansion---which are polynomials in the  cumulants of $|I|$---are stable under perturbations of $G$, by combining the presence of the zero-free region for the (complex) multivariate independence polynomial from \cite{PR19} with an application of Cauchy's integral formula to bound the magnitude of various derivatives. 
The use of the (univariate) zero-free region is a key ingredient in the local central limit theorem in \cite{jain2022approximate},
and follows a line of literature proving central limit theorems for spin systems from zero-free regions \cite{iagolnitzer1979lee,lebowitz2016central}  as well as more general central limit theorems from zero-free regions \cite{ghosh2017multivariate,lebowitz2016central,michelen2019central,michelen2019clt}.  Our contribution shows that for a spin system in the presence of a \emph{multivariate} zero-free region, one can in fact write an asymptotic expansion for probabilities such as $\P_{\lambda,G}(|I| = k)$ for $k$ near the mean.  
\\
\paragraph{\bf Log-Sobolev inequality} Chen, Liu, and Vigoda \cite{CLV20} showed how to leverage spectral independence to prove log-Sobolev inequalities for Markov random fields on bounded degree graphs, provided that all marginals of the distribution are uniformly lower bounded under arbitrary pinnings. While our distribution may be viewed as a spin system on a bounded degree graph, the other properties fail to hold: $\mu_k$ is neither a Markov random field nor has the bounded marginal property; indeed, conditioning on $k-1$ vertices to be in the independent set makes all the remaining marginals $O(1/n)$. Overcoming these challenges requires a number of innovations.

First, noting that the distributions obtained by pinning at most $k - \Omega_{\Delta}(k)$ vertices are marginally bounded, we can use the local-to-global/annealing machinery from the spectral independence/localization-schemes literature to reduce our task to proving a log-Sobolev inequality for $\mu_k$, $k \leq n/\Delta^8$ (say). Compared to the hard-core model at activity $\lambda \sim 1/\Delta^8$, this turns out to be a much harder task; indeed, even proving an asymptotically optimal log-Sobolev inequality for the uniform distribution on the Boolean slice $\binom{[n]}{k}$ is non-trivial and was only determined, after much work, by Lee and Yau \cite{lee1998logarithmic} (the uniform distribution on $\binom{[n]}{k}$ is the same as the uniform distribution over independent sets of size $k$ in the empty graph, hence a special case of our setup). In our setting, since there is no symmetry, the situation is much more complicated. The main part of our argument here is \cref{thm:lee-yau}, which proves an optimal log-Sobolev inequality assuming that the graph has linearly many connected components, each of which has size at most logarithmic in the number of vertices. 
Finally, to reduce to the setting of \cref{thm:lee-yau}, we use an average case ``annealing'' argument (cf.~\cite{chen2022localization}) building on a technique in \cite{CLV20}. 

For controlling the log-Sobolev constant in a graph with linearly many, at most logarithmically-sized components (\cref{thm:lee-yau}), we follow the overall inductive approach of Lee and Yau \cite{lee1998logarithmic} for bounding the log-Sobolev inequality on the slice.  This proceeds through the \emph{a priori} control of the log-Sobolev constant of the ``projected'' distribution on each small connected component combined with an inductive control of the log-Sobolev constant of the distribution restricted to the complement of the component, averaged out over all components. In our case, both the control of the projected distribution and the inductive control of the distribution restricted to the complement are highly non-trivial. 

In order to control the contribution from the projected distribution, we observe that using the LCLT from \cite{jain2022approximate}, the log-Sobolev constant of the projected distribution on each component can be bounded by direct comparison with the hardcore model on the component at the corresponding activity. Given this, it remains to upper bound the Dirichlet form of this projected distribution by appropriate terms in the Dirichlet form of the original distribution. In special cases, such as when the graph is empty, such a bound can be obtained in a straightforward manner using the convexity of the functionals appearing in the Dirichlet forms. However, convexity arguments are not available in general. In fact, proving a bound of this form is one of the key steps in Markov chain decomposition results (e.g.~\cite{jerrum2004elementary}); existing results of this form lose a factor depending on the spectral gap of the distribution, which in our case is far too large for the induction to work. Instead, we provide a novel, sharp decomposition result, which allows us to give an optimal comparison (up to constant factors) between the Dirichlet form of the projected chain and that of the original distribution. Our analysis makes use of Stein's method for Markov chains \cite{bresler2019stein, reinert2019approximating} and leverages the existence of a contractive coupling in the regime we operate in. 

The other main part of the argument is the inductive control of the log-Sobolev constant on the entire graph, which is again much more involved than the case of the uniform distribution on the slice. Instead of directly inducting on the bound, we simulate the inductive bound  via a stochastic process involving one effective parameter which is the analog of the occupancy ratio of the independent set along the process. This effective parameter is closely captured by a martingale with bounded differences and we control it by the martingale Bernstein inequality.

\subsection{Organization}
In \cref{sec:prelims}, we collect some preliminaries. \cref{sec:spectral-independence} is devoted to our first main ingredient, namely the proof that $\mu_k$ is spectrally independent for $k \leq (1-\delta)\alpha_c(\Delta)n$ (\cref{th:SI-slice}); combining this with standard techniques in the theory of spectral independence easily shows that the down-up walk on $\mathcal{I}_k(G)$ has optimal spectral gap $\Omega_{\delta,\Delta}(1/k)$ (\cref{thm:spectral-gap}). Obtaining a log-Sobolev inequality from our spectral independence result is considerably more involved; this is the content of \cref{sec:LSI,sec:lee-yau}.

\section{Preliminaries}
\label{sec:prelims}

In this section, we record some preliminaries for later use.\\ 

\paragraph{\bf Hard-core model} The hard-core model on a graph $G = (V,E)$ at activity $\lambda \in \R_{> 0}$ is the probability distribution on $\mathcal{I}(G)$, the independent sets of $G$, defined by
\[\mu_{G,\lambda}(I) = \frac{\lambda^{|I|}}{Z_G(\lambda)},\]
where $Z_G(\lambda) = \sum_{I \in \mathcal{I}(G)}\lambda^{|I|} $ is the independence polynomial of $G$. We will view the independence polynomial as a univariate polynomial of a complex-valued random variable $\lambda$.

For $\Delta \ge 3$, let 
 $$\lambda_c(\Delta) = \frac{(\Delta-1)^{\Delta-1}}{(\Delta-2)^{\Delta}};$$
 this is the uniqueness threshold for the hard-core model on the infinite $\Delta$-regular tree.  For $ 0\leq \lambda < \lambda_c(\Delta)$,  Weitz  gave an FPTAS for $Z_G(\lambda)$ on the class of graphs of maximum degree $\Delta$~\cite{weitz2006counting}.  Sly~\cite{sly2010computational}, Sly and Sun~\cite{sly2014counting}, and Galanis, {\v{S}}tefankovi{\v{c}}, and Vigoda~\cite{galanis2016inapproximability} complemented this by showing that for $\lambda > \lambda_c(\Delta)$, no FPRAS for $Z_G(\lambda)$ exists unless NP=RP.

As mentioned above, Davies and Perkins \cite{DP21} established a corresponding threshold for counting independent sets of a fixed size in bounded degree graphs, namely
\[ \alpha_c(\Delta) = \frac{\lambda_c(\Delta)}{1+(\Delta+1)\lambda_c(\Delta)} = \frac{(\Delta-1)^{\Delta-1}}{(\Delta-2)^{\Delta}+(\Delta+1)(\Delta-1)^{\Delta-1}};\]
this is the occupancy fraction (i.e.~the expected density of an independent set) for the hard-core model on the clique on $\Delta + 1$ vertices at the critical fugacity $\lambda_c(\Delta)$. Since a clique on $\Delta+1$ vertices minimizes the occupancy fraction in the class of graphs of maximum degree $\Delta$, this immediately implies the following. 

\begin{lemma}[see, e.g., \cite{DP21}] 
\label{lem:finding-activity}
Let $G$ be a graph on $n$ vertices with maximum degree $\Delta$. For any $\gamma > 0, \delta > 0$, there exist $\gamma' > 0, \delta' > 0$ depending only on $\gamma, \delta, \Delta$ such that for any $\gamma n \leq k \leq (1-\delta)\alpha_c(\Delta)n$, there exists $\gamma' \leq \lambda \leq (1-\delta')\lambda_c(\Delta)$ such that
\[\E_{I\sim \mu_{G,\lambda}}[|I|] = k\]  
\end{lemma}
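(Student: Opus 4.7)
The plan is to apply the intermediate value theorem to $f_G(\lambda) := \E_{I \sim \mu_{G,\lambda}}[|I|]$. A standard computation (differentiating $\log Z_G(\lambda)$) gives $f_G'(\lambda) = \Var_{\mu_{G,\lambda}}(|I|)/\lambda \geq 0$, so $f_G$ is continuous and non-decreasing on $(0,\infty)$. Since $\gamma n \leq k \leq (1-\delta)\alpha_c(\Delta) n$, it suffices to exhibit $\gamma' > 0$ and $\delta' > 0$, depending only on $\gamma,\delta,\Delta$, with $f_G(\gamma') \leq \gamma n$ and $f_G((1-\delta')\lambda_c(\Delta)) \geq (1-\delta)\alpha_c(\Delta) n$, and then invoke the intermediate value theorem on $[\gamma',(1-\delta')\lambda_c(\Delta)]$.

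For the lower endpoint, I would use the trivial per-vertex bound $\P_{\mu_{G,\lambda}}[v \in I] \leq \lambda/(1+\lambda) \leq \lambda$ (obtained by comparing independent sets containing $v$ with those not containing $v$), which summed over $v \in V$ gives $f_G(\lambda) \leq \lambda n$. The choice $\gamma' := \gamma/2$ then yields $f_G(\gamma') \leq \gamma n / 2 < \gamma n \leq k$, independently of $G$ and $n$. For the upper endpoint, I would invoke the extremal result recalled just before the lemma: for every $\lambda > 0$ and every graph $G$ of maximum degree at most $\Delta$,
\[
f_G(\lambda)/n \;\geq\; \alpha_{K_{\Delta+1}}(\lambda),
\]
where $\alpha_{K_{\Delta+1}}(\lambda)$ is the occupancy fraction of the clique on $\Delta+1$ vertices at activity $\lambda$. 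By definition, $\alpha_{K_{\Delta+1}}(\lambda_c(\Delta)) = \alpha_c(\Delta)$, and $\alpha_{K_{\Delta+1}}$ is a continuous, strictly increasing function of $\lambda$, so there exists $\delta' = \delta'(\delta,\Delta) > 0$ with $\alpha_{K_{\Delta+1}}((1-\delta')\lambda_c(\Delta)) \geq (1-\delta)\alpha_c(\Delta)$. The displayed inequality then yields $f_G((1-\delta')\lambda_c(\Delta)) \geq (1-\delta)\alpha_c(\Delta) n \geq k$, and the intermediate value theorem supplies the required activity $\lambda$.

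There is essentially no obstacle here; the only point warranting care is that both $\gamma'$ and $\delta'$ be chosen independently of $G$ and $n$. This is automatic, because the lower-endpoint bound uses only the vertex-wise inequality $\P[v \in I] \leq \lambda$, and the upper-endpoint verification has been reduced, via the clique-extremal fact, to a computation on the single fixed graph $K_{\Delta+1}$.
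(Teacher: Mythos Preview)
Your proof is correct and follows exactly the approach the paper indicates: the paper does not give a standalone proof but records the lemma with a citation to \cite{DP21}, noting immediately beforehand that it follows from the fact that $K_{\Delta+1}$ minimizes the occupancy fraction among graphs of maximum degree $\Delta$. Your argument fills in precisely the standard details of that deduction (monotonicity of $f_G$, the trivial upper bound $f_G(\lambda)\le\lambda n$ for the lower endpoint, the clique-extremal bound for the upper endpoint, and the intermediate value theorem).
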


\paragraph{\bf Functional inequalities} Let $P$ be the transition matrix of an ergodic, reversible Markov chain on a finite set $\Omega$, with (unique) stationary distribution $\mu$. The Dirichlet form of $P$ is defined, for $f,g \colon \Omega \to \R$, by
\[\mathcal{E}_P(f,g) := \frac{1}{2}\sum_{x,y \in \Omega}\mu(x)P(x,y)(f(x) - f(y))(g(x) - g(y)).\]

\begin{definition}
The spectral gap or Poincar\'e constant of $P$ is defined to be $\gamma$, where $\gamma$ is the largest value such that for every $f \colon \Omega \to \R$,
\[\gamma \Var_{\mu}{f} \leq \mathcal{E}_P(f,f).\]
The log-Sobolev constant of $P$ is defined to be the the largest value $\rho_{LS}$ such that for every $f \colon \Omega \to \R_{\geq 0}$,
\[\rho_{LS} \Ent_{\mu}{f} \leq \mathcal{E}_{P}(\sqrt{f},\sqrt{f}),\]
where $\Ent_{\mu}{f} = \E_{\mu}{f\log f} - \E_\mu{f} \log (\E_\mu{f})$.

\end{definition}

The following relationship between the $\varepsilon$-(total variation) mixing time of $P$, $\tau_{\text{mix}}(\varepsilon)$, and its Poincar\'e and log-Sobolev constants is standard (see, e.g.,~\cite{bobkov2006modified}):
\begin{align}
\label{eqn:mixing-time}
    \tau_{\text{mix}}(\varepsilon) &\leq \gamma^{-1}\log\left(\frac{1}{\varepsilon}\cdot \frac{1}{\min_{x\in \Omega}\mu(x)}\right),\\
    \tau_{\text{mix}}(\varepsilon) &\leq \rho_{LS}^{-1}\left(\log\log\left(\frac{1}{\min_{x\in \Omega}\mu(x)}\right) + \log\left(\frac{1}{2\varepsilon^{2}}\right)\right). \nonumber 
\end{align}

\paragraph{\bf Spectral independence and $\ell_\infty$-independence} Our proofs will make use of the notions of spectral independence \cite{anari2020spectral} and a strengthened version, sometimes referred to as $\ell_\infty$-independence. Here, and later, we use the following notation. For a distribution $\mu$ on subsets of $[n]$, $\P_{\mu}[i] = \P_{S \sim \mu}[i \in S]$ and $\P_{\mu}[\ibar] = \P_{S \sim \mu}[i \notin S]$. 
\begin{definition}[Influence matrix] \label{def:corr}
	Let $\mu$ be a probability distribution over $2^{[n]}$. 
Its (signed) pairwise influence matrix $M_{\mu} \in \R^{n\times n}$  is defined by
\[M_{\mu} (i,j) = \begin{cases} 0 &\text{ if } j=i, \\ \P_{\mu}[j \,|\, i] - \P_{\mu}[j \,|\, \ibar] &\text{ otherwise.}\end{cases}.\]
\end{definition}

\begin{definition}[Spectral Independence]
For $\eta \geq 0$, a distribution 
$\mu: 2^{[n]} \to \R_{\geq 0} $ is said to be $\eta$-spectrally independent (at the link $\emptyset$) if
\[\lambda_{\max}(M_{\mu}) \leq \eta.\] 
\end{definition}

Since the spectral radius of a matrix is bounded above by the $\ell_\infty \to \ell_\infty$ operator norm, it follows that spectral independence is an immediate consequence of the following. 

\begin{definition}[$\ell_\infty$-independence]
For $\eta \geq 0$, a distribution 
$\mu: 2^{[n]} \to \R_{\geq 0} $ is said to be $\eta$-$\ell_\infty$-independent (at the link $\emptyset$) if
\[\max_{i \in [n]}\sum_{j=1}^{n}|M_\mu(i,j)|\leq \eta.\] 
\end{definition}

\paragraph{\bf Comparison with the HDX down-up walk} The high-dimensional expander (HDX) down-up walk on $\mathcal{I}_k(G)$ is a slight variant of the down-up walk defined in the introduction, the difference being that instead of choosing a uniform $v \in V(G)$, we choose a uniform $v$ so that $I' = (I_t \setminus u) \cup v$ is a uniform element of $\mathcal{I}_k(G)$ containing $I_t \setminus u$. In other words, the HDX down-up walk is equivalent to the down-up walk, conditioned on $I' \in \mathcal{I}_k(G)$. Both walks are reversible with respect to the uniform distribution $\mu_k$ on $\mathcal{I}_k(G)$. Moreover, since in our results, $k \leq \alpha_c(\Delta)n \leq \frac{16n}{17(\Delta+1)}$, it follows that under the down-up walk in the introduction, the probability that $I' \in \mathcal{I}_k$ is at least $1/17$. 

Since the only successful transitions of the down-up walk are when $I' \in \mathcal{I}_k(G)$, it follows that mixing times of $P$ and $Q$ are within constant factors of each other, where $P$ is the down-up walk in the introduction and $Q$ is the HDX down-up walk. Similarly, the Dirichlet forms $\mathcal{E}_P(f,f)$ and $\mathcal{E}_Q(f,f)$, and consequently the spectral gap and the log-Sobolev constant, are within a constant factor of each other. Therefore, it suffices to prove \cref{thm:main,thm:main-lsi} for the HDX down-up walk.\\

\section{$\ell_\infty$-independence for $\mu_k$ and the spectral gap of the down-up walk}
\label{sec:spectral-independence}
Throughout this section, $\P_k$, $\E_k$ denote probabilities and expectations with respect to the measure $\mu_k$ and $\P_\lambda, \E_\lambda$ denotes probabilities and expectations with respect to the hard-core model $\mu_\lambda$ at activity $\lambda$; the underlying graph will be clear from context. For brevity, for $u \in V(G)$, we let $\P_k(u)$ (respectively, $\P_k(\ubar)$) denote the probability that an independent set sampled from $\mu_k$ contains $u$ (respectively, does not contain $u$) and similarly for $\P_\lambda(u), \P_\lambda(\ubar)$. The following is the main result of this section. 

\begin{theorem}[$\ell_\infty$-independence for $\mu_k$] \label{th:SI-slice}
    Let $G$ be a graph on $n$ vertices of maximum degree $\Delta$ and fix $\delta,\gamma > 0$.  Then for all $\gamma n \leq k \leq (1 - \delta)\alpha_c(\Delta)n$ and $u\in V(G)$ we have 
    $$
    \sum_{v \in V(G)} \left| \P_k(v\,|\,\ubar) - \P_k(v\,|\,u) \right| = O_{\Delta,\delta,\gamma}(1)\,.
    $$
\end{theorem}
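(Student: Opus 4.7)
The plan is to invoke \cref{lem:finding-activity} to pick an activity $\lambda \in [\gamma',(1-\delta')\lambda_c(\Delta)]$ with $\E_\lambda[|I|] = k$, so that $\mu_k$ is precisely the hard-core distribution on $G$ conditioned on $\{|I| = k\}$. Under this identification, $\P_k(\cdot \,|\, u)$ becomes the fixed-size-$(k{-}1)$ slice of the hard-core model on $G - N[u]$, while $\P_k(\cdot \,|\, \ubar)$ becomes the fixed-size-$k$ slice on $G - u$. I then insert the corresponding hard-core conditionals as intermediaries via the triangle inequality
\begin{equation*}
\sum_v |\P_k(v\,|\,u) - \P_k(v\,|\,\ubar)| \leq \sum_v |\P_k(v\,|\,u) - \P_\lambda(v\,|\,u)| + \sum_v |\P_\lambda(v\,|\,u) - \P_\lambda(v\,|\,\ubar)| + \sum_v |\P_\lambda(v\,|\,\ubar) - \P_k(v\,|\,\ubar)|.
\end{equation*}
The middle sum is $O_{\Delta,\delta}(1)$ by the known $\ell_\infty$-independence of the hard-core model at activities $\lambda < \lambda_c(\Delta)$ in the uniqueness regime, so the problem reduces to bounding the two outer terms, which are structurally identical.

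Each outer sum has the form $\sum_{v \in V(H)} |\P_{k',H}(v) - \P_{H,\lambda}(v)|$, where $H$ is obtained from $G$ by deleting $O_\Delta(1)$ vertices and $|k'-k| = O(1)$. Splitting according to whether $v$ lies in $I$ gives the identity
\begin{equation*}
\P_{k',H}(v) = \P_{H,\lambda}(v) \cdot \frac{\P_{H - N[v],\lambda}(|I| = k'-1)}{\P_{H,\lambda}(|I| = k')},
\end{equation*}
so each summand factors as $\P_{H,\lambda}(v)$ times the deviation of a ratio of two hard-core point probabilities from $1$. The local central limit theorem for the hard-core model from \cite{jain2022approximate} supplies the denominator lower bound $\Omega_{\Delta,\delta,\gamma}(n^{-1/2})$; granting the perturbation estimate
\begin{equation*}
|\P_{G',\lambda}(|I| = k^\star) - \P_{G,\lambda}(|I| = k)| = O_{\Delta,\delta,\gamma}(n^{-3/2})
\end{equation*}
whenever $G'$ differs from $G$ at $O_\Delta(1)$ vertices and $|k^\star - k| = O(1)$, this ratio equals $1 + O(1/n)$. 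Summing $\P_{H,\lambda}(v) \cdot O(1/n)$ over $v$ gives $O(n^{-1}) \cdot \E_\lambda[|I|] = O_\Delta(1)$ since $k \leq \alpha_c(\Delta) n$; the boundary contributions from $v \in N[u]$ are trivially at most $\Delta + 1 = O_\Delta(1)$.

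The main obstacle is the perturbation estimate itself, since a Berry--Esseen-style comparison is not sharp enough: one must track the shape of the law of $|I|$ one order beyond the Gaussian, where for points within $O(1)$ of the mean the exact size of the discrepancy $\Theta(n^{-3/2})$ is only revealed after cancellation. My plan is to establish a full Edgeworth expansion for the law of $|I|$ with polynomially small error, whose coefficients are polynomials in the cumulants of $|I|$. Stability of these cumulants under vertex perturbations is then obtained by expressing each cumulant as a coefficient in the Taylor expansion of $\log Z_{G,\blambda}$ at $\blambda = \lambda \mathbf{1}$ and invoking the multivariate zero-free region of \cite{PR19} for the independence polynomial: zero-freeness gives an analytic branch of $\log Z$ on a polydisk of fixed radius around $\lambda \mathbf{1}$, uniformly over bounded-degree $G$, and Cauchy's integral formula then converts an $O_\Delta(1)$-vertex perturbation of $G$ into an $O(1/n)$ relative perturbation of each fixed-order normalized cumulant. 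Combining this with the high-frequency Fourier estimates for $|I|$ from \cite{jain2022approximate}, which control the tail of the Edgeworth expansion, yields the desired $O(n^{-3/2})$ bound and closes the argument.
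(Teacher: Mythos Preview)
Your proposal is correct and follows essentially the same route as the paper: choose $\lambda$ via \cref{lem:finding-activity}, reduce to the hard-core $\ell_\infty$-independence of \cref{th:SI-hardcore} together with the perturbation estimate $\P_{\lambda,G'}(|I|=k^\star)=\P_{\lambda,G}(|I|=k)+O(n^{-3/2})$, and prove the latter via an Edgeworth expansion whose cumulant coefficients are shown to be stable using the Peters--Regts multivariate zero-free region and Cauchy's integral formula. The only cosmetic difference is that the paper writes the identity \eqref{eqn:hardcore-to-slice} directly rather than passing through your triangle inequality, but the analytic content is the same.
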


The corresponding result for the hard-core model at activity $\lambda \leq (1-\delta)\lambda_c(\Delta)$ was established a few years ago (\cite{chen2023rapid}, see also \cite{anari2020spectral}).  

\begin{theorem}[$\ell_\infty$-independence of the hard-core model] \label{th:SI-hardcore}  
    Let $G$ be a graph on $n$ vertices of maximum degree $\Delta$ and fix $\delta > 0$.  Then for all  $\lambda \leq (1 -\delta)\lambda_c(\Delta)$ and $u \in V(G)$, we have $$
    \sum_{v \in V(G)} \left| \P_\lambda(v\,|\,\ubar) - \P_\lambda(v\,|\,u) \right| = O_{\Delta,\delta}(1)\,.$$
\end{theorem}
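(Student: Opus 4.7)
\medskip

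\paragraph{\bf Plan for \cref{th:SI-hardcore}.} The approach is to reduce marginal-influence computations on the original graph $G$ to a tree computation, and then to prove a sharp quantitative contraction estimate for the hard-core tree recursion throughout the uniqueness regime. Concretely, I would first invoke Weitz's self-avoiding-walk tree construction: for each pair $u, v \in V(G)$ there is a finite tree $T = T_{\mathrm{SAW}}(G, u)$ rooted at $u$, of maximum degree $\Delta$, and a collection of copies $v^{(1)}, v^{(2)}, \dots$ of $v$ appearing as vertices of $T$, with the property that the influence on $G$ decomposes as
\[
\P_\lambda(v \mid u) - \P_\lambda(v \mid \ubar) \;=\; \sum_{j} \bigl(\P^{T}_\lambda(v^{(j)} \mid u) - \P^{T}_\lambda(v^{(j)} \mid \ubar)\bigr),
\]
with appropriate signs absorbed into the construction. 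Summing absolute values over $v \in V(G)$ therefore reduces to the single statement
\[
\sum_{w \in V(T)} \bigl|\P^{T}_\lambda(w \mid u) - \P^{T}_\lambda(w \mid \ubar)\bigr| \;=\; O_{\Delta,\delta}(1),
\]
for \emph{every} tree $T$ of maximum degree $\Delta$ rooted at $u$.

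For trees the marginals are governed by the standard hard-core recursion: for a subtree with children $w_1,\dots,w_d$ (with $d\le \Delta-1$) and occupancy ratios $R_{w_i} = \P(w_i)/\P(\bar w_i)$ at the children, the ratio at the root satisfies $R = F_\lambda(R_{w_1},\dots,R_{w_d}) := \lambda\prod_i (1+R_{w_i})^{-1}$. Differentiating along a path from $u$ to a distinguished descendant $w$ gives a product formula in which the influence on $w$ arising from conditioning at $u$ equals a product of factors, one for each edge along the path. The central ingredient, introduced for spectral independence by Chen--Liu--Vigoda and refined in \cite{chen2023rapid}, is the choice of a one-variable potential $\Phi$ so that the change of variables $S = \Phi(R)$ renders the recursion contractive: in the subcritical regime $\lambda\le (1-\delta)\lambda_c(\Delta)$, one has
\[
\sum_{i=1}^d \Bigl|\tfrac{\Phi'(R)}{\Phi'(R_{w_i})}\cdot \partial_i F_\lambda\Bigr| \;\le\; 1 - c(\delta,\Delta)
\]
at the unique positive fixed point on the infinite $\Delta$-regular tree, with a quantitative gap $c(\delta,\Delta)>0$. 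The standard choice here is (a multiple of) $\Phi(R) = \mathrm{arcsinh}(\sqrt{R})$ (or equivalently working in $\log R$); the gap follows from the fact that at $\lambda = \lambda_c(\Delta)$ the contraction constant equals exactly $1$, combined with a compactness/continuity argument in $\lambda$.

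Step three is to propagate this per-edge contraction into an absolute $\ell_1$ bound over the entire tree. Writing the influence $\P^T_\lambda(w\mid u) - \P^T_\lambda(w \mid \ubar)$ as a telescoping sum along the path from $u$ to $w$, and using the $\Phi$-contraction at each internal vertex to bound the sum over the at-most-$(\Delta-1)$ children, one obtains
\[
\sum_{w \in V(T), \mathrm{dist}(u,w)=\ell} \bigl|\P^T_\lambda(w\mid u) - \P^T_\lambda(w \mid \ubar)\bigr| \;\le\; C(\Delta)\,(1 - c(\delta,\Delta))^{\ell},
\]
where crucially the branching factor $\Delta-1$ has already been absorbed into the contraction bound before the sum over depth. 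Geometric summation in $\ell$ then gives the desired $O_{\Delta,\delta}(1)$ total influence. Pulling back through Weitz's construction (which only increases the bound by a factor at most $1$, since signs have been accounted for) yields the stated bound on $G$.

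The main obstacle, and the step where all the real work sits, is the quantitative contraction estimate with the correct $\delta$-slack: one must exhibit a potential $\Phi$ for which $\sum_i |\Phi'(R)\cdot \partial_i F_\lambda / \Phi'(R_{w_i})|$ is bounded by $1-c(\delta,\Delta)$ \emph{uniformly} over all admissible tuples $(R_{w_1},\dots,R_{w_d})$, not merely at the symmetric fixed point. This is handled in \cite{chen2023rapid} by an analysis of the extremal configuration (all children equal, with one branch ``full'') at the boundary of the uniqueness region, combined with monotonicity in $\lambda$. Everything else --- the SAW-tree reduction, the path-telescoping, and the geometric-sum estimate --- is by now routine once the contractive potential is in hand.
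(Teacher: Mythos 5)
The paper does not prove \cref{th:SI-hardcore} at all---it cites the result directly from \cite{chen2023rapid} (see also \cite{anari2020spectral}), and its own contribution is lifting this hard-core $\ell_\infty$-independence bound to the slice (\cref{th:SI-slice}). Your outline is a reasonable high-level sketch of the SAW-tree plus contractive-potential argument carried out in those cited works, so modulo the details you explicitly defer to the references (the sign bookkeeping in Weitz's decomposition, the verification that the potential $\Phi$ contracts uniformly over all admissible child-ratio tuples rather than just at the symmetric fixed point, and the absorption of the branching factor $\Delta-1$ into the contraction before summing over depth), you have identified the correct source and its strategy.
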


Our high-level strategy will be to deduce \cref{th:SI-slice} from \cref{th:SI-hardcore} by viewing $\mu_k$ as the conditional distribution $\mu_{\lambda}(\cdot \mid |I|=k)$, where $\Omega_{\gamma}(1) = \lambda \leq (1-\delta)\lambda_c(\Delta)$ is chosen to ensure that $\E_{\lambda}[|I|] = k$ (\cref{lem:finding-activity}). Adopting this point of view, we can rewrite
\begin{equation}
    \label{eqn:hardcore-to-slice}
    |\P_k(v\,|\,\ubar) - \P_k(v\,|\,u)| =  \left|\P_\lambda(v\,|\,\ubar)\frac{\P_\lambda(|I| = k \,|\, v \cap \ubar)}{\P_\lambda(|I| = k\,|\,\ubar)} - \P_\lambda(v\,|\,u)\frac{\P_\lambda(|I| = k \,|\, v \cap u)}{\P_\lambda(|I| = k\,|\,u)} \right|\,.
\end{equation}
In light of \cref{th:SI-hardcore}, it therefore suffices to show that $$\P_{\lambda}(|I| = k) = \Theta_{\Delta, \delta, \gamma}(n^{-1/2})$$ and
\begin{equation}
\label{eqn:prob-hitting}
\P_\lambda(|I| = k \,|\, \bullet) = \P_\lambda(|I| = k) + O_{\Delta,\delta,\gamma}(n^{-3/2})\,
\end{equation}
for each of the four quantities of this form appearing in \eqref{eqn:hardcore-to-slice}. The first of these equations was established in \cite{jain2022approximate}, as a consequence of a local central limit theorem for $|I|$ and the fact that $\Var_{\lambda}(|I|) = \Theta_{\Delta,\delta,\gamma}(n)$ (see \cite[Lemma~3.2]{jain2022approximate}). The heart of the matter is \eqref{eqn:prob-hitting}. Since the variance of $|I|$ is only $\Theta_{\Delta,\delta,\gamma}(n)$, a local central limit theorem by itself is too crude to establish a statement of this sort.

There are two main ingredients in the proof of \eqref{eqn:prob-hitting}: an Edgeworth expansion -- to an arbitrary polynomial error -- for $\P_{\lambda,G}(|I| = k)$; and the property that the cumulants of the random variable $|I|$ under the hard-core model at activity $\lambda$ are stable under small perturbations of the underlying graph $G$. 

Recall that the $j^{th}$ cumulant of a random variable $X$ is defined in terms of the coefficients of the cumulant generating function $K_X(t) := \log\E e^{tX}$ (when this expectation exists in a neighborhood of 0). In
particular, the $j^{th}$ cumulant is
\[\kappa_j(X) = \frac{d^j}{dt^j}K_X(t)|_{t=0}.\]
The first and second cumulants are the mean and variance respectively. 

\subsection{An Edgeworth expansion: the probability of hitting the mean}

To state the Edgeworth expansion, we first recall the Hermite polynomials $H_k(x)$, which are defined via $$H_k(x) = (-1)^k e^{x^2 / 2}\frac{d^k}{dx^k} e^{-x^2 / 2} $$
and satisfy 
\begin{equation}\label{eq:Hermite-identity}
    \frac{1}{\sqrt{2\pi}} \int_{\R} e^{-t^2/2} e^{-itx} (it)^k \,dt = H_k(x) e^{-x^2/2}\,.
\end{equation}

We now prove the Edgeworth expansion for the size of an independent set sampled from the hard-core model.

\begin{prop}\label{prop:edgeworth}
    Let $G$ be a graph on $n$ vertices of maximum degree $\Delta$ and fix $\delta,\gamma > 0$.  Let $X$ be the random variable giving the size of an independent set in $G$ sampled from the hard-core model at activity $\gamma \leq \lambda \leq (1 - \delta)\lambda_c(\Delta)$.   Set $\mu = \E X$,  $\sigma^2 = \Var(X)$ and $\beta_j = \kappa_j(X)/(j!\sigma^j)$.  Then, for each fixed $d \in \N$ and $a \in \R$ so that $\mu + a \in \Z$ we have 
        
    \begin{equation*}
        \P(X = \mu + a) = \frac{e^{-a^2/2\sigma^2}}{\sqrt{2\pi} \sigma} \left(1 + \sum_{r \geq 3} H_r(a/\sigma) \sum_{j_3,\ldots,j_\ell} \frac{\beta_3^{j_3}}{j_3!}\cdots \frac{\beta_\ell^{j_\ell}}{j_\ell!} \right) + O_{\Delta,\delta,\gamma,d}(n^{-d}) 
    \end{equation*}
    where the inner sum is over sequences of non-negative integers $j_3,j_4,\ldots,j_\ell$ so that $\sum_a a j_a = r$ and $j_3/2 + j_4 + j_5(3/2) + \cdots + j_\ell(\ell-2)/2 \leq d$.
\end{prop}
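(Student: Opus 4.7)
The plan is to derive the expansion via Fourier inversion. Since $X$ takes integer values,
$$\P(X = \mu + a) \;=\; \frac{1}{2\pi\sigma}\int_{-\pi\sigma}^{\pi\sigma} \phi(s/\sigma)\, e^{-isa/\sigma}\, ds,$$
where $\phi(t) := \E e^{it(X-\mu)}$. I would split this integral at $T_n := n^{\eps}$ for a small $\eps > 0$, into a central region $|s| \le T_n$ and a tail $T_n < |s| \le \pi\sigma$; the target asymptotic will arise from the central contribution, while the tail must be absorbed into the $O(n^{-d})$ error.

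In the central region I would Taylor-expand the cumulant generating function,
$$\log \phi(s/\sigma) \;=\; -\tfrac{s^2}{2} \;+\; \sum_{j=3}^{L} \beta_j(is)^j \;+\; R_L(s),$$
with $L = L(d)$ large enough that $R_L$ is negligible on $|s|\le T_n$. The required bound $\kappa_j(X) = O_{\Delta,\delta,\gamma,j}(n)$ (so $\beta_j = O(n^{1-j/2})$) for each fixed $j$ follows from the univariate zero-free region for $Z_G$ (already exploited in the LCLT of \cite{jain2022approximate}) via Cauchy's formula applied to $\log Z_G(\lambda e^z)$ on a complex disk about $0$. Next, I would exponentiate and expand $\exp(\sum_{j \ge 3}\beta_j(is)^j)$ as a polynomial in $is$: the coefficient of $(is)^r$ is exactly $\sum \prod_a \beta_a^{j_a}/j_a!$ over sequences $(j_3,\ldots,j_\ell)$ with $\sum_a a j_a = r$, and $\prod_a \beta_a^{j_a} = O(n^{-\sum_a (a-2)j_a/2})$. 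Hence the monomials with $\sum_a (a-2)j_a/2 > d$ are absorbed into the $O(n^{-d})$ error, leaving exactly those in the statement. Multiplying by $e^{-s^2/2}e^{-isa/\sigma}$ and integrating over $\R$ (extending from $[-T_n,T_n]$ costs only a Gaussian tail), each surviving $(is)^r$ produces the Hermite factor $H_r(a/\sigma)e^{-a^2/(2\sigma^2)}/(\sqrt{2\pi}\,\sigma)$ via identity \eqref{eq:Hermite-identity}.

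For the tail $|s| > T_n$, I would import the high-frequency bounds on $\phi$ established in \cite{jain2022approximate}: writing $\phi(t) = e^{-it\mu} Z_G(\lambda e^{it})/Z_G(\lambda)$, the univariate zero-free region combined with Jensen's formula on an annulus yields $|\phi(s/\sigma)| \le e^{-c\min(s^2,\, n^{\eps'})}$ uniformly on $T_n < |s| \le \pi\sigma$, which easily beats $n^{-d-1}$ after integration against $ds$. The main obstacle is the central-region bookkeeping: the cutoff $T_n$, the truncation order $L$, and the expansion depth of the exponential must be chosen jointly so that (i) $R_L$ exponentiates to something of size $O(n^{-d})$ uniformly on $|s|\le T_n$, and (ii) every monomial satisfying $j_3/2+j_4+\cdots+j_\ell(\ell-2)/2 \le d$ is captured exactly. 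This is a delicate but essentially mechanical combinatorial accounting, with the genuinely analytic content concentrated in the cumulant bounds and tail estimates, both of which are consequences of the zero-free region inputs.
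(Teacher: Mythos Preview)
Your proposal is correct and follows essentially the same route as the paper: Fourier inversion, truncation of the integral using the characteristic-function tail bound from \cite{jain2022approximate}, Taylor expansion of the cumulant generating function with cumulant bounds $\kappa_j(X)=O_{\Delta,\delta}(n)$ derived from the zero-free region via Cauchy's formula, expansion of the exponential into the combinatorial sum over $(j_3,\dots,j_\ell)$, and termwise integration via the Hermite identity \eqref{eq:Hermite-identity}. The only cosmetic difference is that the paper truncates at $|s|\le C\log n$ rather than $|s|\le n^{\eps}$, which slightly streamlines the bookkeeping since the bound $|\E e^{itX/\sigma}|\le\exp(-ct^2)$ from \cite[Lemma~3.5]{jain2022approximate} already holds on the entire interval $|t|\le\pi\sigma$ and thus handles the tail in one stroke.
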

\begin{proof}
By Fourier inversion, we have 
\begin{equation*}
    \P(X = \mu + a) = \frac{1}{2\pi \sigma} \int_{-\pi \sigma}^{\pi \sigma} e^{-ita/\sigma} \E e^{it (X-\mu) /\sigma}\,dt\,.
\end{equation*}
By \cite[Lemma~3.5]{jain2022approximate}, there is a constant $c = c(\Delta,\delta,\gamma)$ so that \begin{equation*}
	| \E e^{it X/\sigma}| \leq \exp(-c n t^2/\sigma^2) 
\end{equation*}
for all $|t| \leq \pi\sigma$. Since $\sigma^2 = \Theta_{\Delta,\delta,\gamma}(n)$ (see \cite[Lemma~3.2]{jain2022approximate}), it follows that we can take $C = C(\Delta,\delta,\gamma,d)$ large enough so that we have \begin{equation}\label{eq:truncated-integral}
	\P(X = \mu + a) = \frac{1}{2\pi \sigma} \int_{|t| \leq C \log n} e^{-ita/\sigma} \E e^{i t (X-\mu) /\sigma}\,dt + O_{\Delta,\delta,\gamma,d}(n^{-d})\,.
\end{equation}

To proceed, we will need the following lemma, which will be proved in the next subsection. 

\begin{lemma}
\label{lem:cumulant-bound}
There exists a sufficiently small positive constant $\eps_{\mathrm{PR}} > 0$, depending only on $\Delta, \delta$, such that for each $d \in \N$ 
 $$|\kappa_d(X)| = O_{\Delta,\delta}(d!2^d\eps_{\mathrm{PR}}^{-d} n)\,$$
and moreover, for all $|t| \leq \eps_{\mathrm{PR}}/4$, we have $$\left|\log \frac{Z_G(\lambda e^t)}{Z_G(\lambda)} - \sum_{j = 1}^d \kappa_j(X) \frac{t^j}{j!}\right| = O_{\Delta,\delta}(|2t/\eps_{\mathrm{PR}}|^{d+1} n)\,. $$
\end{lemma}

From \cref{lem:cumulant-bound} and $\sigma^2 = \Theta_{\Delta,\delta,\gamma}(n)$, we can easily deduce the following.

\begin{lemma}
    For $t \leq C \log n$ we have 
    $$\E e^{it(X-\mu)/\sigma} = e^{-t^2/2}\left(1 + \sum_{r\geq 3} (it)^r \sum_{j_3,\ldots,j_\ell} \frac{\beta_3^{j_3}}{j_3!}\cdots \frac{\beta_\ell^{j_\ell}}{j_\ell!} \right)(1 + O_{\Delta,\delta,\gamma,d}(n^{-d}\log^{2d+2}n)) $$
    where the inner sum is over sequences of non-negative integers $j_3,j_4,\ldots,j_\ell$ so that $\sum_{a\geq 3} a j_a = r$ and $j_3/2 + j_4 + j_5(3/2) + \cdots + j_\ell(\ell-2)/2 \leq d$.
\end{lemma}
\begin{proof}[Proof of Lemma]
Apply \cref{lem:cumulant-bound} to bound \begin{align*}
        \E e^{it(X-\mu)/\sigma} &= e^{-it\mu/\sigma}\frac{Z_G(\lambda e^{it/\sigma})}{Z_G(\lambda)} = e^{-it\mu/\sigma}\exp\left(\log \frac{Z_G(\lambda e^{it/\sigma})}{Z_G(\lambda)}\right)\\
        &= e^{-it\mu/\sigma}\exp\left(\sum_{j=1}^{2d+1} \kappa_j(X) \frac{(it)^{j}}{\sigma^j j!} + O_{\Delta,\delta,\gamma,d}\left(\frac{\log^{2d+2}n}{n^{d}}\right)\right)\\
        &= e^{-it\mu/\sigma}\exp\left(it\mu/\sigma -t^2/2 + \sum_{j = 3}^{2d+1}\beta_j (it)^j + O_{\Delta,\delta,\gamma,d}(n^{-d}\log^{2d+2} n)  \right) \\
        &= e^{-t^2/2} \prod_{j = 3}^{2d+1} \exp\left(\beta_j (it)^j\right)\left(1 + O_{\Delta,\delta,\gamma,d}(n^{-d}\log^{2d+2} n)  \right)  \,.
     \end{align*}

    By \cite[Lemma~3.2]{jain2022approximate}, we have $\sigma^2 = \Theta_{\Delta,\delta,\gamma}(n)$; combining this with \cref{lem:cumulant-bound}, we have that \begin{equation}\label{eq:beta-bound}
        |\beta_j| = O_{\Delta,\delta,\gamma,j}(n^{1-j/2})\,.
    \end{equation}  
Finally, expanding the exponential terms in the above product and collecting terms completes the proof.
\end{proof}
Combining the preceding lemma with \eqref{eq:truncated-integral} shows $$\P(X = \mu + a) = \frac{1}{2\pi \sigma} \int_{|t| \leq C \log n} e^{-ita/\sigma} e^{-t^2/2}\left(1+ \sum_{r\geq 3} (it)^r \sum_{j_3,\ldots,j_\ell} \frac{\beta_3^{j_3}}{j_3!}\cdots \frac{\beta_\ell^{j_\ell}}{j_\ell!} \right) \,dt + O_{\Delta,\delta,d,\gamma}(n^{-d})$$ 
with the inner sum as above and we used the bound $\sigma = \Omega_{\Delta,\delta,\gamma}(n^{-1/2})$ to eliminate the logarithms in the error. Applying the identity \eqref{eq:Hermite-identity} and integrating completes the proof. 
\end{proof}

For later use, we isolate the $d = 2$ case of \cref{prop:edgeworth}. 

\begin{corollary}\label{cor:EE-application}
    In the setting of \cref{prop:edgeworth}, if we also have $|a| \leq L$ then $$\P(X = \mu + a) = \frac{e^{-a^2 / 2\sigma^2}}{\sqrt{2\pi}\sigma} + O_{\Delta,\delta,\gamma,L}(n^{-3/2})$$ 
\end{corollary}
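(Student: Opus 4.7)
The plan is to specialize Proposition \ref{prop:edgeworth} to $d = 2$ and verify that every correction term in the resulting Edgeworth expansion contributes error $O_{\Delta,\delta,\gamma,L}(n^{-3/2})$ in the regime $|a| \leq L$. Since the $(j_3,\ldots,j_\ell) \equiv 0$ term already yields exactly the leading contribution $\frac{e^{-a^2/2\sigma^2}}{\sqrt{2\pi}\sigma}$, and the stated error in Proposition \ref{prop:edgeworth} is $O(n^{-2})$, the whole task reduces to showing that the sum of the nontrivial terms inside the parenthesis is $O(n^{-1})$, so that multiplication by $\frac{1}{\sigma} = \Theta_{\Delta,\delta,\gamma}(n^{-1/2})$ produces the target error.

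Two ingredients drive the estimate. The first is the bound $|\beta_k| = O(n^{1-k/2})$ from \eqref{eq:beta-bound}, which gives $\prod_k |\beta_k|^{j_k} = O(n^{-\sum_k (k-2)j_k/2})$; the constraint $j_3/2 + j_4 + \cdots + j_\ell(\ell-2)/2 \leq 2$ imposed in Proposition \ref{prop:edgeworth} then bounds each such product by $O(1)$. The second is the behavior of the Hermite polynomials at $a/\sigma = O_L(n^{-1/2})$: because $H_r$ has no constant term when $r$ is odd (it is an odd polynomial in $x$), one obtains $|H_r(a/\sigma)| = O_L(n^{-1/2})$ for odd $r$ and $|H_r(a/\sigma)| = O_L(1)$ for even $r$.

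With these two ingredients, I would finish by enumerating the finitely many admissible tuples and checking that each contributes $|H_r(a/\sigma)| \cdot \prod_k |\beta_k|^{j_k} = O(n^{-1})$. The tightest cases are $(r,j_3)=(3,1)$ (saved by the parity bound on $H_3$), $(r,j_4)=(4,1)$, and $(r,j_3)=(6,2)$, each of which comes out to $O(n^{-1})$ on the nose; all other admissible tuples give $O(n^{-2})$ or smaller. Summing these finitely many terms, absorbing the factor $e^{-a^2/2\sigma^2} = 1 + O(n^{-1})$, and multiplying by $\frac{1}{\sqrt{2\pi}\sigma}$ produces the claimed $O(n^{-3/2})$ error. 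There is no genuine obstacle beyond bookkeeping; the only subtlety is the parity observation for $H_r$ at odd $r$, without which the $r = 3, j_3 = 1$ contribution would overshoot by a factor of $\sqrt{n}$.
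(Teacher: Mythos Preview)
Your proposal is correct and follows essentially the same route as the paper: apply \cref{prop:edgeworth} with $d=2$, then bound each correction term using $|\beta_j|=O(n^{1-j/2})$ together with the parity fact $H_r(x)=O_r(|x|)$ for odd $r$ (the paper phrases this as $|H_3(x)|\leq 3|x|$ for $|x|\leq 1$). Your enumeration of admissible tuples is in fact more careful than the paper's, which lists only $(j_3,j_4)\in\{(1,0),(0,1),(2,0)\}$; with $d=2$ there are additional tuples such as $j_5=1$ or $j_3=3$, but as you note they all contribute $O(n^{-2})$ inside the parenthesis and are harmless.
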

\begin{proof}
    Apply \cref{prop:edgeworth} with $d = 2$.  Note that all sequences appearing in the sum have $j_5 = j_6 = \ldots = 0$, and the only choices for $(j_3,j_4)$ are $(1,0),(0,1)$ and $(2,0)$ implying $$
    \P(X = \mu + a) = \frac{e^{-a^2 / 2\sigma^2}}{\sqrt{2\pi}\sigma}\left(1 + H_3\left(\frac{a}{\sigma}\right)\beta_3 + H_4\left(\frac{a}{\sigma}\right)\beta_4 + H_6\left(\frac{a}{\sigma}\right)\frac{\beta_3^2}{2} \right) + O_{\Delta,\delta,\gamma}(n^{-2}\log^{6}n)\,. $$

    Recall that $\sigma^2 = \Omega_{\Delta,\delta,\gamma}(n)$ and apply \eqref{eq:beta-bound} along with the bound $|H_3(x)| \leq 3|x|$ for $|x| \leq 1$ to see 
    \[\frac{1}{\sigma}\left(1 + H_3\left(\frac{a}{\sigma}\right)\beta_3 + H_4\left(\frac{a}{\sigma}\right)\beta_4 + H_6\left(\frac{a}{\sigma}\right)\frac{\beta_3^2}{2} \right) = O_{\Delta,\delta,\gamma,L}(n^{-3/2})\,.\qedhere\] 
\end{proof}

\subsection{Properties of cumulants associated to the hard-core model via zero-free regions}

In this subsection, we prove \cref{lem:cumulant-bound} as well as the following crucial stability result. 

\begin{prop}\label{prop:cumulant-stability}
    Let $G$ be a graph on $n$ vertices of maximum degree $\Delta$ and $u \in V(G)$ and fix $\delta > 0$.  Let $\lambda \leq (1 - \delta)\lambda_c(\Delta)$. Let $X$ be the random variable giving the size of an independent set in the hardcore model on $G$ at activity $\lambda$ with $u$ conditioned on being in the independent set, and let $X'$ be the size conditioned on $u$ not being in the independent set.  Then for each $j \in \N$ we have \begin{equation*}
        | \kappa_j(X) - \kappa_{j}(X')| = O_{j,\Delta,\delta}(1)\,.
    \end{equation*}
\end{prop}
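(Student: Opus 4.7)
The plan is to reduce the cumulant difference to a contour integral of a quantity whose size is controlled independently of $n$ and $G$, the control coming from the multivariate Peters--Regts zero-free region combined with a M\"obius rearrangement. Conditioning on $u \in I$ turns $X$ into $1+|I_{G-N[u]}|$ (a sample from the hard-core model on $G-N[u]$ at activity $\lambda$), and conditioning on $u \notin I$ turns $X'$ into $|I_{G-u}|$. Since $\kappa_j(|I_H|) = \partial_t^j \log Z_H(\lambda e^t)\big|_{t=0}$, it is enough (up to an additive shift of $1$ in the case $j=1$) to show
\[\partial_t^j \log R(t)\big|_{t=0} = O_{j,\Delta,\delta}(1), \qquad R(t) := \frac{Z_{G-N[u]}(\lambda e^t)}{Z_{G-u}(\lambda e^t)}.\]

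To make the difference amenable to bounds via a zero-free region, I interpolate multivariately. Let $W = N(u)$ (of size at most $\Delta$), and define activities $\blambda(\tau,t)$ on $V(G-u)$ by $\lambda_w = \tau\lambda e^t$ for $w \in W$ and $\lambda_v = \lambda e^t$ otherwise. Setting $\Psi(t,\tau) := \log Z_{G-u}(\blambda(\tau,t))$, one has $\Psi(t,0) = \log Z_{G-N[u]}(\lambda e^t)$ and $\Psi(t,1) = \log Z_{G-u}(\lambda e^t)$, so $\log R(t) = -\int_0^1 \partial_\tau \Psi(t,\tau)\,d\tau$. The target then reduces to a uniform-in-$\tau$ bound $\partial_t^j \partial_\tau \Psi(0,\tau) = O_{j,\Delta,\delta}(1)$, and ultimately to showing $|\partial_\tau \Psi(t,\tau)| = O_{\Delta,\delta}(1)$ on a suitable complex polydisc around $[0,1]\times\{0\}$, uniformly in $G$ and $n$.

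The heart of the argument, and the main obstacle, is this dimension-free bound on $\partial_\tau \Psi$. Since $Z_{G-u}$ is degree $1$ in each $\lambda_w$, one can write $Z_{G-u}(\blambda) = A_w(\blambda_{-w}) + \lambda_w B_w(\blambda_{-w})$, yielding the M\"obius identity $\partial_{\lambda_w}\log Z_{G-u} = 1/(\lambda_w + A_w/B_w)$. The multivariate Peters--Regts zero-free region from \cite{PR19} supplies $\rho = \rho(\Delta,\delta) > 0$ and a complex domain $\Omega$ containing the closed $2\rho$-neighborhood of $[0,\lambda]$ such that $Z_H(\boldsymbol{z}) \neq 0$ for every graph $H$ of maximum degree at most $\Delta$ and every $\boldsymbol{z} \in \Omega^{V(H)}$. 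In particular $-A_w/B_w$ lies outside $\Omega$ (otherwise $Z_{G-u}$ would vanish at $\lambda_w = -A_w/B_w$), while $A_w = Z_{G-u-w}$ and $B_w = Z_{G-u-N_{G-u}[w]}$ are themselves nonvanishing on $\Omega^V$, being multivariate independence polynomials of graphs of maximum degree at most $\Delta$. For $(\tau,t)$ in a small complex polydisc around $[0,1]\times\{0\}$, each relevant activity lies in the $\rho$-interior of $\Omega$, so $|\lambda_w + A_w/B_w| \geq \rho$ and hence $|\partial_{\lambda_w}\log Z_{G-u}| \leq 1/\rho$. Summing the at most $\Delta$ contributions and multiplying by $|\lambda e^t|$ gives $|\partial_\tau \Psi(t,\tau)| = O_{\Delta,\delta}(1)$ on the polydisc.

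With this bound in hand, the conclusion follows from Cauchy's integral formula in $t$ on a circle $|t|=r$, with $r = r(\Delta,\delta)$ small and fixed, combined with Fubini:
\[\partial_t^j \log R(t)\big|_{t=0} = -\int_0^1 \frac{j!}{2\pi i}\oint_{|t|=r}\frac{\partial_\tau \Psi(t,\tau)}{t^{j+1}}\,dt\,d\tau = O_{j,\Delta,\delta}(1).\]
A naive application of Cauchy's formula directly to $\log Z_{G-u}$ would be useless, since $|\log Z_{G-u}| = \Theta(n)$ only yields an $O(n/r^j)$ bound; the M\"obius rearrangement is precisely what converts the multivariate zero-freeness into the required dimension-free estimate.
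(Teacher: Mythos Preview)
Your proof is correct and rests on the same key insight as the paper's---that the multivariate Peters--Regts zero-free region forces the unique root (in any single activity variable) of $Z$ to lie outside a fixed neighborhood of $[0,\lambda]$, yielding a bound independent of $n$---but the implementation differs.

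The paper bounds the occupancy ratio $R_u(\lambda,t)=\lambda e^t Z_{G-N[u]}(\lambda e^t)/Z_{G-u}(\lambda e^t)$ in one stroke: if $R_u=z$, then the multivariate activity $\blambda$ with $\blambda(v)=\lambda e^t$ for $v\neq u$ and $\blambda(u)=-\lambda e^t/z$ makes $Z_G(\blambda)=0$, so by Peters--Regts $\blambda(u)\notin U$, whence $|z|\le \lambda e^{|t|}/\eps_{\mathrm{PR}}$. This is a single application of the root-outside-the-region trick at the vertex $u$, after which Cauchy is applied to $\log R_u$. Your route instead interpolates between $Z_{G-N[u]}$ and $Z_{G-u}$ by scaling the activities on $N(u)$ with a parameter $\tau$, applies the same root-outside-the-region trick at each $w\in N(u)$ separately to bound $|\partial_{\lambda_w}\log Z_{G-u}|\le 1/\rho$, and then integrates over $\tau$. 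Thus you invoke the key trick up to $\Delta$ times and add an integration step the paper avoids.

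What each buys: the paper's argument is shorter and more direct. Your argument, on the other hand, sidesteps a point the paper glosses over---to pass from $|R_u|\le C$ to a bound on $|\partial_t^j\log R_u|$ via Cauchy one also needs $|R_u|$ bounded \emph{below} on the contour (or an appeal to Borel--Carath\'eodory), whereas your bound on $|\partial_\tau\Psi|$ yields $|\log R(t)|=O_{\Delta,\delta}(1)$ on the disc immediately, so Cauchy applies without further ado.
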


The engine used to prove these results is a zero-free region for the multivariate partition function for the hard-core model.  Given a function $\blambda: V(G) \to \C$ define $Z_G(\blambda)$ via 
\begin{equation*}
    Z_G(\blambda) = \sum_{I \in \mathcal{I}(G)} \prod_{v \in I} \blambda(v).
\end{equation*}
 Peters and Regts \cite{PR19} prove that $Z_G(\blambda)$ satisfies a certain zero-free property.

\begin{theorem}[Peters-Regts, \cite{PR19}] \label{thm:PR}
    Let $\Delta \geq 3$ and $\delta \in (0,1)$. There is an open set $U \subset \C$ with $[0,(1-\delta)\lambda_c(\Delta)] \subset U$ so that for any graph $G$ on $n$ vertices with maximum degree at most $\Delta$, we have $Z_G(\blambda) \neq 0$ provided $\blambda(v)  \in U$ for all $v \in G$.

    In particular, since $U$ is an open set, we may find a sufficiently small $\eps_{\mathrm{PR}}>0$, depending only on $\Delta, \delta$, so that $z \in U$ for all $|z| \leq \eps_{\mathrm{PR}}$ and $\lambda e^{w} \in U$ for all $\lambda \in [0,(1-\delta)\lambda_c(\Delta)]$ and $|w| \leq \eps_{\mathrm{PR}}$.
\end{theorem}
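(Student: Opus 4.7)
The plan is to prove the zero-free region by an induction on $|V(G)|$ using a contraction argument for the occupation-ratio tree recursion, which is the classical strategy initiated by Weitz for correlation decay and adapted by Peters–Regts into the complex plane. For $v \in V(G)$, set $R_v(\blambda) := \blambda(v) Z_{G-N[v]}(\blambda)/Z_{G-v}(\blambda)$, so that the deletion identity $Z_G(\blambda) = Z_{G-v}(\blambda) + \blambda(v) Z_{G-N[v]}(\blambda)$ becomes $Z_G(\blambda)/Z_{G-v}(\blambda) = 1 + R_v(\blambda)$. If $v$ has neighbors $u_1,\ldots,u_d$ (with $d \leq \Delta$) in $G$, then by telescoping one obtains the tree recursion
\[
R_v(\blambda) \;=\; \frac{\blambda(v)}{\prod_{i=1}^d \bigl(1 + R_{u_i}^{(i)}(\blambda)\bigr)},
\]
where $R_{u_i}^{(i)}$ is the ratio for $u_i$ in the graph obtained from $G-v$ by deleting $u_1,\ldots,u_{i-1}$. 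Telescoping over a vertex ordering shows that $Z_G(\blambda) \neq 0$ once one can guarantee that every ratio arising in the recursion stays inside a region that avoids $-1$.

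The core analytic step is to construct, for each $\lambda_0 \in [0,(1-\delta)\lambda_c(\Delta)]$, a bounded open region $\Omega_{\lambda_0} \subset \C \setminus \{-1\}$ together with a conformal potential $\varphi \colon \Omega_{\lambda_0} \to \C$ such that the one-step map $T_{w}(z_1,\ldots,z_d) := w/\prod_{i=1}^d (1+z_i)$ carries $\Omega_{\lambda_0}^d$ into $\Omega_{\lambda_0}$ and is strictly contractive in the $\varphi$-metric whenever $w$ is close to $\lambda_0$ and each $z_i \in \Omega_{\lambda_0}$. Following the Peters–Regts construction, one works with the potential $\varphi(z) = \log z$ on a lens-shaped neighborhood of the positive-real fixed-point curve $\{R^\star(\lambda) : \lambda \in [0,\lambda_c(\Delta))\}$; the reason such a region exists for all $\lambda < \lambda_c(\Delta)$ is that at $\lambda_c(\Delta)$ the derivative of the $d$-fold iteration at the real fixed point is exactly $1$, so for any $\delta > 0$ one gets uniform strict contraction along $[0,(1-\delta)\lambda_c(\Delta)]$, and this extends by continuity to a complex neighborhood.

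With such regions in hand, I would fix any $\lambda_0 \in [0,(1-\delta)\lambda_c(\Delta)]$ and choose an open neighborhood $V_{\lambda_0} \subset \C$ of $\lambda_0$ small enough that $w/\prod(1+z_i) \in \Omega_{\lambda_0}$ whenever $w \in V_{\lambda_0}$ and $z_i \in \Omega_{\lambda_0}$. Taking $U_{\lambda_0} := V_{\lambda_0} \cap \Omega_{\lambda_0}$ (shrunk if necessary so that $U_{\lambda_0} \subset \Omega_{\lambda_0}$), an induction on the number of vertices shows that for any graph $G$ of maximum degree $\leq \Delta$ and any $\blambda \colon V(G) \to U_{\lambda_0}$, every ratio $R_v(\blambda)$ lies in $\Omega_{\lambda_0}$; in particular $1 + R_v(\blambda) \neq 0$, so $Z_G(\blambda) \neq 0$ by the telescoping identity above. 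Setting
\[
U \;:=\; \bigcup_{\lambda_0 \in [0,(1-\delta)\lambda_c(\Delta)]} U_{\lambda_0}
\]
produces an open set containing $[0,(1-\delta)\lambda_c(\Delta)]$ with the required zero-freeness property.

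The main obstacle is constructing the invariant contracting region $\Omega_{\lambda_0}$: real-axis contraction up to $\lambda_c(\Delta)$ is classical, but promoting it to a complex neighborhood uniformly over $\lambda_0 \in [0,(1-\delta)\lambda_c(\Delta)]$ requires carefully choosing the potential $\varphi$ so that the branches of $\log(1+z)$ behave well and the composition stays inside the lens region. Once this is done, the ``in particular'' clause is routine: the compact set $[0,(1-\delta)\lambda_c(\Delta)]$ is contained in the open set $U$, so a standard tubular-neighborhood/compactness argument produces a single $\eps_{\mathrm{PR}} = \eps_{\mathrm{PR}}(\Delta,\delta) > 0$ such that both $\{z : |z| \leq \eps_{\mathrm{PR}}\} \subset U$ and $\{\lambda e^{w} : \lambda \in [0,(1-\delta)\lambda_c(\Delta)],\, |w| \leq \eps_{\mathrm{PR}}\} \subset U$.
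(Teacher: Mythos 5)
The paper does not prove the first claim of this theorem: it is a black-box citation to Peters--Regts~\cite{PR19}, and the only thing the paper actually argues is the ``in particular'' clause, which follows from openness of $U$ together with compactness of $[0,(1-\delta)\lambda_c(\Delta)]$ and of $\{\lambda e^{w}: \lambda\in[0,(1-\delta)\lambda_c(\Delta)],\,|w|\leq \eps\}$ for small $\eps$. Your ``in particular'' argument matches the paper exactly. What you have added is an attempt to reconstruct the Peters--Regts proof itself, and there your high-level outline (deletion identity, telescoping ratio recursion $R_v = \blambda(v)/\prod_i(1+R_{u_i}^{(i)})$, contraction with respect to a conformal potential on an invariant region avoiding $-1$) is genuinely the right strategy, so you have the correct skeleton.

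However, the skeleton is not yet a proof, and the missing step is not a technicality but the entire content of the theorem. You acknowledge that ``the main obstacle is constructing the invariant contracting region $\Omega_{\lambda_0}$,'' but no part of the proposal actually constructs it, verifies invariance under the multi-child map $T_w(z_1,\ldots,z_d)=w/\prod(1+z_i)$ uniformly over $d\leq\Delta-1$, or exhibits the strict contraction. Without that, the induction never closes. There is also a structural issue with how you combine regions: you build a separate invariant region $\Omega_{\lambda_0}$ for each $\lambda_0$ and then set $U=\bigcup_{\lambda_0}U_{\lambda_0}$, but in the multivariate statement different vertices of $G$ may carry activities near different $\lambda_0$'s, and the recursion will then mix ratios from different $\Omega_{\lambda_0}$'s. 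To make the induction work you need a \emph{single} domain $\Omega$ (e.g., a fixed lens around the whole real fixed-point curve) that is simultaneously invariant and contracting for $T_w$ for all $w\in U$; constructing the per-$\lambda_0$ regions and taking a union of the $U_{\lambda_0}$'s does not by itself give you that shared invariant set. You gesture at the ``lens around the positive-real fixed-point curve'' in one sentence, which is the right object, but the formal scheme you wrote down does not produce it.
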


Our main use of the zero-free region provided by \cref{thm:PR} is to show analyticity of the logarithm of the partition function and control its derivatives.

\begin{fact}\label{fact:analytic-cgf}
   In the setting of \cref{thm:PR}, the function $t \mapsto \log \frac{Z_G(\lambda e^t)}{Z_G(\lambda)}$ is analytic for $|t| \leq \eps_{\mathrm{PR}}$.  Further, we have 
    \begin{equation}\label{eq:max-bound-cgf}
        \max_{|t| = \eps_{\mathrm{PR}}/2 } \left| \log \frac{Z_G(\lambda e^t)}{Z_G(\lambda)} \right| = O_{\Delta,\delta}(n)\,.
    \end{equation}
\end{fact}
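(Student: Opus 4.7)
The plan is to first establish analyticity by invoking the Peters-Regts zero-free region, and then convert this into the desired modulus bound using a standard complex-analytic estimate (the Borel-Carath\'eodory inequality).

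For analyticity, I would first note that $t \mapsto Z_G(\lambda e^t) = \sum_{I \in \cI(G)} (\lambda e^t)^{|I|}$ is a finite sum of entire functions of $t$, hence entire. Applying \cref{thm:PR} with the multivariate activity taken to be the constant function $v \mapsto \lambda e^t$, we conclude $Z_G(\lambda e^t) \neq 0$ on the disk $\{|t| \leq \eps_{\mathrm{PR}}\}$. Since this disk is simply connected, $Z_G(\lambda e^t)/Z_G(\lambda)$ admits an analytic logarithm there, and the normalization that this quantity equals $0$ at $t = 0$ pins down a unique branch.

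For the $O_{\Delta,\delta}(n)$ bound on $\{|t| = \eps_{\mathrm{PR}}/2\}$, I would apply the Borel-Carath\'eodory inequality to $f(t) := \log(Z_G(\lambda e^t)/Z_G(\lambda))$, which is analytic on $\{|t| \leq \eps_{\mathrm{PR}}\}$ with $f(0) = 0$. Taking $R = \eps_{\mathrm{PR}}$ and $r = \eps_{\mathrm{PR}}/2$, this yields $\max_{|t| = r}|f(t)| \leq \frac{2r}{R-r}\max_{|t| \leq R}\Re f(t) = 2\max_{|t| \leq R}\Re f(t)$. It therefore suffices to bound $\Re f(t) = \log(|Z_G(\lambda e^t)|/Z_G(\lambda))$ by $O_{\Delta,\delta}(n)$, and this follows from two trivial estimates: for the numerator, $|Z_G(\lambda e^t)| \leq \sum_{I \in \cI(G)} |\lambda e^t|^{|I|} \leq (1 + \lambda e^{\eps_{\mathrm{PR}}})^n = e^{O_{\Delta,\delta}(n)}$; and for the denominator, the empty independent set contributes $1$, so $Z_G(\lambda) \geq 1$.

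I do not expect a genuine obstacle here; the argument is a direct consequence of zero-freeness together with classical complex analysis. The one subtle point worth flagging is that bounding $|f|$ directly through simultaneous estimates on $|Z_G(\lambda e^t)|$ would require both an upper \emph{and} a lower bound on the complex partition function over the disk, the latter being delicate. The Borel-Carath\'eodory step neatly sidesteps this: it trades a one-sided real-part estimate (for which the easy upper bound on $|Z_G|$ suffices) for a two-sided bound on $|f|$, automatically taming the imaginary part, i.e.\ the argument of $Z_G(\lambda e^t)$, which \emph{a priori} could wrap many times as $t$ varies over the circle.
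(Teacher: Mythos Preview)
Your proof is correct, and the analyticity part matches the paper's. For the modulus bound, however, the paper takes a different route: it factorizes $Z_G$ over its (at most $n$) roots $\{\zeta\}$, writes $\log\frac{Z_G(\lambda e^t)}{Z_G(\lambda)} = \sum_\zeta \log\frac{\zeta - \lambda e^t}{\zeta - \lambda}$, and bounds each summand by $O_{\Delta,\delta}(1)$ using that the zero-free region forces every root $\zeta$ to stay a uniform distance away from both $\lambda$ and $\lambda e^t$. Your Borel--Carath\'eodory argument is a genuinely different and arguably slicker approach: it entirely sidesteps the need for a lower bound on $|Z_G(\lambda e^t)|$ (equivalently, the root-location information beyond mere nonvanishing), trading it for the trivial upper bound $|Z_G(\lambda e^t)| \leq (1+\lambda e^{\eps_{\mathrm{PR}}})^n$ and $Z_G(\lambda)\geq 1$. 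The paper's approach is more hands-on and avoids invoking a named theorem, but your argument makes clearer that the $O_{\Delta,\delta}(n)$ bound really only uses zero-freeness qualitatively (to define the logarithm) together with the crude combinatorial upper bound.
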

\begin{proof}
    The function $t \mapsto \frac{Z_G(\lambda e^t)}{Z_G(\lambda)}$ is an analytic function since $Z_G$ is a polynomial.  By \cref{thm:PR}, we have that $Z_G(\lambda e^t)/Z_G(\lambda)\neq 0$ for $|t| \leq \eps_{\mathrm{PR}}$.  This shows analyticity of the function $t \mapsto \log \frac{Z_G(\lambda e^t)}{Z_G(\lambda)}$ for $|t| \leq \eps_{\mathrm{PR}}$.  To prove \eqref{eq:max-bound-cgf}, let $\{\zeta\}$ be the set of roots of $Z_G$ and note that for $|t| \leq \eps_{\mathrm{PR}}$ we have that $|\zeta - \lambda e^t|$ is uniformly bounded below by \cref{thm:PR}. 
    This shows 
    \[\max_{|t| = \eps_{\mathrm{PR}}/2 } \left| \log \frac{Z_G(\lambda e^t)}{Z_G(\lambda)} \right| = \max_{|t| = \eps_{\mathrm{PR}}/2 } \left|\log \left(\prod_{\zeta}\frac{\zeta - \lambda e^t}{\zeta - \lambda}\right)  \right| = O_{\Delta,\delta}(n)\,. \qedhere\]
\end{proof}

In a sample of the hard-core model from $G$ at activity $\lambda$, let $X$ be the random variable giving the size of the independent set sampled. The use of \cref{fact:analytic-cgf} is that it provides control of the cumulants of $X$, since we have $$\kappa_j(X) = \frac{d^j}{dt^j} \log \E e^{tX}|_{t=0} =  \frac{d^j}{dt^j}\log Z_G(\lambda e^t) \bigg|_{t = 0}\,. $$

First, we prove \cref{lem:cumulant-bound}.

\begin{proof}[Proof of \cref{lem:cumulant-bound}]
    The bound on $\kappa_j(X)$ follows immediately from \cref{fact:analytic-cgf} together with Cauchy's integral formula.  The second bound follows from noting that $$\left|\log \frac{Z_G(\lambda e^t)}{Z_G(\lambda)} - \sum_{j = 1}^d \kappa_j(X) \frac{t^j}{j!}\right| \leq \sum_{\ell > d} \frac{|\kappa_\ell(X)|}{\ell!}|t|^\ell$$ and applying the bound on $|\kappa_j(X)|$. 
\end{proof}

With \cref{lem:cumulant-bound} in place, we now shift to the proof of \cref{prop:cumulant-stability}. Let $G^{u}$ denote the graph obtained by removing $u$ and all of its neighbors from $G$, and let $G^{\ubar}$ denote the graph obtained by removing $u$ from $G$. With $X$ and $X'$ as in the statement of \cref{prop:cumulant-stability}, note that the moment generating function for $X$ is $\lambda e^t Z_{G^u}(\lambda e^t) / \lambda Z_{G^u}(\lambda)$ and the moment generating function $X'$ is $Z_{G^{\ubar}}(\lambda e^t)/Z_{G^{\ubar}}(\lambda).$   This means that for $j \in \N$ we have \begin{align*}
    \kappa_j(X) = \frac{d^{j}}{dt^j} \log \lambda e^t Z_{G^u}(\lambda e^t) \Big|_{t = 0} \\
    \kappa_j(X') = \frac{d^{j}}{dt^j} \log Z_{G^{\ubar}}(\lambda e^t) \Big|_{t = 0} 
\end{align*}
implying that \begin{equation}\label{eq:cumulant-difference}
    \kappa_j(X') - \kappa_j(X) =\frac{d^{j}}{dt^j} \log \left(\frac{\lambda e^t Z_{G^u}(\lambda e^t) }{Z_{G^{\ubar}}(\lambda e^t)}\right)  \Big|_{t = 0}\,.
\end{equation}

Let 
\begin{equation}
\label{eqn:occupancy-ratio}
R_u(\lambda,t) = \frac{\lambda e^t Z_{G^u}(\lambda e^t) }{Z_{G^{\ubar}}(\lambda e^t)}\,.
\end{equation}
This is exactly the so-called occupancy ratio, extended to complex activities. Analyticity of $R_u(\lambda,t)$ is immediate:

\begin{fact}\label{fact:analytic}
The function $t \mapsto R_u(\lambda,t)$ is analytic for $|t| \leq \eps_{\mathrm{PR}}$\,.
\end{fact}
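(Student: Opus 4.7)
The plan is to observe that the numerator and denominator of $R_u(\lambda,t)$ are entire functions of $t$, so the only obstacle to analyticity on the disk $|t| \leq \eps_{\mathrm{PR}}$ is a possible zero of the denominator; this is exactly what the Peters--Regts zero-free region (\cref{thm:PR}) is designed to rule out.

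In more detail, I would first note that $Z_{G^u}$ and $Z_{G^{\ubar}}$ are polynomials in a single complex variable, so the maps $t \mapsto Z_{G^u}(\lambda e^t)$ and $t \mapsto Z_{G^{\ubar}}(\lambda e^t)$ are entire in $t$, as is $t \mapsto \lambda e^t$. Consequently the numerator $\lambda e^t Z_{G^u}(\lambda e^t)$ and the denominator $Z_{G^{\ubar}}(\lambda e^t)$ of $R_u(\lambda,t)$ are both entire as functions of $t$.

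The main (and only) step is to certify that the denominator does not vanish in the closed disk $|t| \leq \eps_{\mathrm{PR}}$. For this I would apply \cref{thm:PR} to the graph $G^{\ubar}$: it is an induced subgraph of $G$ and therefore has maximum degree at most $\Delta$. By the choice of $\eps_{\mathrm{PR}}$ recorded there, $\lambda e^t \in U$ for all $\lambda \in [0,(1-\delta)\lambda_c(\Delta)]$ and all $|t| \leq \eps_{\mathrm{PR}}$. Setting $\blambda(v) = \lambda e^t$ uniformly over $v \in V(G^{\ubar})$ in \cref{thm:PR} then yields $Z_{G^{\ubar}}(\lambda e^t) \neq 0$ on this disk. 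Since the quotient of an entire function by a non-vanishing holomorphic function is holomorphic, $R_u(\lambda,t)$ is analytic on $|t|\leq \eps_{\mathrm{PR}}$, completing the argument.
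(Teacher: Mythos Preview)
Your argument is correct and is essentially the same as the paper's: both observe that $R_u(\lambda,t)$ is a quotient of entire functions of $t$ (the paper phrases this as ``a rational function in $e^t$'') and then invoke the Peters--Regts zero-free region to certify that the denominator $Z_{G^{\ubar}}(\lambda e^t)$ does not vanish on $|t|\le \eps_{\mathrm{PR}}$. Your version simply spells out a couple of details (that $G^{\ubar}$ inherits the degree bound, and the choice of constant $\blambda$) that the paper leaves implicit.
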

\begin{proof}
    The function $R_u(\lambda,t)$ is a rational function in $e^t$ and by \cref{thm:PR} the denominator is zero-free for  $|t| \leq \eps_{\mathrm{PR}}$.
\end{proof}

By \cref{fact:analytic} and Cauchy's integral formula, \cref{prop:cumulant-stability} follows immediately, provided we can deduce a uniform upper bound on the magnitude of $R$. We will do this by applying \cref{thm:PR} to a well-chosen multivariate $\blambda$.

\begin{lemma}\label{lemma:bounded}
    There exist $C>0$, depending only on $\Delta,\delta$, such that for $|t| < \eps_{\mathrm{PR}}$ and $\lambda \in [0,(1-\delta)\lambda_c(\Delta)]$ we have $|R_u(\lambda,t)| \leq C$\,.
\end{lemma}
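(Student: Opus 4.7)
The plan is to exploit the multivariate zero-free region provided by \cref{thm:PR} by treating the activity at the single vertex $u$ as a free complex parameter. Specifically, for $s \in \C$, define $\blambda_s \colon V(G) \to \C$ by $\blambda_s(v) = \lambda e^t$ for $v \neq u$ and $\blambda_s(u) = s \lambda e^t$. Splitting independent sets of $G$ according to whether they contain $u$ gives the linear identity
$$Z_G(\blambda_s) = Z_{G^{\ubar}}(\lambda e^t) + s \cdot \lambda e^t \cdot Z_{G^u}(\lambda e^t),$$
and, assuming $\lambda \neq 0$, this polynomial in $s$ has its unique zero at
$$s_0 = -\frac{Z_{G^{\ubar}}(\lambda e^t)}{\lambda e^t \, Z_{G^u}(\lambda e^t)} = -\frac{1}{R_u(\lambda,t)}.$$
Note $Z_{G^u}(\lambda e^t)\neq 0$ by applying \cref{thm:PR} to the subgraph $G^u$ with constant activity $\lambda e^t$, so $s_0$ is well defined.

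The key step is to apply \cref{thm:PR} to $\blambda_{s_0}$. By hypothesis on $\eps_{\mathrm{PR}}$, we have $\blambda_{s_0}(v) = \lambda e^t \in U$ for every $v \neq u$. If $s_0 \lambda e^t$ were also in $U$, then \cref{thm:PR} would yield $Z_G(\blambda_{s_0}) \neq 0$, contradicting the definition of $s_0$. Hence $s_0 \lambda e^t \notin U$; using the inclusion $\{|z| \leq \eps_{\mathrm{PR}}\} \subset U$ guaranteed by \cref{thm:PR}, this forces $|s_0 \lambda e^t| > \eps_{\mathrm{PR}}$. Rearranging,
$$|R_u(\lambda,t)| = \frac{1}{|s_0|} < \frac{|\lambda e^t|}{\eps_{\mathrm{PR}}} \leq \frac{(1-\delta)\lambda_c(\Delta)\, e^{\eps_{\mathrm{PR}}}}{\eps_{\mathrm{PR}}},$$
with the $\lambda = 0$ case handled trivially since then $R_u(0,t) = 0$. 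This produces the desired constant $C = C(\Delta,\delta)$.

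The conceptual heart of the argument — and the reason the multivariate rather than merely univariate zero-free region is needed — is that bounding $|R_u|$ is precisely equivalent to forbidding a zero of a partition function in which only the activity at $u$ has been perturbed; the univariate zero-free region of $Z_G(\lambda e^t)$ as $t$ varies says nothing about this. Beyond this observation, no real obstacle arises: the argument is a short application of Peters--Regts together with the elementary observation that adjusting a single coordinate of $\blambda$ produces a linear polynomial whose root is exactly $-1/R_u$.
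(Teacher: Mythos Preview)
Your proof is correct and is essentially identical to the paper's argument: both set the activity at $u$ to a free complex parameter, observe that $Z_G(\blambda)$ vanishes precisely when this parameter equals $-\lambda e^t/R_u(\lambda,t)$, and then invoke \cref{thm:PR} to force this value outside the disk of radius $\eps_{\mathrm{PR}}$. Your parametrization via $s$ (with $s_0 = -1/R_u$) is just a relabeling of the paper's choice $\blambda(u) = -\lambda e^t/z$ with $z = R_u$, and your handling of the edge cases ($\lambda = 0$, $Z_{G^u}(\lambda e^t)\neq 0$) is a welcome addition.
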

\begin{proof}
    For $\lambda \in [0,(1-\delta)\lambda_c(\Delta)]$ and $|t| \leq \eps_{\mathrm{PR}}$ define $z$ via $$R_u(\lambda, t) = z$$
    and rewrite \cref{eqn:occupancy-ratio} as \begin{align*}
        0 &= -\frac{\lambda e^t}{z} Z_{G^u}(\lambda e^t) + Z_{G^{\ubar}}(\lambda e^t) \,.
    \end{align*}
    Note that if we define the activity $\blambda$ via \begin{equation*}
        \blambda(v) = \begin{cases} 
        \lambda e^t &\text{ if } v \neq u \\
        -\frac{\lambda e^t}{z} & \text{ if }v = u
        \end{cases}
    \end{equation*}
    then we can write this equality as $Z_{G}(\blambda) = 0\,.$ By \cref{thm:PR}, since $\blambda(v) \in U$ for all $v\neq u$, it must be the case that $\blambda(u) \notin U$. In particular, we must have  $|\blambda(u)| = \left|\lambda e^{t}/z\right| \geq \eps_{PR}$, implying that 
    \[|z| \leq \lambda e^{|t|}/\eps_{PR} = O_{\Delta,\delta}(1)\,.\qedhere\]
\end{proof}

\begin{proof}[Proof of \cref{prop:cumulant-stability}]
    Apply equation \eqref{eq:cumulant-difference} to write $$ \left|\kappa_j(X') - \kappa_j(X) \right| =\left|\frac{d^{j}}{dt^j} \log R_u(\lambda,t)   \Big|_{t = 0}\right|\,.$$
    By \cref{fact:analytic} and \cref{lemma:bounded}, the function $t \mapsto R_u(\lambda,t)$ is analytic and bounded by $C = O_{\Delta,\delta}(1)$ for $|t| < \eps_{\mathrm{PR}}$.  Applying Cauchy's integral theorem completes the proof.  
\end{proof}

\subsection{Proof of \cref{th:SI-slice}} By the discussion at the start of this section, it suffices to show \eqref{eqn:prob-hitting}. This follows from at most two repeated applications of the following.

\begin{prop}\label{prop:hit-k}
    Let $G$ be a graph on $n$ vertices of maximum degree $\Delta$ and fix $\delta,\gamma,L  > 0$.  Then for  $\gamma \leq \lambda \leq (1 -\delta)\lambda_c(\Delta)$, $u \in V(G)$ and $k \in \N$ satisfying $|k - \E_\lambda |I| | \leq L$, we have \begin{equation}\label{eq:LCLT}
        \P_\lambda( |I| = k_1) = \Theta_{\Delta,\delta,\gamma,L}(n^{-1/2})
    \end{equation}
    %
    %
    %
    \begin{equation}\label{eq:L-TP}
    \left|\P_{\lambda}(|I| = k \,|\,u) - \P_\lambda(|I| = k)\right| = O_{\Delta,\delta,\gamma,L}(n^{-3/2}) 
    \end{equation}
    \begin{equation}
        \left|\P_{\lambda}(|I| = k \,|\,\ubar) - \P_\lambda(|I| = k)\right| = O_{\Delta,\delta,\gamma,L}(n^{-3/2}) 
    \end{equation}
\end{prop}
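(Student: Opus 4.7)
All three bounds will follow from the Edgeworth expansion of \cref{cor:EE-application} combined with the cumulant stability of \cref{prop:cumulant-stability}. For \eqref{eq:LCLT}, apply \cref{cor:EE-application} directly to $X := |I|$ at $a := k - \E_\lambda X$ with $|a| \leq L$; since $\sigma^2 = \Var_\lambda(X) = \Theta_{\Delta,\delta,\gamma}(n)$, we have $a^2/(2\sigma^2) = O(1/n)$, so the leading term $e^{-a^2/(2\sigma^2)}/(\sqrt{2\pi}\sigma) = \Theta(n^{-1/2})$ dominates the $O(n^{-3/2})$ error, giving the two-sided estimate.

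For \eqref{eq:L-TP} and its $\bar u$-counterpart, introduce $X_1 := |I| \bigm| u \in I$ and $X_2 := |I| \bigm| u \notin I$. The identity
\[
\P_\lambda(|I| = k \,|\, u) - \P_\lambda(|I| = k) = \P_\lambda(\bar u)\bigl(\P(X_1 = k) - \P(X_2 = k)\bigr),
\]
together with its symmetric counterpart, reduces both remaining bounds to the single estimate $|\P(X_1 = k) - \P(X_2 = k)| = O(n^{-3/2})$. Now, $X_1 - 1$ is the size of a hard-core sample on $G \setminus N[u]$ at activity $\lambda$ and $X_2$ is the size of a hard-core sample on $G \setminus \{u\}$ at activity $\lambda$; both subgraphs have $n - O_\Delta(1) = \Theta(n)$ vertices and maximum degree at most $\Delta$, so \cref{cor:EE-application} applies to each. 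Letting $\mu_i := \E X_i$ and $\sigma_i^2 := \Var(X_i)$, \cref{prop:cumulant-stability} at $j \in \{1,2\}$ gives $|\mu_1 - \mu_2|, |\sigma_1^2 - \sigma_2^2| = O_{\Delta,\delta}(1)$, and in particular $|k - \mu_i| \leq L + O(1)$ so \cref{cor:EE-application} applies at the shift $k - \mu_i$.

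It remains to compare the two Gaussian densities $e^{-(k-\mu_i)^2/(2\sigma_i^2)}/(\sqrt{2\pi}\sigma_i)$. A short Taylor estimate using $|\mu_1 - \mu_2|, |\sigma_1^2 - \sigma_2^2| = O(1)$ and $\sigma_i^2 = \Theta(n)$ shows the exponents differ by $O(1/n)$, hence the exponentials differ by $O(1/n)$, yielding $O(n^{-3/2})$ after dividing by $\sigma_i = \Theta(\sqrt{n})$; the prefactor difference $|1/\sigma_1 - 1/\sigma_2| = |\sigma_1^2 - \sigma_2^2|/(\sigma_1\sigma_2(\sigma_1+\sigma_2))$ is itself $O(n^{-3/2})$. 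The main obstacle is that the Gaussian baseline is $\Theta(n^{-1/2})$, so a weaker ($o(1)$-quality) comparison of means and variances would be insufficient; the $O(1)$-stability of the first two cumulants furnished by \cref{prop:cumulant-stability}, itself a consequence of the multivariate zero-free region of Peters--Regts, is precisely what is needed to beat the baseline.
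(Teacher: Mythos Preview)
Your proposal is correct and follows essentially the same approach as the paper's proof: both apply \cref{cor:EE-application} directly for \eqref{eq:LCLT}, use the law of total probability to reduce \eqref{eq:L-TP} and its companion to the single estimate $|\P(X_1=k)-\P(X_2=k)|=O(n^{-3/2})$ with $X_1=|I|\,|\,u$ and $X_2=|I|\,|\,\bar u$, invoke \cref{prop:cumulant-stability} for $j=1,2$ to get $|\mu_1-\mu_2|,\,|\sigma_1^2-\sigma_2^2|=O_{\Delta,\delta}(1)$, and then compare the two Gaussian leading terms from \cref{cor:EE-application} to conclude.
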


\begin{proof}[Proof of Proposition \ref{prop:hit-k}]
    The first line \eqref{eq:LCLT} follows immediately from \cref{cor:EE-application}. We prove \eqref{eq:L-TP}, noting that the remaining bound follows by a similar argument. By the law of total probability, it suffices to prove
    \begin{equation}
    \label{eqn:l-tp-1}
    \left|\P_{\lambda}(|I| = k \,|\,u) - \P_\lambda(|I| = k \,|\, \ubar)\right| = O_{\Delta,\delta,\gamma,L}(n^{-3/2})
    \end{equation}
    We bound

\begin{equation}\label{eq:mean-unconditional}
    \left|\E_\lambda[|I|] - \E_\lambda[|I| \,|\, u] \right| = \P_\lambda(\ubar)||\E_\lambda[|I| \,|\, \ubar] - \E_\lambda[|I| \,|\, u] | = O_{\Delta,\delta}(1),
    \end{equation}
    where the last bound is by \cref{prop:cumulant-stability} with $j = 1$. Let $|I|$ denote the size of the independent set sampled from the hard-core model on $G$ at activity $\lambda$, let $X$ denote the random variable $|I| \,|\, u$ and $X'$ the random variable $|I| \,|\, \ubar$.  Define $\mu = \E X$, $a = k - \mu$ and $\sigma^2 = \Var(X)$, and define $a',\mu',\sigma'$ analogously for $X'$.   \cref{prop:cumulant-stability} implies the bounds
    \begin{align*}
        |\mu - \mu'| &= |\kappa_1(X) - \kappa_1(X')| = O_{\Delta,\delta}(1)\\
        |\sigma - \sigma'| &= \frac{|\sigma^2 - \sigma'^{2}|}{|\sigma + \sigma'|} = \frac{|\kappa_2(X) - \kappa_2(X')}{|\sigma+\sigma'|} = O_{\Delta,\delta,\gamma}(1),
    \end{align*}
    where we have used that $\sigma, \sigma' = \Theta_{\delta,\Delta,\gamma}(\sqrt{n})$ \cite[Lemma~3.2]{jain2022approximate}.
    Therefore,
    $$\left|\frac{1}{\sigma} - \frac{1}{\sigma'}\right| = O_{\Delta,\delta,\gamma}(n^{-3/2}).$$ 
    Combining the first of these bounds with \eqref{eq:mean-unconditional} shows $|a| = O_{\Delta,\delta,L}(1)$ and $|a'| = O_{\Delta,\delta,L}(1)$. Therefore,
    \begin{align*}
        \left|\frac{e^{-a^2/2\sigma^2}}{\sigma} - \frac{e^{-a'^2/2\sigma'^2}}{\sigma'}\right| &\leq \left|\frac{1}{\sigma} - \frac{1}{\sigma'}\right| + \left|\frac{a^2}{2\sigma^3} - \frac{a^2}{2\sigma'^3}\right| + O_{\Delta,\delta, \gamma, L}(n^{-5/2})\\
        &= O_{\Delta,\delta,\gamma,L}(n^{-3/2}).
    \end{align*}
    \cref{cor:EE-application} now implies \eqref{eqn:l-tp-1}, which finishes the proof. 
\end{proof}

\subsection{Spectral gap of the down-up walk}
\label{sec:spectral-gap-independent-set}
We now combine \cref{th:SI-slice} with standard machinery from the theory of spectral independence to deduce that the down-up walk has optimal spectral gap $\Omega_{\delta,\Delta}(1/k)$ for all $k \leq (1-\delta)\alpha_c(\Delta)n$. 

\begin{theorem}
\label{thm:spectral-gap}
    Let $\Delta \geq 3$ and $\delta \in (0,1)$. For a graph $G = (V,E)$ on $n$ vertices of maximum degree at most $\Delta$ and an integer $1 \leq k\leq (1-\delta)\alpha_c(\Delta)n$, the down-up walk on independent sets of size $k$ satisfies a Poincar\'e inequality with constant $\Omega_{\delta,\Delta}(1/k)$. 
\end{theorem}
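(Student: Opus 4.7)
The plan is to split the argument into two regimes based on the magnitude of $k$. When $k \leq c_{\Delta,\delta} n$ for a sufficiently small constant $c_{\Delta,\delta} > 0$ (e.g.\ $c_{\Delta,\delta} = 1/(2\Delta+2)$), the Bubley--Dyer path-coupling analysis \cite{bubley1997path} gives a one-step Wasserstein contraction of rate $1 - \Omega_{\Delta,\delta}(1/k)$ for the down-up walk, which directly implies a Poincar\'e inequality with constant $\Omega_{\Delta,\delta}(1/k)$ for this reversible chain. I would dispense with this case at the outset and then focus the remainder of the argument on the main range $k = \Omega_{\Delta,\delta}(n)$.

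For $k = \Omega_{\Delta,\delta}(n)$ with $k \leq (1-\delta)\alpha_c(\Delta)n$, the plan is to invoke the Alev--Lau local-to-global framework \cite{alev2020improved}. The simplicial complex underlying the HDX down-up walk (\cref{sec:prelims}) has faces given by partial independent sets; a face $\tau$ corresponds to pinning a set $S_{\mathrm{in}} \subseteq V$ into the independent set (and, if any removal steps have been taken, pinning a set $S_{\mathrm{out}} \subseteq V$ out), yielding a link distribution equal to $\mu_{k-|S_{\mathrm{in}}|}$ on the induced subgraph $G' = G \setminus (N[S_{\mathrm{in}}] \cup S_{\mathrm{out}})$. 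Since $G'$ has maximum degree at most $\Delta$ and since $\alpha_c(\Delta)(\Delta+1) < 1$, subtracting $|S_{\mathrm{in}}|$ from the target cardinality while deleting at most $(\Delta+1)|S_{\mathrm{in}}|$ vertices from the graph strictly improves the inequality, so $k - |S_{\mathrm{in}}| \leq (1-\delta')\alpha_c(\Delta)|V(G')|$ holds uniformly across all links, for some $\delta' = \delta'(\Delta,\delta) > 0$. Applying \cref{th:SI-slice} at each link then yields uniform $O_{\Delta,\delta}(1)$-$\ell_\infty$-independence, provided the effective density $(k - |S_{\mathrm{in}}|)/|V(G')|$ stays above the hypothesis threshold $\gamma$ of that theorem; for the residual links where the density falls below $\gamma$, I would invoke Bubley--Dyer applied to the link (or, equivalently, a direct correlation-decay bound valid at small densities) to recover the same uniform $\ell_\infty$-bound. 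With uniform $\ell_\infty$-independence in hand at every link, the standard bootstrap to the spectral gap of the down-up walk (see, e.g., \cite{anari2020spectral, alev2020improved, CLV20}) produces the desired $\Omega_{\Delta,\delta}(1/k)$.

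The main obstacle I expect is precisely the uniform propagation of $\ell_\infty$-independence to every link. For generic spin systems on bounded-degree graphs, pinning can push the system out of the regime in which one's base $\ell_\infty$-independence estimate applies; here, however, the inequality $\alpha_c(\Delta)(\Delta+1) < 1$ ensures that the hypothesis of \cref{th:SI-slice} is \emph{preserved}—in fact strictly improved—under pinning into the independent set, so the local-to-global reduction is lossless. The mild boundary issue of small effective density at deep links is absorbed by the Bubley--Dyer regime, which cleanly handles exactly that case.
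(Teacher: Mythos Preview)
Your two-regime split and the observation that pinning preserves the hypothesis of \cref{th:SI-slice} (since $\alpha_c(\Delta)(\Delta+1) < 1$) match the paper's proof. The gap is in the final step: the ``standard bootstrap'' you invoke does not deliver $\Omega_{\Delta,\delta}(1/k)$. The Alev--Lau / Anari--Liu--Oveis Gharan local-to-global theorem, fed with a uniform bound $\eta = O_{\Delta,\delta}(1)$ on the spectral independence at every link, yields a spectral gap of order
\[
\frac{1}{k}\prod_{i=0}^{k-2}\Bigl(1 - \frac{\eta}{k-i-1}\Bigr) \asymp k^{-\eta - 1},
\]
which is polynomial but not optimal. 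The sharper conclusions in \cite{CLV20} that would upgrade this to $\Omega(1/k)$ rely on marginals being uniformly bounded below under arbitrary pinnings, and $\mu_k$ does \emph{not} enjoy this property: after pinning $k-1$ vertices into the independent set, every remaining marginal is $O(1/n)$. (Also, Bubley--Dyer gives a contractive coupling and hence a spectral gap for the low-density links; it does not directly hand you an $\ell_\infty$-influence bound to feed into the product, though that is a secondary issue.)

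The paper combines the same ingredients differently. Rather than running the local-to-global product all the way to the bottom, it uses spectral independence only at links of depth $\ell \leq k - c_\Delta k$ to obtain approximate conservation of \emph{variance} along the pinning process \cite[Proposition~21, Fact~23]{chen2022localization}, and then chooses $c_\Delta$ so that at depth $\ell = k - \lfloor c_\Delta k\rfloor$ the effective density $(k-\ell)/(n - \ell(\Delta+1))$ drops below $1/(3\Delta)$. At that point Bubley--Dyer applies directly to the link and gives spectral gap $\Omega_\Delta(1/n)$. The Chen--Eldan annealing theorem \cite[Theorem~46]{chen2022localization} then transports this gap back to $\mu_k$ with only an $O_{\Delta,\delta}(1)$ loss, yielding $\Omega_{\Delta,\delta}(1/n) = \Omega_{\Delta,\delta}(1/k)$ since $k = \Theta_\Delta(n)$ in this regime. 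In short: you use Bubley--Dyer only to patch the low-density links and then run the lossy product over all $k$ levels; the paper uses Bubley--Dyer as the \emph{base case} and spectral independence only to control the (constant-factor) reduction down to it.
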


\begin{proof}
It suffices to prove the statement only for $k\geq n/(3\Delta)$, since for $k \leq n/(3\Delta)$, the statement follows from the work of Bubley and Dyer \cite{bubley1997path} along with a standard inequality relating contractive couplings and spectral gaps (see \cite[Theorem~13.1]{levin2017markov}): concretely, \cite{bubley1997path} shows that for any graph $H$ on $m$ vertices of maximum degree $\Delta$, for any $s\leq m/(3\Delta)$, and for $\mu_s^H$ the uniform distribution on $\mathcal{I}_s(H)$, the spectral gap of the down-up walk on $\mathcal{I}_s(H)$ is $\Omega_{\Delta}(1/k)$.

We will use \cref{th:SI-slice} to lift this to the entire range $k \leq (1-\delta)\alpha_c(\Delta)n$. To this end, let $G$ be as in the statement of the theorem and let $n/(3\Delta) \leq k \leq (1-\delta)\alpha_{c}(\Delta)n$. Note that for any independent set $J = \{v_1,\dots, v_{\ell}\} \subset V(G)$ of size $\ell \leq k - k\cdot c_\Delta$ (for a function $c_{\Delta} > 0$ of $\Delta$, to be specified later) we can view the conditional distribution $\mu_k(\cdot \,|\, I \supseteq J)$ as $\mu_{k-\ell}^{H}$, where $H$ is a graph with maximum degree at most $\Delta$ and with at least $n - \ell(\Delta+1)$ vertices. Since $k\leq n/(\Delta+1)$, we have 
\[c_{\Delta}\cdot \frac{k}{n} \leq \frac{k-\ell}{n-\ell(\Delta + 1)} \leq \frac{k}{n} \leq (1-\delta)\alpha_c(\Delta).\]
Therefore, by \cref{th:SI-slice}, $\mu_{k-\ell}^{H}$ is $O_{\Delta,\delta}(1)$-spectrally independent.

Let $\nu = \mu_{k}$ and consider the measures $\nu_1,\dots, \nu_k$ defined as follows: let $I \sim \nu$ and let $\{j_1,\dots, j_k\}$ be a random permutation of $[k]$, independent of $I$. Write $I = \{v_1,\dots, v_k\}$, where $v_1 < \dots < v_k$. For $i \in \{1,\dots,k\}$, define $\nu_i$ to be the law of {$I$ conditioned on $v_{j_1},\ldots,v_{j_i} \in I$}. 
Let $\nu_{t} = \nu_{\max(t,k)}$. By the $O_{\delta,\Delta}(1)$-spectral independence of $\mu^{H}_{k-\ell}$ established in the previous paragraph and \cite[Proposition~21, Fact~23]{chen2022localization}, it follows that for any $f:\mathcal{I}_k(G) \to \R$ and any $\ell \leq k-k\cdot c_\Delta$ 
\begin{equation}
\label{eqn:approx-conservation-variance}
    \frac{\E[\Var_{\nu_{k-\ell}}[f]]}{\Var_{\nu}[f]} = \Omega_{\delta,\Delta}(1).  
\end{equation}
Moreover, by the discussion above, it follows that almost surely, the measure $\nu_{k-\ell}$ can be represented as $\mu_{k-\ell}^{H}$ for a graph $H$ with maximum degree $\Delta$ and $m \geq n - \ell(\Delta+1)$ vertices. Let $c_\Delta = (1.01\Delta-3)/(6\Delta-3)$; note that for $\Delta \geq 3$, this is uniformly bounded away from $0$ and $1$. Let $\ell = k - \lfloor k\cdot c_{\Delta}\rfloor$. For $\Delta \geq 3$, since $k/n \leq \alpha_{c}(\Delta) \leq 0.74/\Delta$, it follows that
\begin{equation*}
    \frac{k-\ell}{n-\ell(\Delta+1)} \leq \frac{1}{3\Delta}. 
\end{equation*}
Therefore, it follows from the result of \cite{bubley1997path} mentioned in the first paragraph of the proof that with $c_\Delta$ and $\ell$ as above, almost surely, the spectral gap of the down-up walk on $\nu_{k-\ell}$ is $\Omega_{\Delta}(1/n)$. Combining this with \eqref{eqn:approx-conservation-variance} and the annealing technique of Chen and Eldan \cite[Theorem~46]{chen2022localization} shows that the spectral gap of the down-up walk on $\mu_k$ is
\[\Omega_{\delta,\Delta}(1)\cdot \Omega_{\Delta}(1/n) = \Omega_{\delta,\Delta}(1/n)\]
as desired. \qedhere
\end{proof}

\section{Initial reductions for the log-Sobolev inequality}
\label{sec:LSI}

The main goal of this section is to show that in order to prove \cref{thm:main-lsi}, it is enough to prove an LSI for $k$ small enough (\cref{lem:reduce to low k}).  Throughout this section, for independent sets $I,J$ we write $I \sim J$ if $I$ and $J$ can be obtained from each other by a single down-up step.

\subsection{Reduction to small $k$} We begin by showing (\cref{lem:reduce to low k}) that it is sufficient to prove a log-Sobolev inequality for the down-up walk for $\mu_k$ for $k$ sufficiently small. We begin with a simple lemma. 
\begin{lemma}\label{lem:marginal lower bound}
Let $G$ be a graph on $n$ vertices with maximum degree $\Delta$. For any $\gamma n \leq k \leq (1-\delta)\alpha_c(\Delta)n$ and any $u \in V(G)$, 
\[\min\{\mu_k[u], \mu_k[\ubar]\} = \Omega_{\delta,\Delta,\gamma}(1).\]
\end{lemma}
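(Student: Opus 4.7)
The plan is to lift the bound from the hard-core model, exactly as in the strategy used for spectral independence. First, apply \cref{lem:finding-activity} to select an activity $\gamma' \leq \lambda \leq (1-\delta')\lambda_c(\Delta)$, with $\gamma', \delta'$ depending only on $\gamma, \delta, \Delta$, such that $\E_{\lambda}[|I|] = k$. This realizes $\mu_k$ as the conditional distribution $\mu_\lambda(\,\cdot \mid |I| = k)$, so for any $u \in V(G)$,
\[
\mu_k[u] \;=\; \P_\lambda(u) \,\cdot\, \frac{\P_\lambda(|I| = k \mid u)}{\P_\lambda(|I| = k)},
\]
and similarly for $\mu_k[\ubar]$. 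By \cref{prop:hit-k}, both $\P_\lambda(|I| = k \mid u)$ and $\P_\lambda(|I|=k \mid \ubar)$ equal $\P_\lambda(|I|=k) + O_{\Delta,\delta,\gamma}(n^{-3/2})$, and $\P_\lambda(|I|=k) = \Theta_{\Delta,\delta,\gamma}(n^{-1/2})$. Hence the two ratios on the right-hand side above are each $1 + o(1)$, and the problem reduces to showing $\P_\lambda(u), \P_\lambda(\ubar) = \Omega_{\Delta,\delta,\gamma}(1)$.

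The key quantity is the occupancy ratio $R_u = \P_\lambda(u)/\P_\lambda(\ubar) = \lambda Z_{G^u}(\lambda)/Z_{G^{\ubar}}(\lambda)$, since $\P_\lambda(u) = R_u/(1+R_u)$ and $\P_\lambda(\ubar) = 1/(1+R_u)$. The upper bound $R_u = O_{\Delta,\delta}(1)$ is exactly the content of \cref{lemma:bounded} at $t = 0$, and gives $\P_\lambda(\ubar) = \Omega_{\Delta,\delta}(1)$. For the matching lower bound on $R_u$, I would decompose an independent set in $G^{\ubar}$ by its intersection $S$ with $N(u)$:
\[
Z_{G^{\ubar}}(\lambda) \;=\; \sum_{\substack{S \subseteq N(u)\\ S \in \mathcal{I}(G)}} \lambda^{|S|}\, Z_{G \setminus (\{u\} \cup N(u) \cup N(S))}(\lambda).
\]
Each partition function on the right is at most $Z_{G^u}(\lambda)$ (deleting vertices only decreases $Z$), so summing the $\lambda^{|S|}$'s over subsets $S \subseteq N(u)$ yields $Z_{G^{\ubar}}(\lambda) \leq (1+\lambda)^\Delta Z_{G^u}(\lambda)$, and therefore $R_u \geq \lambda/(1+\lambda)^\Delta = \Omega_{\Delta,\gamma}(1)$ since $\lambda \geq \gamma'$. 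Combining the two bounds gives $\P_\lambda(u), \P_\lambda(\ubar) = \Omega_{\Delta,\delta,\gamma}(1)$, which together with the ratio bound from \cref{prop:hit-k} yields the claimed lower bound on $\min\{\mu_k[u], \mu_k[\ubar]\}$.

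The main conceptual obstacle is really already handled elsewhere: one needs the fine estimate $|\P_\lambda(|I|=k \mid u) - \P_\lambda(|I|=k)| = O(n^{-3/2})$ from \cref{prop:hit-k} rather than just an LCLT, because the marginal $\P_\lambda(u)$ can be $\Theta(1)$ while we need the conditional probability of hitting $k$ to only perturb the unconditional one by a $1 + o(1)$ factor. Beyond invoking that proposition, the remaining work is the short combinatorial lower bound on the occupancy ratio, which is self-contained.
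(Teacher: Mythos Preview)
Your proof is correct and follows the same route as the paper: reduce $\mu_k[u],\mu_k[\ubar]$ to the hard-core marginals $\P_\lambda(u),\P_\lambda(\ubar)$ via \cref{lem:finding-activity} and \cref{prop:hit-k}, then bound those. The paper simply cites an external reference for $\min\{\P_\lambda(u),\P_\lambda(\ubar)\}=\Omega_{\Delta,\delta,\gamma}(1)$, whereas you supply a self-contained argument (upper bound on $R_u$ from \cref{lemma:bounded}, lower bound from the elementary inequality $Z_{G^{\ubar}}(\lambda)\le(1+\lambda)^\Delta Z_{G^u}(\lambda)$); one small remark is that your closing comment overstates what is needed---the ratio $\P_\lambda(|I|=k\mid u)/\P_\lambda(|I|=k)$ only has to be $\Theta(1)$, not $1+o(1)$, so the LCLT-level estimate $\Theta(n^{-1/2})$ (line~\eqref{eq:LCLT} of \cref{prop:hit-k}) already suffices without invoking the finer $O(n^{-3/2})$ bound.
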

\begin{proof}
By \cref{lem:finding-activity} and \cref{prop:hit-k}, $\min\{\mu_k[u], \mu_k[\ubar]\}$ is within a constant factor of the corresponding quantity for the hard-core model at activity $\lambda = \Theta_{\gamma, \Delta,\delta}(1)$; the latter quantity is easily seen to be $\Omega_{\delta,\Delta,\gamma}(1)$ (see, e.g., \cite[Proposition 50]{AJKPV21b}).
\end{proof}

Recall the notation $\nu = \mu_k$, $\nu_1,\dots, \nu_k$ from the proof of \cref{thm:spectral-gap}. Combining the bounded-marginal property (\cref{lem:marginal lower bound}) with \cref{th:SI-slice}, it follows from \cite[Proposition~35, Theorem~42]{chen2022localization} (see also \cite{CLV20}) that for any $k = \Omega_{\Delta}(n)$, $\ell \leq k$ satisfying $k-\ell = \Omega_{\Delta}(k)$, and any non-negative function $f$, 
\begin{equation}
\label{eqn:entropy-conservation}
    \frac{\E[\Ent_{\nu_{k-\ell}}[f]]}{\Ent_{\nu}[f]} = \Omega_{\delta,\Delta}(1)
\end{equation}

We will show the following. 
\begin{prop}
\label{lem:reduce to low k}
There exist constants $C = O(1)$ and $C' = O_{\Delta}(1)$ for which the following holds: for all graphs $G$ on $n$ vertices with maximum degree $\Delta$ and for any $n/C' \leq k \leq n/(C\Delta^{8})$, the down-up walk on $\mu_k$ has log-Sobolev constant $\Omega_{\Delta}(1/n)$. 
\end{prop}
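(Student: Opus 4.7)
The plan is to prove \cref{lem:reduce to low k} by combining the spectral independence and bounded-marginal estimates for $\mu_k$ (from \cref{th:SI-slice} and \cref{lem:marginal lower bound}) with the Lee-Yau-type log-Sobolev inequality \cref{thm:lee-yau}, via an average-case annealing argument in the spirit of \cite{chen2022localization, CLV20}. The overarching idea is to run a random reveal process on $\mu_k$ whose residual distributions eventually live on graphs with the ``linearly many, log-size components'' structure required by \cref{thm:lee-yau}, while maintaining $\ell_\infty$-spectral independence and bounded marginals throughout so that entropy may be transported from the residual back to $\mu_k$ at only a constant multiplicative cost.

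Concretely, I would introduce the vertex-reveal annealing $\pi_0 = \mu_k, \pi_1, \ldots, \pi_n$ in which at each step one uniformly random unrevealed vertex $v \in V(G)$ is selected and its status (IN or OUT in the sample $I \sim \mu_k$) is disclosed; write $U_t$ for the set of unrevealed vertices at time $t$.  The first task is to verify that at every intermediate level $t \leq (1-\eta) n$, with $\eta = \eta(\Delta) > 0$ a suitably small constant, the residual law $\pi_t$ is itself a cardinality-constrained hard-core distribution on a residual graph $G_t$ (obtained from $G[U_t]$ by deleting neighbors of the IN reveals), that it satisfies $\ell_\infty$-spectral independence with constant $O_\Delta(1)$ (via \cref{th:SI-slice} applied to $G_t$), and that it satisfies bounded marginals with constant $\Omega_\Delta(1)$ (via \cref{lem:marginal lower bound} together with an Azuma-type concentration inequality for the bounded-difference martingales tracking the IN and OUT reveal counts). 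The restriction $k \leq n/(C\Delta^8)$ is critical here: it leaves enough slack to keep the effective residual density well inside $[\Omega_\Delta(1),(1-\delta)\alpha_c(\Delta)]$ despite the concentrating effect of OUT reveals. With these hypotheses in place, the local-to-global/annealing machinery of \cite[Theorem 42]{chen2022localization} supplies an averaged entropy factorization
\[\Ent_{\mu_k}[f] \leq C_\Delta \cdot \E\big[\Ent_{\pi_{t^*}}[f]\big]\]
at $t^* = (1-\eta)n$, together with a matching Dirichlet-form comparison bounding the residual Dirichlet form (averaged over reveal histories) by a constant multiple of $\mathcal{E}_{\mu_k}(\sqrt{f}, \sqrt{f})$.

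The second task is to analyze the random graph $G[U_{t^*}]$. Since $U_{t^*}$ is a uniformly random $\eta n$-subset of $V(G)$ and $G$ has maximum degree at most $\Delta$, for $\eta$ a sufficiently small multiple of $1/\Delta$ the exploration from any vertex of $G[U_{t^*}]$ is stochastically dominated by a strongly subcritical Galton-Watson branching process with offspring distribution $\mathrm{Bin}(\Delta, \eta)$, so a union bound yields that every connected component of $G[U_{t^*}]$ has size $O_\Delta(\log n)$ with high probability. Removing the neighborhoods of IN reveals only splits components further, so on this good event the residual graph underlying $\pi_{t^*}$ has $\Omega_\Delta(n/\log n)$ components, each of size $O_\Delta(\log n)$, fitting the hypothesis of \cref{thm:lee-yau}. \cref{thm:lee-yau} then gives $\rho_{LS}(\pi_{t^*}) = \Omega_\Delta(1/n)$ on the good event, and chaining this through the averaged entropy factorization and Dirichlet comparison from the previous step yields $\rho_{LS}(\mu_k) = \Omega_\Delta(1/n)$ as required.

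The main technical obstacles I foresee are twofold. First, verifying the hypotheses of \cite[Theorem 42]{chen2022localization} uniformly along the reveal schedule requires a delicate accounting of the effective slice density, since OUT reveals shrink the graph without removing ``mass'' from the slice and hence push the density upward; the proof must exploit the smallness of $k$ to guarantee the bounded-marginal bound survives every level up to $t^*$. Second, one must handle the low-probability event on which $G[U_{t^*}]$ contains a large component, since the trivial lower bound on $\rho_{LS}$ of a hard-core slice can be as small as $e^{-O(n)}$. The natural resolutions are either to tune the subcriticality parameter $\eta$ to drive the bad-event probability below $e^{-\omega(n)}$ (so it is absorbed by the trivial bound), or to augment the reveal with an adaptive stopping rule that guarantees a good residual structure almost surely; I expect the cleanest route to be the adaptive one, as in the treatment of \cite{CLV20}.
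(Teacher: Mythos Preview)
Your high-level plan---anneal down to a residual graph with small components, invoke the Lee--Yau-type \cref{thm:lee-yau} there, and handle the rare bad residuals with a crude bound---is exactly the paper's. But your choice of localization process introduces a genuine gap that the paper sidesteps by a simple but crucial twist. You reveal a random vertex together with its IN/OUT status; the paper instead passes to the \emph{complement} $\bar\mu_k$ on $\binom{[n]}{n-k}$ and runs the homogeneous down-localization there, i.e.\ it pins only OUT vertices. The payoff is that every conditional along the paper's process is literally $\mu_k^{H}$ for an induced subgraph $H$ with $|H|\ge \ell=\lfloor n/\Delta^6\rfloor$, so the occupancy ratio lies \emph{deterministically} in $[k/n,\,k/\ell]\subset[\Omega_\Delta(1),\,(1-\delta)\alpha_c(\Delta)]$. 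Thus \cref{th:SI-slice} and \cref{lem:marginal lower bound} apply at every step with no exceptional set, and \cite[Theorem~42]{chen2022localization} gives the entropy factorization directly; moreover, on the residual-structure bad event (``$H_T$ not $\Delta$-good'', probability $\le n^{-50}$ by \cref{lem:connected-components}) the density is still in range, so the polynomial crude bound $\rho_T^{-1}=O_\Delta(n^2)$ of \cref{lem:crude-LSI} is available and is beaten by the $n^{-50}$.

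Under your reveal, a run of OUT disclosures can push the residual density past $(1-\delta)\alpha_c(\Delta)$; this happens with probability only $e^{-\Omega(n)}$, but on that event neither \cref{th:SI-slice} nor \cref{lem:crude-LSI} applies (both require the density hypothesis), and the only fallback inverse LSI is $e^{O(n)}$, so the contribution is not controlled. Your proposed remedy of driving the large-component event below $e^{-\omega(n)}$ also fails: survival of a subcritical branching process to depth $\Theta(\log n)$ has only polynomially small probability, no matter how small you take $\eta$. Two smaller items: $\Delta$-goodness demands $\Omega_\Delta(n)$ components, not the $\Omega_\Delta(n/\log n)$ your sketch produces---the paper supplies a separate Azuma argument for this in \cref{lem:connected-components}; and the Dirichlet-form comparison is carried out explicitly via a good/bad split on $\rho_T^{-1}$, which is precisely where the crude bound and the $n^{-50}$ probability meet.
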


Given \cref{lem:reduce to low k} and \cref{eqn:entropy-conservation}, \cref{thm:main-lsi} follows from \cite[Proposition~48]{chen2022localization}: for any real-valued function $f$,
\begin{align*}
n\mathcal{E}_{P_\nu}(f,f)
&\geq n\E[\mathcal{E}_{P_{\nu_{k-\ell}}}(f,f)]\\
&\gtrsim_{\Delta} \E[\Ent_{\nu_{k-\ell}}[f^2]]\\
&\gtrsim_{\delta,\Delta}\Ent_{\nu}[f^2],
\end{align*}
where the first line follows from \cite[Proposition~48]{chen2022localization}, the second from \cref{lem:reduce to low k}, and the third from \cref{eqn:entropy-conservation}. 
\subsection{Reduction to graphs with small connected components} Next, we show that it is enough to prove \cref{lem:reduce to low k} in the case when the underlying graph $G$ has linearly many connected components, each of size $O_\Delta(\log{n})$. We will need the following crude bound on the log-Sobolev constant for the down-up walk.

\begin{lemma}
\label{lem:crude-LSI}
Let $G$ be a graph on $n$ vertices with maximum degree $\Delta$. For any $k \leq (1-\delta)\alpha_c(\Delta) n$, the down-up walk on $\mathcal{I}_k(G)$ satisfies a log-Sobolev inequality with constant $\Omega_{\delta,\Delta}(k^{-2}\log(n/k)^{-1})$.
\end{lemma}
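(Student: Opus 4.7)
The plan is to deduce this from the spectral gap estimate already established in \cref{thm:spectral-gap} together with the classical comparison between the log-Sobolev constant and the Poincaré constant due to Diaconis and Saloff-Coste. Recall their inequality: for any reversible, ergodic Markov chain $P$ with stationary distribution $\mu$ and Poincaré constant $\gamma$,
\begin{equation*}
\rho_{LS} \;\geq\; \frac{(1-2\mu_*)\,\gamma}{\log\!\bigl((1-\mu_*)/\mu_*\bigr)},
\end{equation*}
where $\mu_* = \min_{x \in \Omega} \mu(x)$. In particular, when $\mu_*$ is small one has $\rho_{LS} = \Omega(\gamma / \log(1/\mu_*))$.

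For the down-up walk on $\mathcal{I}_k(G)$ with stationary measure $\mu_k$, the minimum stationary probability is $\mu_* = 1/|\mathcal{I}_k(G)| \geq 1/\binom{n}{k}$, so
\begin{equation*}
\log(1/\mu_*) \;\leq\; \log\binom{n}{k} \;\leq\; k\log(en/k).
\end{equation*}
Since $k \leq (1-\delta)\alpha_c(\Delta)n$ and $\alpha_c(\Delta) \leq \frac{1}{\Delta+1} \leq \frac{1}{4}$ (using $\Delta \geq 3$), we have $n/k \geq 4/(1-\delta) \geq 4$, so $\log(en/k)$ and $\log(n/k)$ agree up to a universal constant factor. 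Hence $\log(1/\mu_*) = O(k\log(n/k))$.

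By \cref{thm:spectral-gap}, the spectral gap of the down-up walk satisfies $\gamma = \Omega_{\delta,\Delta}(1/k)$. Plugging these two bounds into the Diaconis--Saloff-Coste inequality yields
\begin{equation*}
\rho_{LS} \;=\; \Omega_{\delta,\Delta}\!\left(\frac{1/k}{k\log(n/k)}\right) \;=\; \Omega_{\delta,\Delta}\!\left(\frac{1}{k^2 \log(n/k)}\right),
\end{equation*}
as desired. There is no real obstacle here: the work was already done in establishing \cref{th:SI-slice,thm:spectral-gap}, and the present lemma is a routine application of the standard $\rho_{LS}$-vs-$\gamma$ comparison combined with the trivial upper bound $|\mathcal{I}_k(G)| \leq \binom{n}{k}$. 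The only thing to verify is that $\log\binom{n}{k}$ does not outpace $k\log(n/k)$, which follows from the hypothesis $k \leq (1-\delta)\alpha_c(\Delta)n$ keeping $n/k$ bounded away from $1$.
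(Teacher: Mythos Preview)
Your proof is correct and follows essentially the same approach as the paper: invoke the $\Omega_{\delta,\Delta}(1/k)$ spectral gap (the paper cites \cref{th:SI-slice}, you cite the derived \cref{thm:spectral-gap}), bound $\mu_* \geq \binom{n}{k}^{-1} \geq (en/k)^{-k}$, and apply the Diaconis--Saloff-Coste comparison. The only difference is that you spell out why $\log(en/k)$ and $\log(n/k)$ are comparable, which the paper leaves implicit.
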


\begin{proof}
    By \cref{th:SI-slice}, the down-up walk has spectral gap $\Omega_{\Delta,\delta}(1/k)$. Moreover, for any $I \in \mathcal{I}_k(G)$, $\mu_k(I) \geq \binom{n}{k}^{-1} \geq (en/k)^{-k}$. The assertion now follows immediately from a standard comparison between the spectral gap and log-Sobolev constant (\cite[Corollary~A.4]{diaconis1996logarithmic}). 
\end{proof}

We will also need the following definition and elementary combinatorial lemma.

\begin{definition}
    \label{def:good-graph}
Let $G$ be a graph on $n$ vertices and $\Delta \geq 3$. We say that $G$ is $\Delta$-good if the maximum degree of $G$ is at most $\Delta$, the maximum connected component of $G$ has size at most $1000\Delta\log{n}$, and moreover, the number of connected components is at least $\Omega_{\Delta}(n)$. %
\end{definition}

\begin{lemma}
\label{lem:connected-components}
    Let $G$ be a graph on $n$ vertices with maximum degree $\Delta$. Let $k \leq n/\Delta^{8}$ and fix a subset $S$ of size $k$ such that all connected components of the induced graph $G|_S$ have size at most $2$.  Let $W$ be a random subset of size $\ell = \lfloor n/\Delta^6\rfloor$ containing $S$. Then, with probability (over $W$) at least $1-n^{-50}$, the graph $G|_{W}$ is $\Delta$-good. 
\end{lemma}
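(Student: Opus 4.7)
The plan is to verify the three conditions in \cref{def:good-graph}. Condition 1 (max degree $\leq \Delta$) is immediate since $G|_W \subseteq G$, so I only need to verify conditions 2 and 3, each with probability at least $1 - n^{-51}$.

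For condition 3 I will show that many vertices of $W$ are isolated in $G|_W$. Let $A = \{u \in V \setminus S : N_G(u) \cap S = \emptyset\}$; since $|V \setminus A| \leq k(\Delta+1) \leq n/\Delta^7$, we have $|A| \geq n/2$. Writing $p := (\ell - k)/(n-k) = \Theta_\Delta(\Delta^{-6})$, a short hypergeometric calculation shows that each $u \in A$ is isolated in $G|_W$ with probability at least $p(1-p)^{\deg(u)} \geq p/2$ (using $\Delta \geq 3$ and $p \leq 2/\Delta^6$), so the expected number of isolated vertices is $\Omega_\Delta(n)$. Swapping a single element of $W \setminus S$ changes this count by $O(\Delta)$, so McDiarmid's bounded-differences inequality for sampling without replacement gives concentration with probability at least $1 - n^{-51}$ (for $n$ sufficiently large in $\Delta$).

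For condition 2 the naive union bound is not enough: a connected $T \subseteq G$ of size $t$ satisfies $\P[T \subseteq W] \leq p^{|T \setminus S|}$, but one can at best guarantee $|T \setminus S| \geq t/(2\Delta+1)$, whereupon $(e\Delta)^t \cdot p^{t/(2\Delta+1)}$ fails to shrink for large $\Delta$. The main idea is to absorb the deterministically included $S$-vertices into cliques and reduce to a standard sub-critical percolation problem. Define the \emph{enhanced plain graph} $G^*$ on vertex set $V \setminus S$ by taking all edges of $G$ among plain vertices, together with, for each component $C$ of $G|_S$, a clique on $N_G(C) \cap (V \setminus S)$. Each such clique has at most $2\Delta - 2$ vertices (since $|C| \leq 2$), giving $\deg_{G^*}(u) \leq 2\Delta^2$ for every plain $u$. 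A path-shortcutting argument (any $S$-stretch along a path in $G|_W$ has length $\leq 2$, and its two plain endpoints share an $S$-component clique in $G^*$) shows that two plain vertices lie in the same connected component of $G|_W$ if and only if they lie in the same connected component of $G^*|_{W \setminus S}$. Writing $C^*(v)$ for the latter, the standard lattice-animal enumeration gives
\[
\P[\,|C^*(v)| \geq t\,] \leq (2e\Delta^2 p)^t \leq (4e/\Delta^4)^t,
\]
and $4e/\Delta^4 < 1$ for $\Delta \geq 3$. Taking $t = C_\Delta \log n$ with $C_\Delta$ a suitably large constant and union-bounding over $v \in V \setminus S$ shows that every plain component of $G|_W$ has size at most $C_\Delta \log n$ with probability at least $1 - n^{-51}$.

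Finally, a short structural argument converts this into the full bound on component sizes. For any connected component $A$ of $G|_W$ with $\tau := |A \cap (V \setminus S)| \geq 1$, let $m$ be the number of $G|_S$-components inside $A$, so $|A \cap S| \leq 2m$. Each such $S$-component needs at least one plain neighbor in $A$ for connectivity, while each plain vertex has at most $\Delta$ $S$-component neighbors, so double counting yields $m \leq \Delta\tau$, whence $|A| \leq (2\Delta+1)\tau$. (If $\tau = 0$ then $A \subseteq S$ is contained in a single component of $G|_S$, so $|A| \leq 2$.) Combining, every component has size at most $(2\Delta+1)C_\Delta \log n \leq 1000\Delta \log n$ upon choosing $C_\Delta$ appropriately. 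The main obstacle is dealing with the forced inclusion $S \subseteq W$: a direct union bound cannot absorb it, but the clique enrichment turns this deterministic contribution into ordinary edges in a bounded-degree graph on which $2e\Delta^2 p = O(1/\Delta^4)$ is comfortably sub-critical.
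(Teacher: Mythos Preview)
Your proof is correct and follows essentially the same strategy as the paper: reduce the component-size bound to sub-critical percolation on an auxiliary bounded-degree graph on $V\setminus S$ in which the plain part of any $G|_W$-component remains connected, then use a structural inequality $|A|\le O(\Delta)\,|A\setminus S|$ to transfer back. The only cosmetic differences are that the paper takes the distance-$3$ power $G^{(3)}$ (max degree $\le \Delta^3$) rather than your clique-enriched graph $G^*$ (max degree $\le 2\Delta^2$), and for the $\Omega_\Delta(n)$ components it bounds expected component sizes and applies Azuma, whereas you more directly count isolated vertices---both arguments are equivalent in spirit and yield the same conclusion.
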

\begin{proof}
Let $C$ be a connected component of $G|_W$. Then for any distinct $u,v \in C$, there exists a shortest path $u = u_0, u_1,\dots, u_m = v$, where $u_0,\dots, u_m \in W$ and $u_i$ is connected to $u_{i+1}$ in $G$. Since $S$ has connected components of size at most $2$, it follows that $C \setminus S$ %
is connected in $G^{(3)}$ -- the graph on the same vertex set as $G$ and with $u,v$ connected if there is a path of length at most $3$ in $G$ from $u$ to $v$. From this argument, it also follows that the the neighborhood in $G$ of $C\setminus S$ contains all but at most $|C\cap S|/2$ vertices of $C$. Since $G$ has maximum degree $\Delta$, it follows that 
\[|C \setminus S| \geq |C|/(10\Delta).\]
It follows that the probability that the maximum connected component of $G|_W$
 is at least $1000\Delta\log{n}$ is bounded above by the probability that the maximum connected component of $G^{(3)}|_{W\setminus S}$ is at least $100\log{n}$. 
Since $G^{(3)}$ is a graph on $n$ vertices of maximum degree at most $\Delta^{3}$ and $W\setminus S$ is a uniformly random subset of $[n]\setminus S$ of size less than $\ell$, this follows immediately from \cite[Lemma~4.3]{CLV20} and the union bound. 

Similarly, for a given vertex $v$, the probability that the connected component of $v$ in $G|_W$ has size at least $h$ is bounded by the probability that the connected component of $v$ in $G^{(3)}|_{W\setminus S}$ has size at least $h/10\Delta$, which is at most $e^{-h/1000\Delta}$ by \cite{CLV20}.  This shows that the expected number of connected components in $G|_W$ is linear in $n$. 
 We then reveal each vertex in $W$ one at a time and consider the number of connected components as a martingale; we see that each revealed vertex can change the number of connected components by at most $O_\Delta(1)$, and so Azuma's inequality shows that the number of components is $\Omega_{\Delta}(n)$ with probability $1 - \exp(-\Omega_\Delta(n)) \le 1-n^{-100}$.%
\end{proof}

In \cref{sec:lee-yau}, we will show the following. 

\begin{prop}
\label{prop:lee-yau-regular}
    There exist constants $C = O(1)$ and $C' = O_{\Delta}(1)$ for which the following holds: for all graphs $G$ on $n$ vertices which are $\Delta$-good and for all $n/C' \leq k \leq n/(C\Delta^{8})$, the down-up walk on $\mathcal{I}_k(G)$ satisfies a log-Sobolev inequality with constant $\Omega_{\Delta}(1/n)$. %
\end{prop}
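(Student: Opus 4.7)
The plan is to prove \cref{prop:lee-yau-regular} by induction on $n$, following the architecture of Lee--Yau's sharp LSI proof on the Boolean slice but adapted to the asymmetric measure $\mu_k$. The inductive hypothesis is that every $\Delta$-good graph $G'$ on $m < n$ vertices and every admissible $k'$ satisfies $\rho_{LS}(\mu_{k'}^{G'}) \ge c/m$ for a constant $c = \Omega_\Delta(1)$ to be chosen at the end.

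For the inductive step, fix a $\Delta$-good $G$ on $n$ vertices. Since $G$ has $\Omega_\Delta(n)$ connected components, each of size $O_\Delta(\log n)$, I pick a component $C$ uniformly at random and use the standard entropy chain-rule with respect to $I \cap C$: writing $\pi_C$ for the marginal of $\mu_k$ on $I \cap C$, one has
\begin{equation*}
\Ent_{\mu_k}[f] = \mathbb{E}_{J \sim \pi_C}\!\left[\Ent_{\mu_k(\cdot \mid I \cap C = J)}[f]\right] + \Ent_{\pi_C}[g_{C,f}],
\end{equation*}
where $g_{C,f}(J) = \mathbb{E}_{\mu_k}[f \mid I \cap C = J]$. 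Conditional on $I \cap C = J$, the induced measure on $I \cap (V \setminus C)$ is $\mu_{k-|J|}^{G \setminus C}$, and $G \setminus C$ is still $\Delta$-good on $n - |C|$ vertices, so the inductive hypothesis (combined with the standard identification of the conditional Dirichlet form as a piece of $\mathcal{E}_{P_{\mu_k}}$) handles the first term and gives a bound of the form $(n-|C|)/c \cdot \mathcal{E}_{P_{\mu_k}}(\sqrt f,\sqrt f)$. The real work is in the second (``projected'') term.

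For the projected term I use the LCLT from \cite{jain2022approximate}, applied locally on $C$, to show that $\pi_C$ coincides, up to a negligible polynomial error, with the marginal on $C$ of the hardcore model on $G$ at the activity $\lambda$ supplied by \cref{lem:finding-activity}. Because $|C| = O_\Delta(\log n)$, the LSI constant of this hardcore marginal under the natural single-site walk on $C$ is $\Omega_\Delta(1)$ by direct comparison with the hardcore model on $C$. What then remains is a \emph{lossless} comparison of Dirichlet forms: I must bound $\mathcal{E}_{P_{\pi_C}}$ of the projected walk by $O_\Delta(1) \cdot \mathcal{E}_{P_{\mu_k}}$, with a constant independent of $|C|$ and $n$.

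The main obstacle is precisely this lossless comparison: a standard Markov-chain decomposition (e.g.\ \cite{jerrum2004elementary}) would introduce a factor equal to the spectral gap of the full chain, which is only $\Omega(1/k)$ and therefore far too small to absorb in the induction. I would overcome this using the novel sharp decomposition theorem advertised as the third ingredient of the abstract: since $k \le n/(C\Delta^8)$ lies well inside the Bubley--Dyer range, the down-up walk admits a contractive path coupling \cite{bubley1997path}, and combining contractivity with Stein's method for Markov chains \cite{bresler2019stein, reinert2019approximating} yields the desired $O_\Delta(1)$-comparison. Feeding both bounds into the entropic decomposition, averaging over $C$ (so the $|C|/n$ gain per step becomes available), and tracking the resulting one-dimensional recurrence via the ``occupancy ratio'' martingale -- controlled by a martingale Bernstein inequality as in the overview -- closes the induction with constant $c = \Omega_\Delta(1)$, proving $\rho_{LS}(\mu_k) \ge c/n$.
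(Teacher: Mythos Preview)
Your proposal is correct and takes essentially the same approach as the paper (via \cref{thm:lee-yau}): the Lee--Yau entropy decomposition over connected components, the comparison of the projected marginal to the hardcore model via the LCLT, the sharp Dirichlet-form comparison using Stein's method together with the Bubley--Dyer contractive coupling (\cref{prop:opt-decomp}), and the Bernstein-type martingale control of the occupancy ratio are exactly the paper's ingredients. One minor point of framing: the paper does not run this as a direct induction on $n$ with a fixed constant $c$ but instead derives a recurrence (\cref{prop:recurrence}) and unrolls it as an adversarial stochastic process---precisely because the occupancy ratio $(k-|J|)/(n-|C|)$ can drift out of the admissible window under repeated conditioning---so your ``induction on $n$'' language is a bit loose, though your invocation of the occupancy-ratio martingale indicates you have the right mechanism in mind.
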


Given these preliminaries, we can prove \cref{lem:reduce to low k}. 

\begin{proof}[Proof of \cref{lem:reduce to low k}]
    Let $\bar{\mu}_k$ denote the complement distribution of $\mu_k$ on $\binom{[n]}{n-k}$, i.e.~for any $J \in \binom{[n]}{n-k}$, $\bar{\mu}_k(J) := \mu_k(\bar{J})$. As in the proof of \cref{thm:spectral-gap}, we construct a process $\bar{\nu} = \bar{\mu}_k, \bar{\nu}_1,\dots, \bar{\nu}_{n-k}$: {in particular, we let $w_1 < \ldots<w_{n - k}$ be a sample of $J$ and set $\bar{\nu}_{i}$ to be the law of $J$ conditioned on $w_{j_1},\ldots,w_{j_i} \in J$ where $\{j_1,\ldots,j_{n-k}\}$ is a random permutation of $[n-k]$}. Let $\ell = \lfloor n/\Delta^6\rfloor $, set $t = n - \ell$ and note that for $r \leq t$, $\bar{\nu}_{r}$ almost surely corresponds to taking the complement of an independent set of size $k$ chosen uniformly at random from a graph $H$ of maximum degree $\Delta$ with $n - (n- r) \geq \ell$ vertices. 
   
    Therefore, by \cref{lem:marginal lower bound}, the measures $\bar{\nu},\bar{\nu}_1,\dots, \bar{\nu}_{t}$ almost surely have uniformly lower bounded marginals. Hence, by \cite[Proposition~35, Theorem~42]{chen2022localization}, for any non-negative function $g$ on $\binom{[n]}{n-k}$,
    \begin{align}\label{eq:nubar-nubar-t}
        \Ent_{\bar{\nu}}[g] = O_{\delta,\Delta}(1)\E[\Ent_{\bar{\nu}_{t}}[g]]
    \end{align}
    Let $\omega_{t}$ denote the distribution on $\binom{[n]}{t}$ induced by $\bar{\mu}_k$, i.e. for $T \in \binom{[n]}{t}$,
    \[\omega_{t}(T) = \binom{n-k}{t}^{-1}\cdot \bar{\mu}_k\left[S \in \binom{[n]}{n-k}, T\subseteq S\right].\]
    Then, for any function $f:\mathcal{I}_k(G) \to \R_{\geq0}$, denoting $\bar{f}$ by $g$, we have
    \begin{align*}
    \Ent_{\mu_k}[f] 
    &= \Ent_{\bar{\mu}_k}[g]\\
    &\lesssim_{\delta,\Delta} \E[\Ent_{\bar{\nu}_t}[g]]\\
    &= \sum_{T \in \binom{[n]}{t}}\omega_{t}(T)\Ent_{\bar{\nu}(\cdot \mid T)}[g]\\
    &\lesssim_{\delta,\Delta}\sum_{T \in \binom{[n]}{t}}\omega_{t}(T)\rho_T^{-1}\mathcal{E}_T(\sqrt{g},\sqrt{g}).
    \end{align*}
    Here, we set $H_T$ to be the graph maximum degree $\Delta$ graph $G \setminus \{w_{i_1},\ldots,w_{i_t}\}$ on $\ell$ vertices; identify $\bar\nu(\cdot \mid T)$ with the complement of an independent set chosen from $\mu_k^{H_T}$;  $\mathcal{E}_T$ is the Dirichlet form of the up-down walk on $\overline{\mathcal{I}_k(H_T)}:= \{\bar{I} : I \in \mathcal{I}_k(H_T)\}$; and $\rho_{T}$ is the log-Sobolev constant of this walk. 

    Let $\mu_T$ denote the uniform distribution on $\mathcal{I}_k(H_T)$ and $P_T$ denote the transition matrix of the down-up walk on $\mathcal{I}_k(H_T)$. Then, we have
    \begin{align*}
        \omega_T(T)\rho_T^{-1}\mathcal{E}_T(\sqrt{g},\sqrt{g})
        &= \omega_T(T)\rho_T^{-1}\sum_{I\sim J \in \mathcal{I}_k(H_T)}\mu_T(I)P_T(I,J)(\sqrt{f(I)} - \sqrt{f(J)})^2\\
        &= \omega_T(T)\rho_T^{-1}\sum_{I\sim J \in \mathcal{I}_k(H_T)}\frac{\mu_k(I)}{\mu_k(I':I'\subseteq H_T)}P_T(I,J)(\sqrt{f(I)} - \sqrt{f(J)})^2\\
        &= \binom{n-k}{t}^{-1}\rho_T^{-1}\sum_{I\sim J \in \mathcal{I}_k(H_T)}\mu_k(I)P_T(I,J)(\sqrt{f(I)} - \sqrt{f(J)})^2.
    \end{align*}
Therefore,    
    \begin{align*}
       \binom{n-k}{t}\sum_{T \in \binom{n}{t}}\omega_t(T)\rho_T^{-1}\mathcal{E}_T(\sqrt{g},\sqrt{g})
        &= \sum_{T \in \binom{n}{t}}\rho_T^{-1}\sum_{I\sim J \in \mathcal{I}_k(H_T)}\mu_k(I)P_T(I,J)(\sqrt{f(I)} - \sqrt{f(J)})^2\\
        &= \sum_{I\sim J \in \mathcal{I}_k(G)}\mu_k(I)(\sqrt{f(I)}-\sqrt{f(J)})^{2}\alpha_t(I,J),
    \end{align*}
with
\begin{align*}
\alpha_t(I,J)
&= \sum_{T: I\sim J \in \mathcal{I}_k(H_T)}\rho_T^{-1}P_T(I,J) \lesssim_{\Delta}P(I,J)\sum_{T: I\sim J \in \mathcal{I}_k(H_T)}\rho_T^{-1},
\end{align*}
where $P$ is the transition matrix of the down-up walk on $G$. In the last inequality, we have used that the probability of transitioning from $I$ to $J$ in $H_T$ is at most a constant factor (depending on $\Delta$) more than the probability of transitioning from $I$ to $J$ in $G$ since $\ell - (\Delta+1)k = \Theta_{\Delta}(n)$. 

Finally, for $I\sim J \in \mathcal{I}_k(G)$, let $\mathcal{T}_g(I,J)$ denote the set of $T \in \binom{n}{t}$ such that $I\sim J \in \mathcal{I}_k(H_T)$ and $H_T$ is $\Delta$-good; let $\mathcal{T}_b(I,J) = \{T: I,J \in \mathcal{I}_k(H_T)\}\setminus \mathcal{T}_g(I,J)$. Then,
\begin{align*}
    \sum_{T: I\sim J \in \mathcal{I}_k(H_T)}\rho_T^{-1}
    &= \sum_{T \in \mathcal{T}_g(I,J)}\rho_T^{-1} + \sum_{T \in \mathcal{T}_b(I,J)}\rho_T^{-1}\\
    &\lesssim_{\delta,\Delta} |\mathcal{T}_g(I,J)|n + |\mathcal{T}_b(I,J)|n^{2}\\
    &\lesssim_{\delta,\Delta} \binom{n-(k+1)}{\ell-(k+1)}\cdot \left(n + \binom{n-(k+1)}{\ell-(k+1)}^{-1}\cdot |\mathcal{T}_b(I,J)|n^{2}\right)\\
    &\lesssim_{\delta,\Delta} \binom{n-(k+1)}{\ell-(k+1)}\cdot n,
\end{align*}
where the second line uses \cref{prop:lee-yau-regular} and \cref{lem:crude-LSI}, and the last line uses \cref{lem:connected-components}. 

Putting everything together, we have
\begin{align*}
    \Ent_{\mu_k}[f]
    &\lesssim_{\delta,\Delta} \sum_{I\sim J \in \mathcal{I}_k(G)}\mu_k(I)P(I,J)(\sqrt{f(I)}-\sqrt{f(J)})^2\cdot n\cdot \frac{\binom{n-k-1}{t}}{\binom{n-k}{t}}\\
    & \lesssim_{\delta,\Delta} n\cdot \sum_{I\sim J \in \mathcal{I}_k(G)}\mu_k(I)P(I,J)(\sqrt{f(I)}-\sqrt{f(J)})^2\\
    & \lesssim_{\delta,\Delta}n\cdot \mathcal{E}_P(\sqrt{f},\sqrt{f}),
\end{align*}
where the second line uses that $n-k = \Theta(t)$. \qedhere
\end{proof}

\section{Log-Sobolev inequality: Proof of \cref{thm:main-lsi}} 
\label{sec:lee-yau}

It remains to prove \cref{prop:lee-yau-regular}. We will find it more convenient to consider a slightly different Markov chain, where we only perform the down-up walk between different connected components. To this end, we record the following definition. 

\begin{definition}
The modified down-up walk on $\mathcal{I}_k(G)$ is defined as follows: given the current state $I_t$ at time $t$, we choose independent and uniform vertices $u,v \in V(G)$. If $v$ and $u$ are in different connected components of $G$ and $I' = (I\setminus u) \cup v \in \mathcal{I}_k(G)$, then $I_{t+1} = I'$; else $I_{t+1} = I_t$. 
\end{definition}

Note that the modified down-up walk is reversible with respect to $\mu_k(G)$. \cref{prop:lee-yau-regular} is a consequence of the following.

\begin{theorem}\label{thm:lee-yau}
There exists a constant $C_{\ref{thm:lee-yau}} = O(1)$ for which the following holds: for all graphs $G$ on $n$ vertices which are $\Delta$-good and for all $n/(C_{\ref{thm:lee-yau}}\Delta^{1000}) \leq k \leq n/(C_{\ref{thm:lee-yau}}\Delta^{8})$, the modified down-up walk on $\mathcal{I}_k(G)$ satisfies a log-Sobolev inequality with constant $\Omega_{\Delta}(1/n)$.  
\end{theorem}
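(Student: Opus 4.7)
The strategy follows the Lee--Yau inductive template, in which one proves the log-Sobolev inequality for $\mu_k$ on $G$ by averaging a local decomposition of entropy over all connected components of $G$, and inductively controlling the log-Sobolev constants of the conditional measures that arise on the complementary graph $G \setminus C$. Concretely, fix a component $C_j$ of size $|C_j| = O_\Delta(\log n)$ and condition on the slot-count $k_j := |I \cap C_j|$. This produces a projected distribution on the at most $O_\Delta(\log n)$ possible values of $k_j$, together with conditional measures that factor as a product of $\mu_{k_j}$ on $C_j$ and $\mu_{k - k_j}$ on $G \setminus C_j$. Averaging suitably over $j$ (weighted by $|C_j|/n$) and using the exact entropy decomposition expresses $\Ent_{\mu_k}[f]$ as a projected-entropy contribution plus an averaged entropy on a strictly smaller complementary graph; iterating yields the desired LSI, provided the projected term can be absorbed into the Dirichlet form of the modified down-up walk at a loss of at most $O_\Delta(n)$.

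The first substantive step is to control the projected distribution. By the discussion preceding \cref{th:SI-slice}, the law of $k_j$ under $\mu_k$ is the law of $|I \cap C_j|$ under the hard-core model on $G$ at a suitable activity $\lambda \in (0,(1-\delta)\lambda_c(\Delta))$ conditioned on $|I| = k$. Combining the local central limit theorem of \cite{jain2022approximate} with the smallness of $|C_j|$ shows that this conditional distribution agrees, up to $\Theta_\Delta(1)$ factors, with the marginal of $|I \cap C_j|$ under the hard-core model on $C_j$ alone at activity $\lambda$; the latter has log-Sobolev constant $\Omega_\Delta(1/|C_j|) = \Omega_\Delta(1/\log n)$ by direct comparison with the spin-system hard-core model in the uniqueness region. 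The technical heart of this step is the \emph{sharp decomposition result}: one must compare the Dirichlet form of the projected chain to a portion of the Dirichlet form of the modified down-up walk on $\mathcal{I}_k(G)$, without the factor-of-the-spectral-gap loss that standard Markov chain decomposition incurs (here that factor would be $\Theta(1/n)$, which would be fatal). This is achieved by exploiting a Bubley--Dyer-style contractive coupling available for $k \leq n/(C\Delta^8)$, together with Stein's method for Markov chains \cite{bresler2019stein, reinert2019approximating}, giving a lossless comparison up to an $O_\Delta(1)$ factor.

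With the projected bound in hand, one inductively controls the complementary term. Removing a component of size $c_j$ with $s_j$ occupied vertices sends $(n,k) \mapsto (n - c_j, k - s_j)$, so the effective density $k/n$ evolves along the induction. The key observation is that, viewed as a stochastic process with randomness coming from the weighted choice of component and the conditional $s_j$, the density is close to a martingale whose single-step differences are $O_\Delta(\log n /n)$. A martingale Bernstein inequality then shows that with overwhelming probability the density remains in a narrow neighborhood of its starting value $k/n \in [1/(C_{\ref{thm:lee-yau}}\Delta^{1000}), 1/(C_{\ref{thm:lee-yau}}\Delta^{8})]$ throughout the entire induction, so that the induction hypothesis is preserved on every intermediate graph encountered. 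The residual small-probability event in which the density drifts out of range is handled by the crude worst-case bound of \cref{lem:crude-LSI}, whose contribution is negligible.

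The main obstacle I expect is the sharp decomposition result comparing the projected Dirichlet form to that of the full modified walk. The difficulty is intrinsic: classical Markov chain comparison theorems lose a factor equal to the spectral gap of the fiber chains, which here is of order $1/n$, so a genuinely new technique is required; the combination of contractive coupling---robust in the small-$k$ regime enforced by $k \leq n/(C\Delta^8)$---with Stein's method for Markov chains is the mechanism by which this loss is avoided. A secondary subtlety is ensuring that $\Delta$-goodness is approximately preserved under component removal along the induction; this is precisely why the theorem restricts to graphs with linearly many, logarithmically-small components, and is what allows the martingale-Bernstein argument on the density to propagate through the full induction.
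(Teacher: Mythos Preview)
Your high-level strategy is essentially the paper's: Lee--Yau induction over components, LCLT comparison of the projected piece with the hard-core model, a sharp decomposition via Stein's method plus a contractive coupling, and a martingale Bernstein argument to control the drift of the occupancy ratio along the induction. On those points you are aligned with the paper's proof.

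There is, however, one concrete gap in your setup. You condition on the \emph{slot-count} $k_j = |I \cap C_j|$, so that the conditional law factors as $\mu_{k_j}^{C_j} \otimes \mu_{k-k_j}^{G \setminus C_j}$, and you then assert that the entropy decomposition leaves ``a projected-entropy contribution plus an averaged entropy on a strictly smaller complementary graph.'' That is not what the decomposition gives: $\Ent(f\mid k_j)$ is entropy under the \emph{product} measure, and after tensorization you are left with an additional term of the form $\E\bigl[\Ent_{\mu_{k_j}^{C_j}}(f \mid I_{G\setminus C_j})\bigr]$. This term cannot be absorbed into the Dirichlet form of the \emph{modified} down-up walk, because the modified walk never swaps within a component; indeed, restricted to a fixed level set of $k_j$ the modified walk is not even ergodic on the product measure (it leaves $I\cap C_j$ frozen). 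The paper avoids this by conditioning on the full restriction $I_G := I \cap C_j$ rather than its size. Then the conditional law is just $\mu_{k-|I_G|}$ on $G\setminus C_j$ (no stray factor), and the projected object $f_G = \E[f\mid I_G]$ lives on independent sets of $C_j$. The induced chain on $I_G$ is the single-site flip coming from swaps of a vertex in $C_j$ with a vertex outside, which is exactly what \cref{thm:compare between hardcore and induced distribution on one component} compares to hard-core Glauber at the appropriate activity, and what the Stein/contractive-coupling argument (\cref{prop:opt-decomp}) compares losslessly to $\cE_{\cC}$.

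A secondary point worth flagging: the recurrence that emerges is not a pure expectation but has a $\max_{I\sim J}$ over adjacent configurations (see \cref{prop:recurrence}); the stochastic process you describe is actually adversarial in the choice of $I\sim J$ at each step, with only the component $G$ (uniform among those with $I_G = J_G$) being random. The martingale Bernstein argument must therefore hold uniformly over adversary strategies---this is how the paper runs it, and it is why the increment bounds are phrased deterministically in terms of $\max_G|G|$ and $|\cG_i|$ rather than in distribution over $s_j$.
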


Before proving \cref{thm:lee-yau}, let us quickly show that it implies \cref{prop:lee-yau-regular}.

\begin{proof}[Proof of \cref{prop:lee-yau-regular} given \cref{thm:lee-yau}] Let $P$ denote the transition matrix of the down-up walk and let $Q$ denote the transition matrix of the modified down-up walk. It suffices to show that for all non-negative functions $f$ on $\mathcal{I}_k(G)$,
\[\mathcal{E}_Q(f,f) \lesssim_{\Delta} \mathcal{E}_P(f,f).\]
For this, we simply note that for any $I\sim J \in \mathcal{I}_k(G)$, with $I \neq J$
\[Q(I,J) \lesssim_{\Delta} P(I,J);\]
indeed, if $J = (I\setminus u) \cup v$ with $u,v$ in the same connected component, then $Q(I,J) = 0$, whereas if $u,v$ are in different connected components, then $Q(I,J) = \Theta_{\Delta}(1/n^2) =  \Theta_{\Delta}(P(I,J)).$ \qedhere

\end{proof}

We prove \cref{thm:lee-yau} in three stages: first we prove an optimal log-Sobolev inequality for the chain induced by the modified down-up walk on a single connected component of the graph, where we recall that such components are all fairly small (\cref{thm:compare between hardcore and induced distribution on one component}); in the spirit of Lee-Yau \cite{lee1998logarithmic}, we show a recurrence for the inverse LSI constant of a graph in terms of the graph with a single component removed (\cref{prop:recurrence}); a key ingredient in establishing this recurrence is an optimal Markov chain decomposition-type result in the presence of contractive couplings, based on Stein's method for Markov chains (\cref{prop:opt-decomp}); finally, we analyze this recurrence via a martingale argument. 

\subsection{The induced chain}

Given a graph $\mathcal{G}$ with different connected components $\{G\}$, let $I$ be a uniformly random independent set of size $k$ on $\mathcal{G}$. We denote $I_G$ the projection of $I$ on the component $G$, and $I_{-G}$ the projection on $\mathcal{G}-G$. We denote $P_{u,v}$ the operator that swaps the status of $u$ and $v$ in the independent set if $u$ and $v$ are in different connected components and if the swapping yields a valid independent set. Concretely, by swapping, we mean that if $u$ and $v$ are both occupied or both unoccupied, we do nothing; if $u$ is unoccupied and $v$ is occupied, then $u$ becomes occupied and $v$ becomes unoccupied (and vice versa). We define the induced distribution on $G$ as the distribution of $I_G$ on $2^G$ and define the following Markov chain on independent sets of $G$:
\begin{equation}\label{eq:PG-def}
P_G(I_G,J_G) = \frac{1}{|G|} \sum_{u\in G} \mathbb{P}_{I_{-G}|I_G,v \sim \mathcal{G}} [J_G=(P_{u,v}(I))_G].
\end{equation}
The importance of the above Markov chain is that its stationary distribution is the distribution of $I_G$ where $I$ is sampled uniformly at random from $\mathcal{I}_k(\mathcal{G})$. Indeed, given $I_G,J_G$ independent sets on $G$ differing in exactly one vertex $v$ with $v\in I_G$ and $v\notin J_G$, then $I_{-G}$ (respectively $J_{-G}$) is uniformly distributed among independent sets of size $k-|I_G|$ (respectively $k-|J_G|$) in $\mathcal{G}-G$. The above chain satisfies 
\[
\frac{{P}_G(I_G,J_G)}{{P}_G(J_G,I_G)} = \frac{\E[M(K)]}{(k-|J_G|)}.
\]
Here, $K$ is uniformly random in $\mathcal{I}_{k-|I_G|}(\mathcal{G}-G)$, and $M(K)$ is the number of vertices not in $K$ that form an independent set when included with $K$. On the other hand, in the bipartite graph with two sides being $\mathcal{I}_{k-|J_G|}(\mathcal{G}-G)$ and $\mathcal{I}_{k-|I_G|}(\mathcal{G}-G)$ and two sets adjacent if and only if they differ in exactly one vertex, the average degree of the first side is $(k-|J_G|)$ and of the second side is $\E[M(K)]$. Hence,
\begin{equation}\label{eq:M(K)}
\frac{\E[M(K)]}{k-|J_G|} = \frac{|\mathcal{I}_{k-|J_G|}(\mathcal{G}-G)|}{|\mathcal{I}_{k-|I_G|}(\mathcal{G}-G)|}.
\end{equation}

We will need control over the log-Sobolev constant of the above chain, which we denote by $\rho_G^{\rm{induced}}$. This is the content of the next lemma.   

\begin{lemma}\label{thm:compare between hardcore and induced distribution on one component} 
    Let  $ \mathcal{G}$ be a graph of maximum degree at most $\Delta$ and let $G$ be a connected component of $\mathcal{G}$. Assume that $|G| \le |\mathcal{G}|^{1/4}$. Suppose that $\alpha = k/\abs{\mathcal{G}} \in [\Delta^{-1000}, \Delta^{-2}]$. Then, $\rho_G^{\textrm{induced}} = \Omega_{\Delta}(|G|^{-1})$. 
\end{lemma}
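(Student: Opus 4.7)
My plan is to compare the induced chain $P_G$ on $\nu_G$ with the Glauber dynamics $Q^{\rm HC}_G$ on the hard-core distribution $\mu_{G,\lambda}$ on $G$ at an activity $\lambda$ tuned so that $\E_{\lambda,\mathcal{G}}[|I|]=k$; such a $\lambda$ exists and satisfies $\lambda = \Theta_\Delta(1)$ and $\lambda \leq (1-\Omega_\Delta(1))\lambda_c(\Delta)$ by \cref{lem:finding-activity} (applied with $\gamma = \Delta^{-1000}$ and $\delta$ chosen so that $(1-\delta)\alpha_c(\Delta) \geq \Delta^{-2}$; this is possible for $\Delta \geq 3$ since $\alpha_c(\Delta) = \Theta(1/\Delta)$).

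The first step will be to show that $\nu_G$ and $\mu_{G,\lambda}$ have pointwise ratio in $[c_1,c_2]$ with $c_1,c_2 = \Theta_\Delta(1)$. Since $G$ consists of connected components of $\mathcal{G}$, $Z_\mathcal{G}(\lambda) = Z_G(\lambda)\,Z_{\mathcal{G}-G}(\lambda)$, and a direct computation yields
\[
\frac{\nu_G(I_G)}{\mu_{G,\lambda}(I_G)} \;=\; \frac{\P_{\lambda,\mathcal{G}-G}[|I|=k-|I_G|]}{\P_{\lambda,\mathcal{G}}[|I|=k]}.
\]
The denominator is $\Theta_\Delta(1/\sqrt{n})$ by the LCLT of \cite{jain2022approximate} (with $k = \E_{\lambda,\mathcal{G}}[|I|]$). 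For the numerator, $\E_{\lambda,\mathcal{G}-G}[|I|]$ differs from $k$ by $\E_{\lambda,\mathcal{G}}[|I_G|] \in [0,|G|]$, so $k-|I_G|$ deviates from this mean by at most $|G| = O_\Delta(\log n) \ll \sqrt{n}$ (by $\Delta$-goodness); since $\Var_{\lambda,\mathcal{G}-G}(|I|) = \Theta_\Delta(n)$ \cite[Lemma~3.2]{jain2022approximate}, the LCLT again delivers $\Theta_\Delta(1/\sqrt{n})$, uniformly in $|I_G|$. Via the Holley-Stroock identity $\Ent_\nu(f) = \inf_t \E_\nu[f\log(f/t)-f+t]$, this also gives $\Ent_{\nu_G}(f^2) \asymp_\Delta \Ent_{\mu_{G,\lambda}}(f^2)$.

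The second step will be a Dirichlet form comparison. Unfolding \eqref{eq:PG-def}: for $u \in I_G$,
\[
P_G(I_G, I_G\setminus\{u\}) \;=\; \frac{1}{|G|}\cdot \P\bigl[v \in \mathcal{G}-G,\ v\notin I_{-G},\ v\not\sim I_{-G}\bigr],
\]
where $v$ is uniform in $\mathcal{G}$ and $I_{-G}\sim \nu(\cdot|I_G)$. Since $|I_{-G}| \leq k = O(n/\Delta^8)$ and $|N(I_{-G})| = O(\Delta k) = O(n/\Delta^7)$ while $|G| \leq n^{1/4}$, this probability is $1-O(\Delta^{-6}) = \Theta(1)$, matching the Glauber removal rate $Q^{\rm HC}_G(I_G, I_G\setminus\{u\}) = 1/((1+\lambda)|G|) = \Theta_\Delta(1/|G|)$ up to a $\Theta_\Delta(1)$ factor. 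Analogously, for $u \notin I_G$ a valid addition, $P_G(I_G, I_G\cup\{u\}) = (k-|I_G|)/(n|G|) = \Theta(\alpha/|G|)$ matches $Q^{\rm HC}_G(I_G, I_G\cup\{u\}) = \lambda/((1+\lambda)|G|) = \Theta_\Delta(\alpha/|G|)$, where I use $\alpha \asymp_\Delta \lambda/(1+\lambda)$ (which holds in our range since $\lambda$ is bounded in $[\gamma',(1-\delta')\lambda_c(\Delta)]$ and the marginal bound $\lambda/(\lambda+(1+\lambda)^\Delta) \leq \alpha \leq \lambda/(1+\lambda)$ applies). Combined with Step~1, $\nu_G(I_G)P_G(I_G,J_G) \asymp_\Delta \mu_{G,\lambda}(I_G)\,Q^{\rm HC}_G(I_G,J_G)$ for every pair, hence $\mathcal{E}_{P_G}(f,f) \asymp_\Delta \mathcal{E}_{Q^{\rm HC}_G}(f,f)$.

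To finish, I invoke the optimal log-Sobolev inequality $\Ent_{\mu_{G,\lambda}}(f^2) \lesssim_\Delta |G|\cdot \mathcal{E}_{Q^{\rm HC}_G}(f,f)$ for Glauber on the hard-core model at $\lambda < \lambda_c(\Delta)$ on bounded-degree graphs, which is standard spectral-independence output (e.g.\ \cite{CLV20}); chaining the comparisons gives $\Ent_{\nu_G}(f^2) \lesssim_\Delta |G|\cdot \mathcal{E}_{P_G}(f,f)$, i.e., $\rho_G^{\rm induced} = \Omega_\Delta(|G|^{-1})$. The main technical obstacle will be the first step: I need tight matching upper and lower bounds of order $1/\sqrt{n}$ on $\P_{\lambda,\mathcal{G}-G}[|I|=k-|I_G|]$ uniformly in $|I_G| \in \{0,\dots,|G|\}$, and this crucially uses both the LCLT of \cite{jain2022approximate} for $\mathcal{G}-G$ and the $\Delta$-goodness assumption that keeps $|G|$ at $O_\Delta(\log n)$, i.e.\ deep within the central Gaussian regime where matching upper and lower LCLT bounds are available.
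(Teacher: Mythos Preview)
Your proof is correct and follows essentially the same comparison strategy as the paper: show that both the stationary distribution $\nu_G$ and the transitions of $P_G$ agree with those of hard-core Glauber on $G$ up to $\Theta_\Delta(1)$ factors via the LCLT of \cite{jain2022approximate}, then import the known $\Omega_\Delta(1/|G|)$ LSI for hard-core (the paper cites \cite[Fact~3.5]{CLV20}). Two small corrections: the lemma's hypothesis is only $|G|\le |\mathcal{G}|^{1/4}$, not $\Delta$-goodness, so replace your ``$|G|=O_\Delta(\log n)$'' by $|G|\le n^{1/4}$ (this still gives $|k-|I_G|-\E_{\lambda,\mathcal{G}-G}|I||\le |G|=o(\sigma)$, which is all the LCLT needs); and in Step~2 you invoke $k=O(n/\Delta^8)$ but the hypothesis only gives $k/n\le \Delta^{-2}$, which nonetheless still yields $\P[v\text{ free}]\ge 1-(\Delta{+}1)/\Delta^2-o(1)=\Omega(1)$ for $\Delta\ge 3$. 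Your choice to tune $\lambda$ so that $\E_{\lambda,\mathcal{G}}|I|=k$ (rather than the paper's $\lambda=\alpha$) is arguably cleaner, since it places $k-|I_G|$ genuinely in the bulk where two-sided LCLT bounds are immediate.
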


\begin{proof}
    We will show that the induced chain has stationary distribution and transition probabilities within $O_{\Delta}(1)$ of the hardcore model on $G$ with activity $\lambda = \alpha$. By the Local Central Limit Theorem for the hardcore model in \cite{jain2022approximate} (or \cref{prop:edgeworth}), for integers $a,b \le |G| \le |\mathcal{G}|^{1/4}$ and $\alpha \in [\Delta^{-1000},\Delta^{-2}]$, the ratio 
    \begin{equation}\label{eq:slice-ratio}
        \frac{\lambda^{-a}|\mathcal{I}_{k-a}(\mathcal{G}-G)|}{\lambda^{-b}|\mathcal{I}_{k-b}(\mathcal{G}-G)|} = O_{\Delta}(1).%
    \end{equation}
    This implies that for all $I_G$ 
    \[
        \frac{\mu_k(I_G)}{\mu_\lambda(I_G)} = O_{\Delta}(1). 
    \]
    Moreover, the transition probability of the induced chain from $I_G$ to $J_G$, where $v\in I_G$ and $v\notin J_G$ and $I_G,J_G$ differ only at $v$, satisfies
    \[
        \frac{P_G(I_G,J_G)}{P_G(I_G,I_G)} = \frac{\E[M(K)]}{|\mathcal{G}-G|-\E[M(K)]},
    \]
    with notation as above. Similarly, 
    \[
        \frac{P_G(J_G,I_G)}{P_G(J_G,J_G)} = \frac{k-|J_G|}{|\mathcal{G}-G|-k+|J_G|}.
    \]
    From (\ref{eq:M(K)}) and (\ref{eq:slice-ratio}), we have that 
    \[
        \frac{P_G(I_G,J_G)}{P_G(I_G,I_G)} = \Theta_{\Delta}(k/n),
    \]
    and 
    \[
        \frac{P_G(J_G,I_G)}{P_G(J_G,J_G)} = \Theta_{\Delta}(k/n).
    \]
    Hence, the transition probabilities of the induced chain are within $O_{\Delta}(1)$ of those of the hardcore model of $G$ with activity $\lambda$. 

 Under the assumption $\lambda \in [\Delta^{-1000},\Delta^{-2}]$ and $G$ has maximum degree at most $\Delta$, we have that the hardcore model $\mu_\lambda$ on $G$ has LSI with constant $\Omega(1/|G|)$ (see, e.g.,\cite[Fact~3.5]{CLV20}). Thus, by standard comparison results (\cite[Theorem~2.14]{montenegrotetali}), we also have the $\rho^{\textrm{induced}}_{G}  \geq \Omega_\Delta(\abs{G}^{-1})$. 
\end{proof}

\subsection{Setting up the reduction}

We denote a \emph{configuration} $\cC$ to be a graph $\cG$ together with an integer $k$.  We denote the occupancy ratio $\alpha = k/|\cG|$.  

Given a configuration $\mathcal{C}$, we denote by $\rho_{\mathcal{C}}$ the inverse LSI constant associated to $\mathcal{C}$ multiplied by $1/|\mathcal{G}|$, and $\mu_{\mathcal{C}}$ the distribution of independent sets of size $k$ in $\mathcal{G}$.  In particular, for non-negative functions $f$ on independent sets $\cG$ of size $k$, we have \begin{equation}\label{eq:rho-cc-def}
    \Ent(f) \leq |\cG| \rho_{\cC} \cE_{\cC}(\sqrt{f},\sqrt{f})
\end{equation} Here we choose this normalization for $\rho_{\cC}$ because our goal for \cref{thm:lee-yau} is to prove $\rho_{\cC} = O_{\Delta}(1)$.  
For an independent set $I \in \cI_k(\cG)$ and set of vertices $G \subset \mathcal{G}$, %
we denote $I_G = I \cap G$ %
and $I_{-G} = I \setminus G$. Given a graph $\cG$ and a connected component $G$ of $\cG$, we note that the distribution of $I_{-G}$ is associated to a configuration $\cC'$ on the graph $\cG - G$ with $k' = k - |I_G|$.  

The main goal of this subsection is to prove the following recurrence for $\rho_{\cC}.$

\begin{prop}\label{prop:recurrence}
    There exists an absolute constant $C_{\ref{prop:recurrence}} = O(1)$ for which the following holds. Let $\cG$ be a graph on $n$ vertices of maximum degree $\Delta$.  Suppose $\cG$ has $m$ connected components, each of which is $\Delta$-good. Then for any integer $k$ with $1/(C_{\ref{prop:recurrence}}\Delta^{1000}) \leq k/n \leq 1/(C_{\ref{prop:recurrence}}\Delta^8)$, we have
$$ \rho_{\cC} \leq \frac{m-2}{m} \max_{I\sim J} \frac{1}{m-2}\left(\sum_{G:I_G=J_G} \left(\frac{|\mathcal{G}|}{|\mathcal{G}|-|G|} \rho_{\mathcal{C}'}\right)\right) + O_{\Delta}(m^{-1}) \,.$$
\end{prop}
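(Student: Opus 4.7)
The plan is a Lee--Yau-style inductive entropy decomposition, performed component by component and then averaged over the choice of component. Fix a non-negative function $f$ on $\cI_k(\cG)$. For each connected component $G$ of $\cG$, set $\bar f_G(I_G) := \E_{\mu_{\cC}}[f \mid I_G]$ and let $\mu_G$ denote the marginal law of $I_G$. For each $G$ the entropy decomposition
\[\Ent_{\mu_{\cC}}(f) \;=\; \E_{I_G \sim \mu_G}\!\left[\Ent_{\mu_{\cC}(\cdot \mid I_G)}(f)\right] \;+\; \Ent_{\mu_G}(\bar f_G)\]
holds, and averaging this identity over $G$ with uniform weight $1/m$ leaves the LHS unchanged.

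For the conditional term, conditional on $I_G$ the measure $\mu_{\cC}(\cdot \mid I_G)$ is uniform on $\cI_{k-|I_G|}(\cG - G)$, a configuration $\cC'_G$ on the graph $\cG - G$; the inductive hypothesis bounds the conditional entropy by $(|\cG|-|G|)\,\rho_{\cC'_G}\,\cE_{\cC'_G}(\sqrt f,\sqrt f)$. The only difference between $P_{\cC}$ restricted to transitions with $I_G=J_G$ and $P_{\cC'_G}$ is the probability $((|\cG|-|G|)/|\cG|)^2$ that both $u,v$ land in $\cG-G$; this produces
\[\E_{I_G}\!\left[\cE_{\cC'_G}(\sqrt f,\sqrt f)\right] \;=\; \frac{|\cG|^2}{(|\cG|-|G|)^2}\,\cE_{\cC}^{-G}(\sqrt f,\sqrt f),\]
with $\cE_{\cC}^{-G}$ the restriction of $\cE_{\cC}$ to transitions satisfying $I_G=J_G$. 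Averaging over $G$, swapping sums, and using that every nontrivial transition of the modified down-up walk swaps vertices between exactly two distinct components (hence fixes exactly $m-2$ of them) and then bounding by the max over transitions $I\sim J$ yields the first term of the recurrence after dividing by $|\cG|$.

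The main obstacle is the marginal term. Since $|G|\le 1000\Delta\log n \le |\cG|^{1/4}$ and $\alpha = k/|\cG| \in [\Delta^{-1000},\Delta^{-2}]$, \cref{thm:compare between hardcore and induced distribution on one component} gives $\Ent_{\mu_G}(\bar f_G) \le O_\Delta(|G|)\,\cE_{P_G}(\sqrt{\bar f_G},\sqrt{\bar f_G})$. The crucial step is a \emph{lossless} comparison $|G|\,\cE_{P_G}(\sqrt{\bar f_G},\sqrt{\bar f_G}) \lesssim_\Delta \cE_{\cC}^G(\sqrt f,\sqrt f)$, where $\cE_{\cC}^G$ restricts $\cE_{\cC}$ to transitions changing $I_G$. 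Classical decomposition bounds (Madras--Randall; Jerrum--Son--Tetali--Vigoda) would lose a factor scaling with the inverse spectral gap of the chain on fixed $I_G$, which in our regime is only $\Omega(|\cG|^{-1})$, too weak to close the induction. Instead I would invoke the sharp decomposition \cref{prop:opt-decomp}, proved via Stein's method for Markov chains combined with the contractive coupling available here, which delivers precisely the required lossless bound. Combining then yields $\Ent_{\mu_G}(\bar f_G) \le O_\Delta(1)\cE_{\cC}^G(\sqrt f,\sqrt f)$; since each transition touches exactly two components, $\sum_G \cE_{\cC}^G = 2\cE_{\cC}$, so averaging by $1/m$ and dividing by $|\cG|$ contributes the $O_\Delta(m^{-1})$ additive error (using $|\cG| = \Omega_\Delta(m)$ in the $\Delta$-good regime).

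Adding the two averaged bounds and dividing by $|\cG|\cE_{\cC}(\sqrt f,\sqrt f)$ completes the argument via the definition $\rho_{\cC} = |\cG|^{-1}\cdot(\text{inverse LSI constant})$. The principal new ingredient is the sharp decomposition; the rest is bookkeeping around the entropy decomposition, the induced-chain LSI from \cref{thm:compare between hardcore and induced distribution on one component}, and the combinatorial averaging producing the $(m-2)/m$ factor and the $\max_{I\sim J}$ structure.
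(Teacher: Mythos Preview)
Your plan is essentially identical to the paper's proof: the componentwise entropy decomposition \eqref{eq:Ent-cond-G}, summed over all $m$ components; the conditional term handled exactly as in \cref{lem:Ent-term-2} via the definition of $\rho_{\cC'}$ and swapping the sums over $G$ and over transitions; and the marginal term handled by \cref{thm:compare between hardcore and induced distribution on one component} followed by \cref{prop:opt-decomp}, exactly as in \cref{lem:Ent-term-1}.

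There is, however, a normalization slip in your ``lossless comparison''. The inequality $|G|\,\cE_{P_G}(\sqrt{\bar f_G},\sqrt{\bar f_G}) \lesssim_\Delta \cE_{\cC}^G(\sqrt f,\sqrt f)$ is false as stated; it is off by a factor of $|\cG|$. Unwinding \cref{prop:opt-decomp} (as in \eqref{eqn:decomposition}) gives
\[
\cE_{P_G}(\sqrt{f_G},\sqrt{f_G}) \lesssim_\Delta \E_{I_G}\,\E_{u\in G}\,\E_{v\in\cG}\,\E_{I_{-G}\mid I_G}\bigl[(\sqrt{f(I)}-\sqrt{f(P_{u,v}(I))})^2\bigr],
\]
but in $\cE_\cC$ the vertex $u$ is uniform in all of $\cG$, not in $G$; comparing the two introduces a factor $|\cG|/|G|$, so the correct bound is $|G|\,\cE_{P_G} \lesssim_\Delta |\cG|\,\cE_\cC^G$. (Sanity check: for $\cG$ edgeless, $G=\{v\}$, $f=\mathbf{1}\{v\in I\}$, one has $|G|\,\cE_{P_G}\asymp k/n$ while $\cE_\cC^G\asymp k/n^2$.) With this correction, summing gives $\sum_G \Ent_{\mu_G}(\bar f_G)\lesssim_\Delta |\cG|\,\cE_\cC$, and after dividing by $m|\cG|\,\cE_\cC$ you obtain the additive $O_\Delta(m^{-1})$ directly---no appeal to $|\cG|=\Omega_\Delta(m)$ is needed. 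Your stated inequality would have produced $O_\Delta(m^{-2})$, which is why the arithmetic in your last step did not quite line up.
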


Our starting point is the following basic fact about entropy: for any non-negative function $f$ on independent sets and any set of vertices $G \subset \mathcal{G}$, the following decomposition holds:
\begin{equation}\label{eq:Ent-cond-G}
\Ent(f) = \E_{I_G}[\Ent f|_{I_G}] + \Ent(\mathbb{E}[f|_{I_{G}}]) = \E_{I_G}[\Ent f|_{I_G}] + \Ent(f_G)
\end{equation}
Here we denote $f|_{I_G}$ the restriction of $f$ over independent sets on $\mathcal{G}-G$ conditional on $I_G$ and let $f_G = \E[f|I_G]$ denote the conditional expectation (which is a function on independent sets of $G$).  

A key ingredient in our proof is the following comparison, whose (non-trivial) proof is deferred to \cref{sec:opt-decomp}.
\begin{prop}\label{prop:opt-decomp}
Under the assumptions of \cref{prop:recurrence}, let  $G$ be a connected component of $\cG$, $I_G$ an independent set of $G$, and $u \in I_G$.  Then for any non-negative $f$ we have
    \begin{align*}
      \left(\sqrt{f_G(I_G)}-\sqrt{ f_G(I_G \setminus u)}\right)^2
    &\lesssim_{\Delta}  \E_{v}\E_{I_{-G}|I_G}\left[\left(\sqrt{f(I)}-\sqrt{f(P_{u,v}(I))}\right)^2\right],
    \end{align*}
\end{prop}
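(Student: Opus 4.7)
The plan is to exploit the $L^2$ triangle inequality: writing $\pi := \mu_{\mathcal{G}-G,s}$ and $\pi' := \mu_{\mathcal{G}-G,s+1}$ with $s := k - |I_G|$, we have $f_G(I_G) = \mathbb{E}_{J_1 \sim \pi}[f(I_G \cup J_1)]$ and $f_G(I_G \setminus u) = \mathbb{E}_{J_2 \sim \pi'}[f((I_G \setminus u) \cup J_2)]$, so for any coupling $\Gamma$ of $(\pi,\pi')$,
\begin{equation*}
\bigl(\sqrt{f_G(I_G)} - \sqrt{f_G(I_G \setminus u)}\bigr)^2 \le \mathbb{E}_{\Gamma}\!\left[\bigl(\sqrt{f(I_G \cup J_1)} - \sqrt{f((I_G \setminus u) \cup J_2)}\bigr)^2\right].
\end{equation*}
It therefore suffices to exhibit a coupling whose associated coupled expression is $O_\Delta(1)$ times the right-hand side of the proposition.

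The natural candidate is the coupling $\Gamma_0$ obtained by sampling $J_1 \sim \pi$ and, independently, a uniform vertex $v \in \mathcal{G}$, and declaring the pair valid if $v \in \mathcal{G} - G$ and $v \notin J_1 \cup N(J_1)$, in which case $J_2 := J_1 \cup \{v\}$; this corresponds exactly to $(I_G \cup J_1, P_{u,v}(I_G \cup J_1))$ in the terms of the right-hand side. A direct counting argument shows that, conditional on validity, $J_2 \sim \pi'$ exactly while $J_1$ follows $\pi$ reweighted by $M(J_1)/\mathbb{E}_\pi[M]$, where $M(J) := |\{v \in \mathcal{G} - G : v \notin J \cup N(J)\}|$. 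Using $k \le n/(C_{\ref{prop:recurrence}}\Delta^8)$ and $|G| \le 1000\Delta\log n$, one checks that $M(J) = (1 + O_\Delta(1/C_{\ref{prop:recurrence}}))\mathbb{E}_\pi[M]$ uniformly in $J$, and the success probability is $\Omega(1)$. A short computation comparing the uniform-edge measure underlying $\Gamma_0$ with the explicit measure on $(J,v)$ defining the right-hand side then yields $\mathbb{E}_{\Gamma_0}\bigl[(\sqrt{h(J_1)} - \sqrt{h'(J_2)})^2\bigr] \lesssim_\Delta \text{RHS}$, where $h(J) := f(I_G \cup J)$ and $h'(J') := f((I_G \setminus u) \cup J')$.

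Because $\Gamma_0$ has only the correct $J_2$-marginal, the final step is to correct the $J_1$-marginal to match $\pi$ without losing a spectral-gap factor. Writing $f_G(I_G) - f_G(I_G \setminus u) = \mathbb{E}_{\Gamma_0}[h - h'] - \mathrm{Cov}_\pi(M, h)/\mathbb{E}_\pi[M]$ and combining the identity $(\sqrt{a} - \sqrt{b})^2 = (a - b)^2/(\sqrt{a} + \sqrt{b})^2$ with Cauchy--Schwarz in $L^2(\Gamma_0)$, the first term yields an $O_\Delta(\text{RHS})$ contribution. The covariance correction is the main obstacle: a naive Cauchy--Schwarz bound via $\sqrt{\mathrm{Var}_\pi(M)\,\mathrm{Var}_\pi(h)}$ combined with the Poincar\'e inequality on $\mathcal{I}_s(\mathcal{G}-G)$ would introduce a factor of the inverse spectral gap, which is far too large for the induction in \cref{thm:lee-yau}. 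The plan to avoid this is to use the contractive coupling for the down-up walk on $\mathcal{I}_s(\mathcal{G} - G)$, available by Bubley--Dyer in our sub-critical regime, together with Stein's method for Markov chains, to represent $\mathrm{Cov}_\pi(M, h)$ as an expectation of single-vertex-swap differences of $h$, each of which is precisely a term appearing on the right-hand side of the proposition (after averaging over the swap vertex). This is the ``lossless comparison in the presence of a contractive coupling'' advertised in the introduction.
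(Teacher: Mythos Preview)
Your overall strategy—direct coupling of $\pi$ and $\pi'$, a covariance correction, and then Stein's method with a contractive coupling—uses the same circle of ideas as the paper. However, the final step contains a genuine gap.

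You propose to control $\mathrm{Cov}_\pi(M,h)$ via Stein's method for the down-up walk on $\mathcal{I}_s(\mathcal{G}-G)$. Solving the Poisson equation $M-\E_\pi M=(I-P)\psi$ and using reversibility gives $\mathrm{Cov}_\pi(M,h)=\mathcal{E}_P(\psi,h)$, a weighted sum over transitions $J\to J'=(J\setminus w)\cup v$ of that walk of the increments $h(J)-h(J')=f(I_G\cup J)-f(I_G\cup J')$. Both endpoints carry $I_G$ on $G$. The right-hand side of the proposition, by contrast, consists only of terms $(\sqrt{f(I)}-\sqrt{f(P_{u,v}(I))})^2$ with $I=I_G\cup I_{-G}$ and $P_{u,v}(I)=(I_G\setminus u)\cup I_{-G}\cup v$: one endpoint carries $I_G$ and the other $I_G\setminus u$. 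These are \emph{not} the same increments, so your claim that the Stein differences are ``precisely a term appearing on the right-hand side'' is incorrect. One could try to bridge this by routing each within-$(\mathcal{G}-G)$ swap through two swaps with $u$ (e.g.\ $I_G\cup J\to(I_G\setminus u)\cup J\cup v\to I_G\cup J'$), but that comparison is absent from your proposal and needs care with the underlying measures to avoid losing factors.

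The paper sidesteps this mismatch by placing Stein's method at a different level. It compares two chains $P_\mu$ and $P_\nu$ on independent sets agreeing with $I_G$ on $G\setminus u$ (resp.\ on all of $G$); by construction, a single step of $P_\mu$ and $P_\nu$ from the same state can be coupled to disagree \emph{only} when the chosen swap involves $u$. The Stein identity then directly produces a sum of $f(Y_t)-f(Y_t')$ along a coupled trajectory in which $Y_t'=P_{u,v}(Y_t)$ for some $v$, i.e.\ exactly the increments on the right-hand side. A first-step analysis of the contractive coupling (handling the rare event that the pair drifts apart by one extra vertex via a short path through $\gamma$-type pairs) then closes the recursion without any external comparison of Dirichlet forms. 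Your decomposition $\E_{\Gamma_0}[h-h']-\mathrm{Cov}_\pi(M,h)/\E_\pi[M]$ is correct and elegant, but to finish you would need either to route your $\pi$-swap increments back through $u$, or to relocate the Stein argument to a pair of chains whose discrepancy is exactly a swap with $u$, as the paper does.
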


\cref{prop:opt-decomp} readily implies the following. 

\begin{lemma}\label{lem:Ent-term-1}
    Under the assumptions of \cref{prop:recurrence} we have $$\sum_G \Ent(f_G) \lesssim_{\Delta} n \cE_{\cC}(\sqrt{f},\sqrt{f})$$
    where the sum is over the connected components of $\cG$.
\end{lemma}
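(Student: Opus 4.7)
The plan is to combine \cref{thm:compare between hardcore and induced distribution on one component} (LSI for the induced chain on each component) with \cref{prop:opt-decomp} (pointwise comparison of the induced Dirichlet terms against those of the global walk), letting the sizes $|G|$ in the two bounds cancel exactly.

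First, I would apply \cref{thm:compare between hardcore and induced distribution on one component} to each connected component $G$ of $\cG$. Since $\cG$ is $\Delta$-good, every component has size at most $1000\Delta\log n \le n^{1/4}$ for $n$ large, and $k/|\mathcal{G}| \in [\Delta^{-1000},\Delta^{-2}]$ by the hypothesis on $k$, so the lemma yields $\rho_G^{\text{induced}} = \Omega_\Delta(|G|^{-1})$ and hence
\[
\Ent(f_G) \leq O_\Delta(|G|) \cdot \cE_{P_G}(\sqrt{f_G},\sqrt{f_G}).
\]
Next, I would unpack $\cE_{P_G}$. From the definition \eqref{eq:PG-def}, the only nontrivial transitions from $I_G$ go to $I_G\setminus u$ for $u\in I_G$ or $I_G\cup u$ for $u\notin I_G$; a direct computation shows $P_G(I_G,I_G\setminus u) = \Theta_\Delta(|G|^{-1})$ when $u\in I_G$, since the $|G|^{-1}$ factor comes from selecting $u$ and the remaining probability that a uniform $v\sim \cG$ lies outside $G$ and is compatible with $I_{-G}$ is $\Omega_\Delta(1)$ (using $k = O(n/\Delta^8)$). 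Using reversibility to restrict the Dirichlet form to the ``down'' direction, this gives
\[
\cE_{P_G}(\sqrt{f_G},\sqrt{f_G}) \leq \frac{O_\Delta(1)}{|G|}\sum_{I_G}\sum_{u\in I_G}\mu(I_G)\bigl(\sqrt{f_G(I_G)}-\sqrt{f_G(I_G\setminus u)}\bigr)^2.
\]

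Combining the two displays eliminates the $|G|$ factor. Now I would apply \cref{prop:opt-decomp} pointwise to each $u\in I_G$, converting each induced-gradient term into an expectation over global swaps $P_{u,v}$:
\[
\Ent(f_G) \lesssim_\Delta \sum_{I_G}\sum_{u\in I_G}\mu(I_G)\,\E_{v\sim\cG}\E_{I_{-G}|I_G}\!\left[\bigl(\sqrt{f(I)}-\sqrt{f(P_{u,v}(I))}\bigr)^2\right],
\]
and use $\mu(I_G)\,\mu(I_{-G}\mid I_G) = \mu(I)$ to rewrite the right-hand side as
\[
\frac{O_\Delta(1)}{n}\sum_{I}\mu(I)\sum_{u\in I_G}\sum_{v\in \cG}\bigl(\sqrt{f(I)}-\sqrt{f(P_{u,v}(I))}\bigr)^2.
\]

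Finally, I would sum over the components $G$. Since each vertex of $\cG$ lies in a unique component, $\sum_G\sum_{u\in I_G}$ collapses to $\sum_{u\in I}$, and therefore
\[
\sum_G \Ent(f_G) \lesssim_\Delta \frac{1}{n}\sum_I \mu(I)\sum_{u,v\in \cG}\bigl(\sqrt{f(I)}-\sqrt{f(P_{u,v}(I))}\bigr)^2,
\]
where terms with $u\notin I$ are silently zero. The modified down-up walk has $Q(I,P_{u,v}(I))=\Theta(1/n^2)$ on valid swaps, so its Dirichlet form equals $\Theta(1/n^2)\sum_I\mu(I)\sum_{u,v}(\sqrt{f(I)}-\sqrt{f(P_{u,v}(I))})^2$ up to constants, yielding $\sum_G \Ent(f_G) \lesssim_\Delta n\, \cE_{\cC}(\sqrt{f},\sqrt{f})$ as required. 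The only step requiring any care is the bookkeeping of transition probabilities for $P_G$; the genuinely nontrivial work is already delegated to \cref{thm:compare between hardcore and induced distribution on one component} (which provides the optimal $|G|^{-1}$ LSI constant via the LCLT-based comparison to the hard-core model) and \cref{prop:opt-decomp} (which provides the lossless decomposition inequality via Stein's method and the contractive coupling).
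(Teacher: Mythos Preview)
Your proposal is correct and follows essentially the same approach as the paper's proof: apply the induced-chain LSI (\cref{thm:compare between hardcore and induced distribution on one component}) to obtain $\Ent(f_G)\lesssim_\Delta |G|\,\cE_{P_G}(\sqrt{f_G},\sqrt{f_G})$, bound the induced Dirichlet form via reversibility and $P_G(I_G,I_G\setminus u)\lesssim |G|^{-1}$, invoke \cref{prop:opt-decomp} pointwise, and then sum over components to recover $n\,\cE_{\cC}(\sqrt{f},\sqrt{f})$. The only cosmetic difference is that the paper keeps the factor $|G|$ until the final sum (writing $\sum_G |G|\,\E_{u\in G}(\cdot)=n\,\E_G(\cdot)$ with $G$ sampled proportionally to $|G|$), whereas you cancel it one step earlier; the arguments are otherwise identical.
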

\begin{proof}
    Observe that 
    \[
    \mathcal{E}_{\cC}(\sqrt{f},\sqrt{f}) = \E_{I,u,v}\left[\left(\sqrt{f(I)}-\sqrt{f(P_{u,v}(I))}\right)^2\right]\,,
    \]
    where the operator $P_{u,v}$ is as above, and the expectation is taken according to $I \sim \mu_k$ and $u,v$ chosen uniformly and independently from $\mathcal{G}$. 
    We now let $\{G\}$ denote the connected components of $\mathcal{G}$ and will break apart the above expectation by which component $u$ falls in.  For this, it will be convenient to choose a component $G$ randomly with probability $|G|/|\cG|$. We may then rewrite
    \begin{align}
    \label{eqn:rewrite-main-dirichlet}
    \E_{I,u,v}\left[\left(\sqrt{f(I)}-\sqrt{f(P_{u,v}(I))}\right)^2\right] &= \frac{1}{n} \sum_{G} \sum_{u\in G} \E_{I_G}\E_{I_{-G}|I_G,v} \left[\left(\sqrt{f(I)}-\sqrt{f(P_{u,v}(I))}\right)^2\right] \nonumber\\
    &= \E_{G}\E_{u \in G}\E_{I_G} \E_{v} \E_{I_{-G}|I_G} \left[\left(\sqrt{f(I)}-\sqrt{f(P_{u,v}(I))}\right)^2\right].
    \end{align}

     On the other hand, by \cref{thm:compare between hardcore and induced distribution on one component}, we have 
    \begin{equation}\label{eq:Ent-induced}
    \Ent(f_G) \lesssim_{\Delta} |G|\mathcal{E}_G(\sqrt{f_G}, \sqrt{f_G})
    \end{equation}
    where $\mathcal{E}_G$ is the Dirichlet form of the induced chain on $G$. %
    We can bound the Dirichlet form of the induced chain by 
    \begin{align}
    \label{eqn:decomposition}
    \mathcal{E}_{G}(\sqrt{f_G},\sqrt{f_G})
    &= \E_{I_G}\sum_{J_G}P_G(I_G, J_G)\left(\sqrt{f_G(I_G)} - \sqrt{f_G(J_G)}\right)^2 \nonumber \\
    &\lesssim \E_{I_G} \E_{u \in G} \left(\sqrt{f_G(I_G)}-\sqrt{f_G(I_G \setminus u)}\right)^2 \nonumber \\
    &\lesssim_{\Delta} \E_{I_G} \E_{u\in G,v}\E_{I_{-G}|I_G}\left[\left(\sqrt{f(I)}-\sqrt{f(P_{u,v}(I))}\right)^2\right],
    \end{align}
    where the second line uses the reversibility of $P_G$ and $P_G(I_G, J_G) \lesssim |G|^{-1}$ and the last inequality uses \cref{prop:opt-decomp}. 

    Combining this with \eqref{eq:Ent-induced}, we have 
    \[\Ent(f_G) \lesssim_{\Delta} |G| \E_{I_G} \E_{u\in G,v}\E_{I_{-G}|I_G}\left[\left(\sqrt{f(I)}-\sqrt{f(P_{u,v}(I))}\right)^2\right]\]
    Finally, summing over the connected components $G$ of $\cG$ and using \cref{eqn:rewrite-main-dirichlet}, we get
    \begin{align*}
    \sum_G \Ent(f_G) &\lesssim_{\Delta} n  \E_G \E_{I_G} \E_{u\in G,v}\E_{I_{-G}|I_G}\left[\left(\sqrt{f(I)}-\sqrt{f(P_{u,v}(I))}\right)^2\right]\\
    &= n   \cE_{\cC}(\sqrt{f},\sqrt{f})\,. \qedhere
    \end{align*}
  
\end{proof}

Below, for two independent sets $I,J$ of $\mc{G}$, we say that $I\sim J$ if there are two vertices $u$ and $v$ in different connected components such that $J = (I\setminus u) \cup v$. In other words, $J = P_{u,v}(I)$ for a pair of vertices $u,v$ (in different connected components of $\mc{G}$). 

\begin{lemma}\label{lem:Ent-term-2}
    Under the assumptions of  \cref{prop:recurrence}
    $$\sum_{G} \E_{I_G} [\Ent (f|_{I_G})] \leq  n^2 \cE_{\cC}(\sqrt{f},\sqrt{f}) \max_{I\sim J} \sum_{G : I_G = J_G} \frac{\rho_{\cC'}}{|\cG| - |G|}  $$
    where the sum is over the connected components of $\cG$.
\end{lemma}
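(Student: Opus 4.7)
The plan is to apply, for each connected component $G$ and each restriction $I_G$, the log-Sobolev inequality associated to the configuration $\cC' = \cC'(I_G)$ on $\cG - G$ (where $k' = k - |I_G|$) to the function $f|_{I_G}$, and then bookkeep the resulting Dirichlet form as a weighted sum of edge contributions of the $\cC$-chain. By the defining inequality \eqref{eq:rho-cc-def} for $\rho_{\cC'}$,
\[
\Ent(f|_{I_G}) \leq (|\cG|-|G|)\,\rho_{\cC'}\,\cE_{\cC'}(\sqrt{f|_{I_G}},\sqrt{f|_{I_G}}).
\]

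The key observation is that transitions of the modified down-up walk on $\cC'$ correspond \emph{exactly} to transitions of the $\cC$-chain that preserve $I_G$. A $\cC'$-transition swaps $u \in I'$ with $v \notin I'$ lying in different components of $\cG - G$, and since $G$ is a union of entire components of $\cG$, this is equivalent to $u, v$ lying in different components of $\cG$. The corresponding transition probabilities are $P_{\cC'}(I', J') = 1/(|\cG|-|G|)^2$ and $P_{\cC}(I, J) = 1/|\cG|^2$. Combining these with the chain-rule identity $\mathbb{P}[I_G]\,\mu_{\cC'}(I') = \mu_{\cC}(I_G \cup I')$, I obtain
\[
\mathbb{P}[I_G]\,(|\cG|-|G|)\,\rho_{\cC'}\,\cE_{\cC'}(\sqrt{f|_{I_G}},\sqrt{f|_{I_G}}) = \frac{\rho_{\cC'}}{2(|\cG|-|G|)}\sum_{I \sim J:\ I_G = J_G,\ I_{-G} = I'}\mu_{\cC}(I)(\sqrt{f(I)}-\sqrt{f(J)})^2.
\]

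Summing over $I_G$ and then over all components $G$, and swapping the order of summation to range first over pairs $I \sim J$, yields
\[
\sum_G \E_{I_G}[\Ent(f|_{I_G})] \leq \max_{I \sim J}\left(\sum_{G: I_G = J_G}\frac{\rho_{\cC'}}{|\cG|-|G|}\right)\cdot\frac{1}{2}\sum_{I \sim J}\mu_{\cC}(I)(\sqrt{f(I)}-\sqrt{f(J)})^2.
\]
Since each distinct pair $I \sim J$ contributes $\mu_{\cC}(I)/|\cG|^2$ to $\cE_{\cC}(\sqrt f,\sqrt f)$, the second factor equals $|\cG|^2\,\cE_{\cC}(\sqrt f,\sqrt f) = n^2\,\cE_{\cC}(\sqrt f,\sqrt f)$, which gives the lemma. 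Unlike \cref{lem:Ent-term-1}, which relied on the nontrivial decomposition \cref{prop:opt-decomp}, no deep input is needed here; the only mild subtlety is that $\rho_{\cC'}$ depends on $I_G$ through $k - |I_G|$, but this value is identical for $I$ and $J$ whenever $I_G = J_G$, so the maximum over $I \sim J$ is well-defined.
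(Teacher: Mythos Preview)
Your proof is correct and follows essentially the same route as the paper: apply the defining log-Sobolev inequality of $\cC'$, unfold the Dirichlet form $\cE_{\cC'}$ edge by edge, use $\P[I_G]\mu_{\cC'}(I')=\mu_{\cC}(I_G\cup I')$ to rewrite it in terms of $\cC$-transitions that fix $I_G$, swap the order of summation, and pull out the maximum over $I\sim J$. The only blemish is the stray constraint ``$I_{-G}=I'$'' in your displayed identity, which should simply read ``$I\sim J$ with $I\cap G = J\cap G$ equal to the given $I_G$''; once that is fixed the bookkeeping matches the paper's line for line.
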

\begin{proof}
    Note that conditional on $I_{G}$, the distribution of $I$ on the remaining components is given by a configuration $\mathcal{C}'$ on $\mathcal{G}-G$ and $k'=k - |I_G|$. For any $I_G$, by the definition of $\rho_{\mathcal{C}'}$ \eqref{eq:rho-cc-def} we have
    \begin{equation}\label{eq:Ent-induction}
    \Ent (f|_{I_G}) \le ({|\mathcal{G}|-|G|})\rho_{\mathcal{C}'} \mathcal{E}_{\mathcal{C}'}(\sqrt{f|_{I_G}}, \sqrt{f|_{I_G}}). 
    \end{equation}
 Note that for each  fixed $I_G$ we have \begin{equation*}
        \cE_{\cC'}(\sqrt{f|_{I_G}},\sqrt{f|_{I_G}}) =  \sum_{J\sim L : J_G = L_G = I_G}  \mu(J_{-G}\,|\,I_G) \cdot \frac{1}{(|\cG| - |G|)^2} \left(\sqrt{f(J)} - \sqrt{f(L)} \right)^2\,.
    \end{equation*}

    Summing over $G$ provides \begin{align*} \sum_{G} ({|\mathcal{G}|-|G|})\mathbb{E}_{I_{G}}& \rho_{\mathcal{C}'} \mathcal{E}_{\mathcal{C'}}(\sqrt{f|_{I_G}},\sqrt{f|_{I_G}}) \\
    &= 
        \sum_{G} \frac{1}{|\cG| - |G|}\sum_{I_G}\mu({I_G}) \mu(J_{-G}\,|\,I_G) \rho_{\cC'} \sum_{J\sim L : J_G = L_G = I_G} \left(\sqrt{f(J)} - \sqrt{f(L)} \right)^2 \\
    &= \sum_{I \sim J} \sum_{G : I_G = J_G} \frac{\rho_{\cC'}}{|\cG| - |G|} \mu(I) \left(\sqrt{f(I)} - \sqrt{f(J)} \right)^2 \\
    &\leq |\cG|^2 \cE_{\cC}(\sqrt{f},\sqrt{f}) \max_{I\sim J} \sum_{G : I_G = J_G} \frac{\rho_{\cC'}}{|\cG| - |G|}\,. \qedhere
    \end{align*}
\end{proof}

\begin{proof}[Proof of \cref{prop:recurrence}]
    We now apply \eqref{eq:Ent-cond-G} for every connected component $G$ in $\cG$, using \cref{lem:Ent-term-1} for the first terms and \cref{lem:Ent-term-2} for the second terms to see  
    \begin{equation}\label{eq:Ent-sum-components}
    m\Ent(f) \leq \mathcal{E}_{\cC}(\sqrt{f},\sqrt{f}) \left( O_{\Delta}(n)  + n\cdot \max_{I\sim J} \sum_{G : I_G = J_G} \rho_{\cC'}\frac{|\cG|}{|\cG| - |G|} \right) 
    \end{equation}
    where we recall $m$ is the number of connected components of $\cG$. Recalling the definition of $\rho_{\cC}$, we get 
    \begin{align*}
    \rho_{\mathcal{C}}&\le\frac{1}{m} \max_{I\sim J} \sum_{G:I_G=J_G}\rho_{\mathcal{C}'} \frac{|\mathcal{G}|}{|\mathcal{G}|-|G|}+ O_{\Delta}(m^{-1})  \\
    &= \frac{m-2}{m} \max_{I\sim J} \frac{1}{m-2}\left(\sum_{G:I_G=J_G} \left(\frac{|\mathcal{G}|}{|\mathcal{G}|-|G|} \rho_{\mathcal{C}'}\right)\right) + O_{\Delta}(m^{-1}) . \qedhere
    \end{align*}
\end{proof}

\subsection{Analyzing the reduction via a martingale approach}

In order to obtain a bound on $\rho_{\cC}$ we will use two bounds: the recurrence obtained in  \cref{prop:recurrence} along with the sub-optimal bound $$\rho_{\cC} \lesssim_{\Delta} |\cG|$$ which holds by \cref{lem:crude-LSI} and standard comparison arguments provided $1/\Delta^{1000}\leq \alpha \leq 1/\Delta^2$, for instance.

We will interpret the bound in \cref{prop:recurrence} via the following stochastic process: Initialize $\cG_0 = \cG$ and $k_0 = k$. Assume that at step $i$, we have a configuration on $\mathcal{G}_i$ with parameter $k_i$ and $\alpha_i = k_i/\abs{\mathcal{G}_i}$. In every step, an adversary selects a pair $I\sim J$, then we pick uniformly at random a component $G=G_i$ of $\mathcal{G}_i$ with $I_G=J_G$, and set the subsequent configuration to be \begin{equation}\label{eq:config-update}
    \cG_{i+1} = \cG_i - G_i\,, \qquad k_{i+1} = k_i - |I_G| (= k_i - |J_G|)\,.
\end{equation} We will keep track of $\alpha=\alpha_i = k_i/ |\cG_i|$ along the process.  We denote $\mathcal{C}_i = (\mathcal{G}_i,k_i)$ the configuration at time $i$ and $\rho_i = \rho_{\cC_i}$.  We note that this is equivalent to iterating \cref{prop:recurrence} since  if $I\sim J$ and $I \neq J$ then then $I$ and $J$ differ in exactly two components, and $I$ and $J$ are the same in the remaining $m-2$ components.

Our goal will be to run the process until time $T = m - \log^4 n$, and we hope that $\alpha_j$ does not move too much.  With this in mind, define the stopping time $$\tau  = \min_t\{ |\alpha_t - \alpha_0| \geq 1/\sqrt{\log{n}}\}\,.  $$

Via a martingale argument, we will prove 
\begin{lemma}\label{lem:stopping-time-rare}
    In the above notation, we have $$\P\left(\tau < T \right) \lesssim_\Delta n^{-100}\,.$$
\end{lemma}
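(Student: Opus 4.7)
The plan is to show that $\delta_i := \alpha_i - \alpha_0$ stays small via a martingale-plus-drift decomposition, controlling each piece separately. A direct computation using $\alpha_i = k_i/|\cG_i|$ together with the update~\eqref{eq:config-update} yields
\[\delta_{i+1} - \delta_i \;=\; \frac{\alpha_i|G_i| - |I_{G_i}|}{|\cG_i| - |G_i|}.\]
For $i < T$ we have $m_i = m - i \geq \log^4 n$, hence $|\cG_i| \geq m_i \geq \log^4 n$, while every component of $\cG$ has size at most $1000\Delta\log n$ because $\cG$ is $\Delta$-good; therefore $|\cG_i| - |G_i| \geq |\cG_i|/2$ and the increment is pointwise bounded by $4|G_i|/|\cG_i|$. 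We decompose $\delta_i = M_i + D_i$ with $M_i$ the compensated martingale (started at $0$) and $D_i = \sum_{j<i}\E[\delta_{j+1} - \delta_j \mid \mathcal{F}_j]$ under the natural filtration.

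To control the drift, we exploit the identity $\sum_G(\alpha_i|G|-|I_G|) = \alpha_i|\cG_i| - k_i = 0$. Since $G_i$ is uniform over the $m_i - 2$ components on which $I$ and $J$ agree, the conditional sum restricted to those components equals the negative of the sum over the two excluded components, which is $O_\Delta(\log n)$ in absolute value regardless of the adversary's choice of $(I,J)$. Combined with a careful expansion of $1/(|\cG_i|-|G|)$ around $1/|\cG_i|$ (the correction $|G|/(|\cG_i|(|\cG_i|-|G|))$ is harmless because $|G|^2$ still sums to at most $\max|G|\cdot|\cG_i|$), this yields $|\E[\delta_{i+1}-\delta_i \mid \mathcal{F}_i]| = O_\Delta(\log n / m_i^2)$. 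Summing using $\sum_{k > \log^4 n} k^{-2} = O(\log^{-4} n)$ gives $|D_i| = O_\Delta(\log^{-3} n)$ for every $i \leq T$.

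The main technical step is an almost-sure bound on the quadratic variation of $M$ via the telescoping inequality
\[\frac{|G_i|}{|\cG_i|^2} \;\leq\; \frac{|G_i|}{|\cG_i|\,|\cG_{i+1}|} \;=\; \frac{1}{|\cG_{i+1}|} - \frac{1}{|\cG_i|},\]
summed up to time $\tau\wedge T$ to give $\sum_i |G_i|/|\cG_i|^2 \leq 1/|\cG_{\tau\wedge T}| \leq \log^{-4} n$. Multiplying by $|G_i| \leq 1000\Delta\log n$ yields $\sum_i (|G_i|/|\cG_i|)^2 = O_\Delta(\log^{-3} n)$ almost surely. Azuma's inequality applied to the stopped martingale $M_{\tau\wedge T}$ then gives
\[\P\!\left(|M_{\tau\wedge T}| \geq \tfrac{1}{2\sqrt{\log n}}\right) \;\leq\; 2\exp(-\Omega_\Delta(\log^2 n)) \;\ll\; n^{-100}.\]
On $\{\tau < T\}$, the definition of $\tau$ forces $|\delta_\tau| \geq 1/\sqrt{\log n}$, so $|M_\tau| \geq 1/\sqrt{\log n} - O_\Delta(\log^{-3} n) \geq 1/(2\sqrt{\log n})$ for $n$ large, completing the argument.

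The main obstacle is the almost-sure quadratic-variation bound: a naive per-step estimate $(4|G_i|/|\cG_i|)^2 \leq O_\Delta(\log^{-6} n)$ summed over $T = \Theta_\Delta(n)$ steps yields $O_\Delta(n\log^{-6} n)$, which is far too weak for Azuma to give a super-polynomial tail. The telescoping identity, which crucially leverages the monotonicity of $|\cG_i|$ under the process, is what extracts the correct polylogarithmic rate.
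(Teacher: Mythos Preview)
Your drift bound is fine and matches the paper's. The gap is in the concentration step. The telescoping identity gives you an almost-sure bound on $\sum_{i}(|G_i|/|\cG_i|)^2$, i.e.\ on the \emph{realized} quadratic variation $[M]$, but Azuma's inequality requires the increment bounds to be \emph{predictable}: you need $|M_{i+1}-M_i|\le c_i$ with $c_i$ measurable with respect to $\mathcal F_i$, and $|G_i|$ is only revealed at step $i+1$. If you pass to the obvious predictable bound $c_i = O_\Delta(\log n)/|\cG_i|$ (using $|G_i|\le 1000\Delta\log n$), the same telescoping gives only $\sum_i c_i^2 = O_\Delta(\log^2 n)\sum_i |\cG_i|^{-2} = O_\Delta(\log^{-2} n)$, one $\log n$ worse than you claim. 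Azuma then yields $\exp(-\Omega_\Delta(\log n)) = n^{-c(\Delta)}$ with $c(\Delta)$ of order $\Delta^{-2}$, which is far from $n^{-100}$.

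The missing ingredient is precisely the averaging over the random choice of $G_i$: bounding the \emph{predictable} quadratic variation
\[
\langle M\rangle_S \;\lesssim\; \sum_{i<S}\E\!\left[(|G_i|/|\cG_i|)^2\,\middle|\,\mathcal F_i\right]
\;\le\; \frac{O_\Delta(\log n)}{|\cG_i|^2}\cdot\E[|G_i|\mid\mathcal F_i]
\;\lesssim_\Delta\; \frac{\log n}{(m-i)^2},
\]
and summing recovers the crucial $O_\Delta(\log^{-3}n)$. This is exactly what the paper does, after which Freedman's martingale Bernstein inequality (with the uniform increment bound $O_\Delta(\log^{-3}n)$) gives $\exp(-\Omega_\Delta(\log^2 n))$. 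Your telescoping idea is elegant but is applied to the wrong object; once you compute the conditional variance, the argument goes through, and at that point the telescoping is no longer needed since $|\cG_i|\ge m_i$ already suffices.
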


We now record the iterative bound that follows on $\rho_{\cC}$ provided that the occupancy ratio always satisfies the assumptions of \cref{prop:recurrence}.  In particular this is satisfied if  $s < \tau$.
\begin{lemma}\label{lem:iterative-bound-before-stopping}
    For all $s < \tau$ we have 
    $$\rho_{\mathcal{C}} \lesssim_{\Delta} \left(\frac{m-s}{m}\right)^2\frac{|\mathcal{G}_0|}{|\mathcal{G}_s|}\rho_s + \sum_{i\le s}\left(\frac{m-i}{m}\right)^2 \cdot \frac{|\cG_0|}{|\cG_i|} \cdot \frac{1}{m-i}\,.$$
\end{lemma}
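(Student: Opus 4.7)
The plan is to iterate \cref{prop:recurrence} exactly $s$ times, proceeding by induction on $s$. The base case $s=0$ is immediate, since $\rho_{\mathcal{C}} = \rho_0$ by definition, and the right hand side reduces to $\rho_0$ up to the $O_\Delta(1/m)$ error.

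For the inductive step from $s-1$ to $s$ (with $s < \tau$), I would first verify that \cref{prop:recurrence} applies to the current configuration $\mathcal{C}_{s-1}$. Since $s-1 < \tau$, the occupancy ratio $\alpha_{s-1}$ lies in the admissible range $[1/(C_{\ref{prop:recurrence}}\Delta^{1000}), 1/(C_{\ref{prop:recurrence}}\Delta^8)]$: by the definition of $\tau$, $|\alpha_{s-1} - \alpha_0| < 1/\sqrt{\log n}$, and by the hypothesis of \cref{thm:lee-yau}, $\alpha_0$ already sits strictly within this interval (for $n$ large). The remaining hypotheses (maximum degree $\leq \Delta$, components which are $\Delta$-good, and sufficiently many components $m - s + 1 \geq \log^4 n$) are preserved automatically under the deletion of whole connected components, since $\mathcal{G}_{s-1}$ is simply a sub-union of the components of $\mathcal{G}_0$.

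Applying \cref{prop:recurrence} to $\mathcal{C}_{s-1}$ (with $m$ replaced by $m - s + 1$, the current number of components), and interpreting the arithmetic average over the $m - s - 1$ same-components matching the worst-case $I \sim J$ as the expectation over the uniform random choice of $G_{s-1}$ prescribed by the stochastic process, yields
\begin{equation*}
\rho_{s-1} \;\leq\; \frac{m - s - 1}{m - s + 1}\cdot \frac{|\mathcal{G}_{s-1}|}{|\mathcal{G}_s|}\,\rho_s \;+\; O_\Delta\!\left(\frac{1}{m - s + 1}\right).
\end{equation*}
Substituting into the inductive hypothesis, the multiplicative factor in front of $\rho_s$ telescopes as
\begin{equation*}
\left(\frac{m-s+1}{m}\right)^2 \cdot \frac{m-s-1}{m-s+1} \;=\; \frac{(m-s)^2 - 1}{m^2} \;\leq\; \left(\frac{m-s}{m}\right)^2,
\end{equation*}
producing the desired leading term $(\frac{m-s}{m})^2 \frac{|\mathcal{G}_0|}{|\mathcal{G}_s|}\rho_s$. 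The error $O_\Delta(1/(m-s+1))$, multiplied by the pre-existing prefactor $(\frac{m-s+1}{m})^2 \frac{|\mathcal{G}_0|}{|\mathcal{G}_{s-1}|}$, contributes exactly the $i = s-1$ summand in $\sum_{i \leq s}$, which is absorbed by the $\lesssim_\Delta$ in the claim.

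The main obstacle is the bookkeeping to justify that \cref{prop:recurrence} remains applicable at every step along the trajectory. This is precisely the purpose of the stopping time $\tau$: the admissibility of $\alpha_i$ is the only nontrivial hypothesis to maintain as the configuration evolves, and it is guaranteed so long as $i < \tau$. The factor computation itself is a short telescoping $\prod_{i=1}^{s}\frac{m - i - 1}{m - i + 1} = \frac{(m - s)(m - s - 1)}{m(m - 1)} \leq (\frac{m-s}{m})^2$, matching the claimed leading coefficient, and a parallel computation handles the cumulative error terms from each application of the recurrence.
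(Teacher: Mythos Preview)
Your proposal is correct and follows essentially the same approach as the paper: both iterate \cref{prop:recurrence} and use the telescoping identity $\prod_{i=1}^{s}\frac{m-i-1}{m-i+1} = \frac{(m-s)(m-s-1)}{m(m-1)} \le \left(\frac{m-s}{m}\right)^2$. Your write-up is more detailed than the paper's one-line proof, in particular spelling out why the hypotheses of \cref{prop:recurrence} persist at each step (via the stopping time $\tau$), which the paper leaves implicit.
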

\begin{proof}
    This follows from iterating the bound of \cref{prop:recurrence}, using the fact that 
    \[\prod_{i=1}^{s} \frac{m-i-2}{m-i} = \Omega_{\Delta}\left(\left(\frac{m-s}{m}\right)^2\right)\,. \qedhere 
    \]
\end{proof}

Assuming \cref{lem:stopping-time-rare}, \cref{thm:lee-yau} follows quickly.

\begin{proof}[Proof of Theorem \ref{thm:lee-yau}]
    Apply the recursive bound in  \cref{prop:recurrence} up to time $R = \min\{\tau,T\}$.  In the event that $\tau \geq T$ we apply  \cref{lem:iterative-bound-before-stopping} to see 
    \begin{align*}
    \rho_{\cC} \lesssim_{\Delta} \left(\frac{m - T}{m}\right)^2 \frac{|\cG_0|}{|\cG_T|} \rho_{\cC_T} + \sum_{i \leq T} \left(\frac{m - i}{m} \right)^2\frac{|\cG_0|}{|\cG_i|}\frac{1}{m - i}\,.
\end{align*}
The second term is $O_{\Delta}(1)$, noting that under the assumption that $G$ is $\Delta$-good, ${|\cG_0|}/{|\cG_i|} \le n/(m-i) =  O_{\Delta}(m/(m-i))$.  For the former, recall that $$\rho_{\cC_T} \lesssim_{\Delta} |\cG_T| $$ 
 and so we have $$\rho_{\cC} \lesssim_{\Delta} \frac{\log^8n}{m^2}|\cG_0| + 1\,. $$
 Noting that $m = \Omega_{\Delta}(n)$ by $\Delta$-goodness completes the bound on the event that $\tau > T$.  To bound $\rho_{\cC}$ on the event that $\tau < T$ we simply bound $$\E \rho_{\cC} 1\{\tau < T\} \lesssim n \P(\tau < T) = o_n(1),$$
 where we used  \cref{lem:crude-LSI} for the first bound and  \cref{lem:iterative-bound-before-stopping} for the second.  
\end{proof}

All that remains to prove is  \cref{lem:stopping-time-rare}.  This will follow from tracking the random sequence $\alpha_j$ via a martingale argument.

\begin{proof}[Proof of  \cref{lem:stopping-time-rare}]
    
Denote $\mathcal{F}_i$ the $\sigma$-algebra generated by the process up to time $i$.  We want to control the increments $\alpha_{i+1} - \alpha_i$; for simplicity of notation, in the following argument we denote $\mathcal{G}=\mathcal{G}_i$ and $G=G_i$. 
Observe that 
\[
\alpha_{i+1}-\alpha_i = \frac{k_i-\abs{I_G}}{|\mathcal{G}|-|G|}-\frac{k_i}{|\mathcal{G}|} = \frac{k_i|G| - |I_G| |\cG|}{(|\cG| - |G|)|\cG|}\,.
\]
Since $\sum_{G} |G| = |\cG|$ and $\sum_{G} |I_G| = k_i$ we have that $\sum_{G:I_G=J_G}|G| = |\cG|+O_{\Delta}(\max_G |G|)$, and $\sum_{G:I_G=J_G} |I_G| = k_i+O_{\Delta}(\max_G |G|)$, and hence

\begin{align*}
\E[\alpha_{i+1}-\alpha_i|\mathcal{F}_i] %
&= \frac{1}{m_i-2}\sum_{G:I_G=J_G}\left(\frac{k_i|G|-|I_G||\mathcal{G}|}{(|\mathcal{G}|-|G|)|\mathcal{G}|}\right) \\
&= \frac{1}{m_i-2} \sum_{G:I_G=J_G} \frac{k_i|G||\mathcal{G}| - k_i|G|(|\mathcal{G}|-|G|) - |I_G||\mathcal{G}|^2 + |I_G||\mathcal{G}|(|\mathcal{G}|-|G|)}{|\mathcal{G}|^2 (|\mathcal{G}|-|G|)} + O_{\Delta}\left(\frac{\max_G |G|}{m_i|\cG|}\right)\\
&= \frac{1}{m_i-2}\sum_{G:I_G=J_G}\frac{k_i|G|^2}{|\mathcal{G}|^2(|\mathcal{G}|-|G|)} - \frac{1}{m_i-2}\sum_{G:I_G=J_G} \frac{|I_G||G|}{|\mathcal{G}|(|\mathcal{G}|-|G|)}+ O_{\Delta}\left(\frac{\max_G |G|}{m_i|\cG|}\right). 
\end{align*}

Thus for $i \leq m - \log^4 n$ we have
\begin{align*}
    |\E[\alpha_{i+1}-\alpha_i|\mathcal{F}_i]|\le \frac{1}{m_i-2}\frac{k}{|\mathcal{G}|^3} \sum_{G}|G|^2 + \frac{1}{m_i-2}\cdot \frac{1}{|\mathcal{G}|^2} \sum_{G}|G|^2 +  O_{\Delta}\left(\frac{\max_G |G|}{m|\cG|}\right)\le O_{\Delta}\left( \frac{\log n}{m|\cG|}\right)\,.
\end{align*}

On the other hand,
\begin{align*}
    |\alpha_{i+1}-\alpha_i|\le \left|\frac{k_i-|I_G|}{|\mathcal{G}|-|G|} - \frac{k_i}{|\mathcal{G}|}\right|\le 2\frac{||\mathcal{G}||I_G| - k_i|G||}{|\mathcal{G}|^2}\le 4\frac{|G|}{|\mathcal{G}|}\,.
\end{align*}
Moreover,
\begin{align*}
\E[(\alpha_{i+1} - \alpha_i)^2 | \mathcal{F}_i]
&= \frac{1}{m_i-2}\sum_{G:I_G=J_G}\left(\frac{k_i|G|-|I_G||\mathcal{G}|}{(|\mathcal{G}|-|G|)|\mathcal{G}|}\right)^2\\
&\leq \frac{4}{m_i-2}\sum_{G \in \mathcal{G}}\frac{|G|^2}{|\mathcal{G}|^2} 
\end{align*}

We now construct the martingale $\bar{\alpha}_i$ by $$\bar{\alpha}_{i+1} = \bar{\alpha}_i + (\alpha_{i+1}-\alpha_i)-\E[(\alpha_{i+1}-\alpha_i)|\mathcal{F}_i].$$
The quadratic variation of $\bar{\alpha}$ satisfies $$\langle \bar{\alpha} \rangle_i = \sum_{j \leq i} \E( (\bar{\alpha}_j - \bar{\alpha}_{j-1})^2  \,|\,\mathcal{F}_{j-1})  \lesssim \sum_{j\le i} \left(\frac{1}{m_i}\sum_{G\in \mathcal{G}_i}|G|^2/|\mathcal{G}_i|^2\right) + \sum_{j\leq i}\log^2{n}/(m_i^2|\mathcal{G}_i|^2).$$
Consider $s\le m-\log^4 n$.  Note that since $|G| = O_{\Delta}(\log n)$ and $|\cG_i| \geq \log^4 n$ we have that \begin{align*}
    \left|\sum_{i\le s} \E[(\alpha_{i+1}-\alpha_i)|\mathcal{F}_i] \right| \lesssim_\Delta \sum_{i \leq s} \frac{1}{m - i} \frac{\log n}{|\cG_i|} \lesssim  \log n \sum_{i \leq s} \frac{1}{(m-i)^2} \leq \frac{\log n}{(m-s)} \lesssim \frac{1}{\log^3 n}\,.%
\end{align*} %
The same argument provides the bounds \begin{align*}
    \langle \bar{\alpha}\rangle_s \lesssim_\Delta \frac{1}{\log^3 n} \quad \text{ and } \quad \sup |\bar{\alpha}_{i+1}-\bar{\alpha}_i| \lesssim_\Delta \frac{1}{\log^3 n}\,.
\end{align*}
We will use the martingale Bernstein inequality by Freedman \cite{Fre75}. %
For $s \le m- \log^4 n $ we have the bound $$\P\left(|\bar{\alpha}_s-\bar{\alpha}_0|>x \right) \leq 2 \exp\left(- \Omega_{\Delta}\left(\frac{x^2 \log^3 n}{(1 + x)} \right) \right)\,.$$  
Applying this with $x = 1/\sqrt{\log{n}}$, we may sum over $s \leq m - \log^4 n$ to bound  $\P(\tau < T) \leq n^{-100}$ for $n$ sufficiently large in $\Delta$.  
\end{proof}

\subsection{Optimal decomposition using coupling: Proof of \cref{prop:opt-decomp}} \label{sec:opt-decomp}

Throughout, let $\mathcal{G}$ be as in the statement of \cref{prop:opt-decomp}. Let $\{G\}$ denote the connected components of $\mathcal{G}$. Throughout, we will fix a connected component $G$ of $\mc{G}$; an independent set $I_G$ of $G$; and a distinguished vertex $u \in I_G \subseteq G$. We also fix a non-negative function $f$. %

We now establish some notation. Let $\mu_k$ denote the uniform distribution on independent sets of $\mc{G}$ of size $k$. Let $\mu_1$ denote the distribution of $I' \sim \mu_k$ conditioned on the restriction of $I'$ to $G$ agreeing exactly with $I_G$. Let $\mu_2$ denote the distribution defined similarly, except we want agreement with $I_G \setminus u$. Finally, let $\mu$ denote the distribution of $I' \sim \mu_k$ conditioned on the restriction to $G\setminus u$ agreeing with $I_G$ (and hence, also with $I_G \setminus u$); in particular, the conditioning involved in the definition of $\mu$ places no restriction on the occupancy status of $u$. For a subset $S$ of vertices of $\mc{G}$, we let $N_+(S)$ denote the set consisting of $S$ and all neighbors of $S$ and we let $N_-(S) = N_+(S)\setminus S$. We will also let $C$ denote an absolute constant, which is sufficiently large to make an inequality at the end of this subsection hold.   

 Note that $\mu$ is a convex combination of $\mu_1$ and $\mu_2$ and so we can write $\mu = \theta \mu_1 + (1 - \theta) \mu_2$.  If $1/(C\Delta^{1000}) \leq k/n \leq 1/(C\Delta^8)$, there is some $\eps = \eps(\Delta) > 0$ so that $\theta \in [\eps,1 - \eps]$. 
As such we have 
\begin{equation}\label{eq:deriv-bounded}
\frac{d\mu_j}{d\mu} = O_\Delta(1)\end{equation} for $j \in \{1,2\}$. In particular, for a non-negative function $f$, we have $\E_{\mu_j}[f] = O_{\Delta}(\E_{\mu}[f])$.  Additionally, we note that by definition we have \begin{equation} \label{eq:mu1-mu2-fG}
    \E_{\mu_1} f = f_G(I_G), \qquad \E_{\mu_2} f = f_G(I_G \setminus u)\,.
\end{equation}
Thus, in order to prove \cref{prop:opt-decomp}, we want to derive a suitable upper bound on $\left(\sqrt{\E_{\mu_1} f} - \sqrt{ \E_{\mu_2} f} \right)^2$. We begin with the following preliminary lemma, which shows that it suffices to obtain an upper bound with one of the $\mu_i$s replaced by $\mu$.

\begin{lemma}\label{lem:mu2-mu}
    For $1/(C\Delta^{1000})  \leq k/n \leq 1/(C\Delta^6)$, we have
    \begin{equation*}
    \left(\sqrt{\E_{\mu_1} f} - \sqrt{ \E_{\mu_2} f} \right)^2 \lesssim_{\Delta} \left(\sqrt{\E_{\mu_1} f} - \sqrt{ \E_{\mu} f} \right)^2\,.
\end{equation*}
\end{lemma}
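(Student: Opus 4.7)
The plan is to reduce the claim to a purely algebraic inequality and then settle it via a one-line rationalization. Set $a = \sqrt{\E_{\mu_1} f}$ and $b = \sqrt{\E_{\mu_2} f}$. Since $\mu = \theta \mu_1 + (1-\theta)\mu_2$, we have $\E_\mu f = \theta a^2 + (1-\theta) b^2$, so using \eqref{eq:mu1-mu2-fG} the desired inequality becomes
\begin{equation*}
(a - b)^2 \;\lesssim_\Delta\; \left(a - \sqrt{\theta a^2 + (1-\theta) b^2}\right)^2.
\end{equation*}
Here the constant $\theta$, which equals the conditional probability under $\mu$ that $u$ lies in the independent set, is bounded in $[\varepsilon, 1-\varepsilon]$ for some $\varepsilon = \varepsilon(\Delta) > 0$ in the stated range of $k/n$; this is the observation made immediately before the lemma, and in turn rests on the slice-ratio bound \eqref{eq:slice-ratio} (or equivalently a consequence of \cref{prop:edgeworth}) applied to the graph $\mathcal{G}-G$, since all neighbors of $u$ lie inside the connected component $G$ and so $\theta$ is determined by the ratio $|\mathcal{I}_{k-|I_G|}(\mathcal{G}-G)|/|\mathcal{I}_{k-|I_G|+1}(\mathcal{G}-G)|$.

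The proof of the displayed inequality is then immediate by rationalization. Multiplying and dividing by the conjugate,
\begin{equation*}
a - \sqrt{\theta a^2 + (1-\theta) b^2} \;=\; \frac{a^2 - \theta a^2 - (1-\theta) b^2}{a + \sqrt{\theta a^2 + (1-\theta) b^2}} \;=\; \frac{(1-\theta)(a-b)(a+b)}{a + \sqrt{\theta a^2 + (1-\theta) b^2}}.
\end{equation*}
The numerator has absolute value at least $\varepsilon\,|a-b|(a+b)$, while the denominator satisfies
\begin{equation*}
a + \sqrt{\theta a^2 + (1-\theta) b^2} \;\le\; a + \sqrt{a^2 + b^2} \;\le\; 2(a+b),
\end{equation*}
so we conclude
\begin{equation*}
\left|a - \sqrt{\theta a^2 + (1-\theta) b^2}\right| \;\ge\; \tfrac{\varepsilon}{2}\,|a-b|,
\end{equation*}
which upon squaring yields the required bound with an implicit constant of $4/\varepsilon^2 = O_\Delta(1)$. (If $a+b=0$ both sides vanish.)

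There is no real obstacle here: the only substantive input is the two-sided control $\theta \in [\varepsilon, 1-\varepsilon]$, and once that is in hand the inequality is a direct consequence of the fact that taking a convex combination with bounded-away-from-extremal weights cannot make $\sqrt{\theta a^2 + (1-\theta) b^2}$ get much closer to $a$ than $b$ itself is. The analogous statement with $\mu_2$ replaced by $\mu$ on the right-hand side is thus the correct "smoothed" version to pass to, which is exactly what will be needed in the subsequent Stein-type coupling argument leading to \cref{prop:opt-decomp}.
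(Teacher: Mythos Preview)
Your proof is correct and takes essentially the same approach as the paper: both arguments reduce to the bound $\theta\in[\varepsilon,1-\varepsilon]$ and then rationalize the square-root difference via $\sqrt{x}-\sqrt{y}=(x-y)/(\sqrt{x}+\sqrt{y})$. The paper phrases this as the two-sided estimate $(\sqrt{x}-\sqrt{y})^2 \asymp (x-y)^2/(x+y)$ applied to each side together with $\E_{\mu_1}f-\E_\mu f=(1-\theta)(\E_{\mu_1}f-\E_{\mu_2}f)$, whereas you carry out the same rationalization directly on the right-hand side; the resulting constant $4/\varepsilon^2$ is the same.
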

\begin{proof}
    Bound 
 \begin{align*}
        \left(\sqrt{\E_{\mu_1} f} - \sqrt{ \E_{\mu_2} f} \right)^2 &\leq \frac{(\E_{\mu_1} f - \E_{\mu_2} f)^2}{(\E_{\mu_1} f + \E_{\mu_2} f)}\,.
    \end{align*}
    And similarly bound \begin{align*}
         \left(\sqrt{\E_{\mu_1} f} - \sqrt{ \E_{\mu} f} \right)^2 &\geq \frac{1}{2}\frac{(\E_{\mu_1} f - \E_{\mu} f)^2}{(\E_{\mu_1} f + \E_{\mu} f)}  \geq \frac{1}{2} \frac{(1-\theta)^2(\E_{\mu_1} f - \E_{\mu_2} f)^2 }{\E_{\mu_1} f + \E_{\mu_2} f}\,. 
    \end{align*}
    Using \cref{eq:deriv-bounded} completes the proof.
\end{proof}
From now on, for notational simplicity, we will denote $\mu_1$ by $\nu$. We introduce two versions of the modified down-up walk with slight variations so that they are reversible with respect to $\mu$ and $\nu$.  The transition $P_\mu$ takes an independent set that on $G \setminus u$ agrees with $I_G$; it then picks two vertices $a,b \in \cG$ uniformly at random and moves to $P_{a,b}(I)$ provided that $a,b$ are in different connected components and $P_{a,b}(I)$ agrees with $I_G$ on $G \setminus u$. The transition $P_{\nu}$ is defined similarly, the only change being that the independent must agree with $I_G$ on $G$ throughout.

Let $(X_t)_{t \geq 0}$ evolve according to the transition $P_\mu$.  Then we may define 
\[ h(x) = \sum_{t=0}^{\infty} \E[f(X_t)-\E_\mu f | X_0 =x].\]
As shown in \cite{bresler2019stein,reinert2019approximating}, this is well-defined and satisfies the following Poisson equation:  $f = \E_\mu f + h - P_\mu h$. In particular, using that $\E_{\nu}[P_{\nu}(h) - h] = 0$, we have that
\[
\E_\nu f = \E_\nu [\E_\mu f + h - P_\mu h]  = \E_\mu f + \E_\nu [(P_\nu h - h) - (P_\mu h - h)].
\]
Hence, 
\begin{align}\label{eq:stein}
    \left(\sqrt{\E_\mu f}-\sqrt{\E_\nu f}\right)^2 
    \leq \frac{|\E_\mu f - \E_\nu f|^2}{\E_{\mu}f + \E_{\nu}f}
    \leq \frac{\left|\E_\nu (P_\mu h - h) - \E_\nu (P_\nu h - h)\right|^2}{\E_\mu f} 
\end{align}

Consider a sample $X_0 \sim \nu$ and couple $(Y_0,Y_0')$ so that $(X_0,Y_0)$ is a transition from $P_\mu$ and $(X_0,Y_0')$ is a transition from $P_\nu$.  Concretely, pick two vertices $a,b \in \mc{G}$ uniformly at random.  If either $a$ or $b$ is equal to $u$, then we set $Y_0' = X_0$ and set $Y_0 = P_{a,b}(X_0)$.  If neither $a$ nor $b$ is equal to $u$, we have two cases:  if at least one of $a$ or $b$ is in $G$, then $Y_0 = Y_0' = X_0$; otherwise we set $Y_0 = Y_0' = P_{a,b}(X_0)$.  In sum, the only case in which $Y_0 \neq Y_0'$ is in the event that either $a$ or $b$ is equal to $u$, which occurs with probability at most $2/n$. Moreover, observe that the marginal distributions of $Y_0$ and $Y_0'$ in $(Y_0, Y_0')$ conditioned on the event $Y_0 \neq Y_0'$ coincide with $\mu_1 (= \nu)$ and $\mu_2$. %

By construction,
\begin{align*}
    \E_{\nu}(P_\mu h - h) - \E_{\nu}(P_{\nu}h-h) = \E \E_{(Y_0, Y_0')} \sum_{t = 0}^{\infty} [f(Y_t) - f(Y_t')],%
\end{align*}
where $\{(Y_t, Y_t')\}_{t\geq 0}$ is a contractive coupling of two trajectories evolving according to $P_{\mu}$, started from $(Y_0, Y_0')$, and where $(Y_0, Y_0')$ is distributed as in the previous paragraph. The outer expectation is taken over the randomness of the trajectory.
We will later describe a single step of the contractive coupling in the proof of \cref{lem:bound-H-gamma}, but for now we only need that once the two walks couple, they remain equal thereafter, and that the coupling time has exponential tails.  
Hence,
\begin{align}
    \frac{\left|\E_\nu (P_\mu h - h) - \E_\nu (P_\nu h - h)\right|^2}{\E_\mu f}
    &= \frac{\left|\E \left[\sum_t (f(Y_t)-f(Y_t'))\right]\right|^2}{\E_\mu f} \nonumber \\
    &= \frac{\left|\P(Y_0 \neq Y_0')  \E[\sum_t (f(Y_t)-f(Y_t'))\,|\,Y_0 \neq Y_0']\right|^2}{\E_\mu f } \nonumber  \\
    &\lesssim \frac{1}{n^2}\frac{\left| \E  \E_{(Y_0,Y_0') \sim \gamma}[\sum_t (f(Y_t)-f(Y_t'))]\right|^2}{\E_\mu f}, \label{eq:dirichlet-to-gamma}
\end{align}
where $\gamma$ denotes the following distribution: first choose $v \notin G$ uniformly at random, then choose $I'$ from the uniform distribution on independent sets of size $(k-I_G)$ in $\mc{G}\setminus (G \cup N_+(v))$, and finally, let $Y_0 = I_G \cup I'$, $Y_0' = (I_G\setminus u) \cup I' \cup v$.  

Let $\tau$ be the coupling time of $(Y_t,Y_t')_{t\geq 0}$ and let $c > 0$ be a sufficiently small constant to be chosen later. Let 
\[\zeta = \E \E_{(Y_0, Y_0') \sim \gamma}[f(Y_t) + f(Y_t')].\]
From \cref{eq:deriv-bounded}, it follows that $\zeta = O_{\Delta}(\E_{\mu}[f])$. Therefore, we have
\begin{align}
    \frac{\left|\E_\nu (P_\mu h - h) - \E_\nu (P_\nu h - h)\right|^2}{\E_\mu f} 
    &\lesssim \frac{1}{n^2}  \frac{\left|\E\E_{(Y_0,Y_0')\sim \gamma} [\sum_t (f(Y_t)-f(Y_t'))]\right|^2}{\E_\mu f} \nonumber \\
    &\leq \frac{1}{n^2} \frac{1}{\E_\mu f}\left(\sum_{t} e^{-ct/n}\zeta\right) \cdot \sum_t e^{ct/n}\frac{(\E\E_{(Y_0,Y_0')\sim \gamma}[f(Y_t)-f(Y_t')])^2}{\zeta} \nonumber \\
    &\lesssim_{\Delta} \frac{1}{n}\sum_{t\le \tau}\exp(ct/n)\frac{(\E\E_{(Y_0,Y_0')\sim \gamma}[f(Y_t)-f(Y_t')])^2}{\E\E_{(Y_0, Y_0')\sim \gamma}[f(Y_t)+f(Y_t')]} \nonumber \\
    &\leq \frac{1}{n}\E\E_{(Y_0,Y_0')\sim \gamma} \sum_{t\le \tau}\exp(ct/n) \left|\sqrt{f(Y_t)} - \sqrt{f(Y_t')}\right|^2, \label{eq:bound-by-H}
\end{align}
where the second and last inequalities are Cauchy-Schwarz. 
For a distribution $\Gamma$ on $(Y_0, Y_0')$, let 
\begin{align*}
    H(\Gamma) :=  \E\E_{(Y_0,Y_0') \sim \Gamma} \sum_{t\le \tau} \exp(ct/n)|\sqrt{f(Y_t)} - \sqrt{f(Y_t')}|^2. 
\end{align*}

Our last step will be to bound $H(\gamma)$.  
\begin{lemma} \label{lem:bound-H-gamma}
There are absolute constants $c$ and $C$ such that for $1/(C\Delta^{1000}) \leq k/n \leq 1/(C\Delta^8)$,
   \[ H(\gamma) \leq 2n \cdot \E_{(Y_0, Y_0')\sim \gamma}\left[\sqrt{f(Y_0)} - \sqrt{f(Y_0')}\right]^2 \]
\end{lemma}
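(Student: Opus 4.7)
The plan is to establish a step-by-step contraction for the potential $G_t := (\sqrt{f(Y_t)} - \sqrt{f(Y_t')})^2$ under an explicitly constructed coupling of the two trajectories, and then to sum a geometric series. First I will describe the coupling, which is the Bubley--Dyer path coupling adapted to $P_\mu$: both walks share a uniform draw $(a,b) \in \cG^2$ per step. When $a$ lies outside the symmetric difference $Y_t \triangle Y_t'$ and its neighborhood, $P_{a,b}$ is applied to both walks in parallel; when $a$ lies in the symmetric difference, its role is permuted between the two walks so that, in the likely event that $b$ is compatible with both walks (probability $1 - O(\Delta \alpha)$), the two walks \emph{merge}, i.e., $Y_{t+1} = Y_{t+1}'$. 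In the rare incompatible cases, a careful fallback preserves the marginals under $P_\mu$.

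Next I will prove the one-step contraction
\[\E[G_{t+1} \mid \mathcal{F}_t] \leq (1 - c_1/n)\, G_t\]
for some $c_1 = c_1(\Delta) > 0$. The structural input is that whenever $Y_t \triangle Y_t' = \{u,v\}$, the potential $G_t$ is itself a single Dirichlet-form term $(\sqrt{f(Y_t)} - \sqrt{f(P_{u,v}(Y_t))})^2$. Decomposing by case: the merge case contributes $G_{t+1} = 0$ with probability $\Omega(1/n)$, yielding a gain of $\Omega(G_t/n)$; the stable case (Hamming distance remains two, both walks simply re-parametrize $G_t$ as a different Dirichlet-form term) contributes at most $G_t$ times its probability; the bad case (Hamming distance grows) has probability $O(\alpha \Delta / n)$ and contributes a controlled error. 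The assumption $k/n \leq 1/(C\Delta^8)$ for $C$ large ensures the merge gain strictly dominates the error term.

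Iterating the contraction gives $\E[G_t] \leq (1 - c_1/n)^t\, \E_\gamma[G_0]$, and summing against $e^{ct/n}$ yields
\[H(\gamma) = \sum_{t \geq 0} e^{ct/n}\, \E[G_t] \leq \E_\gamma[G_0] \sum_{t \geq 0} e^{(c-c_1)t/n} \leq \frac{n}{c_1 - c}\, \E_\gamma[G_0] \leq 2n\, \E_\gamma[G_0],\]
provided $c$ is chosen small enough relative to $c_1$ (say $c \leq c_1/2$).

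The main obstacle is the contraction estimate itself. Standard path coupling controls only the Hamming distance $|Y_t \triangle Y_t'|$, but $G_t$ is not a Lipschitz function of this distance, and a priori a single swap under $P_\mu$ could change $\sqrt{f}$ by an arbitrary amount. What saves the argument is that $\gamma$ is concentrated on pairs differing by exactly one swap, so $G_t$ starts life as a single Dirichlet-form term; the path-coupling structure ensures that, while $G_t$ can re-parametrize through different swaps as the walks evolve, the merge case---the unique coupling outcome that makes $G_{t+1} = 0$---contributes a gain of size exactly $\Omega(G_t/n)$. The delicate bookkeeping required to show that this gain strictly dominates the error from Hamming-distance-increasing events, uniformly in $f$, is the technical heart of the argument and relies crucially on the small occupancy ratio $k/n \lesssim \Delta^{-8}$.
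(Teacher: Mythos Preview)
Your high-level plan (first-step analysis of a coupled pair started from $\gamma$, merge/stable/bad trichotomy, geometric summation) matches the paper's, but the claimed \emph{pointwise} contraction
\[
\E[G_{t+1}\mid\mathcal F_t]\le(1-c_1/n)\,G_t,\qquad G_t:=\bigl(\sqrt{f(Y_t)}-\sqrt{f(Y_t')}\bigr)^2,
\]
is false for general $f$, and this is precisely where your argument fails. In the ``stable'' case the pair $(Y_{t+1},Y_{t+1}')$ is a \emph{different} pair of independent sets (still differing by a single swap), so $G_{t+1}=(\sqrt{f(Y_{t+1})}-\sqrt{f(Y_{t+1}')})^2$ is a new Dirichlet-form term, not a re-parametrization of $G_t$; there is no reason it is bounded by $G_t$. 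Take $f$ with a single spike so that one particular swap pair $(I^*,J^*)$ gives a large $G$-value and all others give $0$: then starting from any other pair you have $G_t=0$ but after one stable step you may land at $(I^*,J^*)$ with $G_{t+1}\gg 0$. Thus neither your stable-case bound nor the resulting iterate $\E[G_t]\le(1-c_1/n)^t\E_\gamma[G_0]$ is justified.

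The paper avoids this by working \emph{distributionally} rather than pathwise. The key observation you are missing is that, conditioned on the stable case (ii), the law of $(Y_1,Y_1')$ is again $\gamma$ (this uses that $\gamma$ already averages over the choice of $v$ and of $I_{-G}$). Hence one gets a recursion for the scalar $H(\gamma)$:
\[
H(\gamma)\le \E_\gamma G_0 + e^{c/n}\bigl((1-\rho-1/n)\,H(\gamma)+\rho\,H(\text{(iii)})\bigr),
\]
which closes once one also bounds the bad-case contribution $H(\text{(iii)})\lesssim\Delta\,H(\gamma)$. This last step is itself nontrivial (your ``controlled error'' gloss is not enough): in case~(iii) the pair has Hamming distance~$4$, and the paper interpolates along a length-$5$ path of single swaps $Y_1\!\to\!Z_0\!\to\!Z_1\!\to\!Z_2\!\to\!Y_0\!\to\!Y_1'$, checking via the LCLT that each edge distribution has density $O(\Delta)$ against $\gamma$. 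With $\rho\le 2\Delta k/n^2$ and $k/n\le 1/(C\Delta^8)$, the recursion then yields $H(\gamma)\le 2n\,\E_\gamma G_0$. The fix to your outline is thus to replace the pathwise contraction by this distributional self-recursion for $H(\gamma)$.
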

\begin{proof}
    We will bound $H(\gamma)$ using a recursion derived via first-step analysis. To this end, we describe the first step of the coupling of the trajectories, started from independent sets $(Y_0, Y_0')$ in the support of $\gamma$. By definition, $Y_0 \neq Y_0'$ with $u \in Y_0$ and $Y_0' = P_{u,v}(Y_0)$ for some $v\notin G$. We pick two vertices $a,b \in \mc{G}$ uniformly at random. If $a$ is equal to $u$ or $v$, then we set $Y_1 = P_{u,b}(Y_0), Y_1' = P_{v,b}(Y_0')$. Else, if $b$ is equal to $u$ or $v$, then we set $Y_1 = P_{a,u}(Y_0), Y_1' = P_{a,v}(Y_0')$. If neither $a$ nor $b$ equals $u$ or $v$, we have two cases: if at least one of $a$ or $b$ is in $G$, we set $Y_1 = Y_0, Y_1' = Y_0'$, else we set $Y_1 = P_{a,b}(Y_0)$, $Y_1' = P_{a,b}(Y_0')$. 

    Note that there are three possibilities: (i) $Y_1 = Y_1'$, (ii) $Y_1' = P_{u,v}(Y_1)$, (iii) $Y_1 = P_{w,w'}(Y_0), Y_1' = Y_0'$ for some $w \in Y_0 \setminus (G \cup v)$ and $w' \in N_-(v)$. Let $\rho$ denote the probability of the third case. The probability of the first case is at least $1/n$, and hence, the probability of the second case is at most $(1-\rho - 1/n)$. Note that $\rho \leq 2k/n\cdot \Delta/n$. Also, observe that the distribution of $(Y_1, Y_1')$ conditioned on the event that $Y_1' = P_{u,v}(Y_1)$ coincides with $\gamma$. Therefore, we may write:
    \begin{align*}
        H(\gamma) \leq \E_{(Y_0, Y_0')\sim \gamma}[\sqrt{f(Y_0)} - \sqrt{f(Y_0')}]^2 + e^{c/n}(1-\rho-1/n)H(\gamma) + e^{c/n}\rho\cdot H((Y_1,Y_1')|\text{(iii)}).
    \end{align*}
    
    We now bound the third term. With notation as above, we consider the following distribution on `paths' between $Y_1$ and $Y_1' (=Y_0')$. Let $\delta$ be a uniformly random vertex outside the distance two neighborhood of $Y_0 \cup Y_0'$ and consider the path $Y_1 \to Z_0 \to Z_1 \to Z_2 \to Y_0 \to Y_1'$ where $Z_0 = P_{u,\delta}(Y_1)$, $Z_1 = P_{u,w'}(Z_0)$, $Z_2 = P_{u,w}(Z_1)$ (so that $Y_0 = P_{u,\delta}(Z_2)$), and as before, $Y_1' = Y_0' = P_{u,v}(Y_0)$. Observe that, over the randomness of $\gamma$, $w' \in N_-(v)$ has relative density $O(\Delta)$ with respect to the uniform distribution on vertices in $\mc{G}\setminus G$. Moreover, by \cref{eqn:prob-hitting}, over the randomness of $\gamma$, the distribution of $w'$ has relative density $O(1)$ with respect to the uniform distribution on vertices in $\mc{G}\setminus G$. It follows that each of the five distributions $(Y_1, Z_0), (Z_1, Z_0), (Z_1, Z_2), (Y_0, Z_2), (Y_0, Y_1')$ has density $O(\Delta)$ with respect to $\gamma$. Therefore, by interpolating along this path and using Cauchy-Schwarz, we may bound
    \begin{align*}
        H((Y_1,Y_1')|\text{(iii)}) \lesssim \Delta \cdot H(\gamma);
    \end{align*}
    thus
    \begin{align*}
        H(\gamma) &\leq \E_{(Y_0, Y_0')\sim \gamma}[\sqrt{f(Y_0)} - \sqrt{f(Y_0')}]^2 + e^{c/n}(1-1/n+O(\Delta \rho))H(\gamma) \\ 
        &\leq \E_{(Y_0, Y_0')\sim \gamma}[\sqrt{f(Y_0)} - \sqrt{f(Y_0')}]^2 + (1-1/2n)H(\gamma),
    \end{align*}
    provided we pick $c = 1/1000$ and $k/n \leq \Delta^{-8}/C$ for a sufficiently large absolute constant $C$. From this, we immediately get the required estimate
    \[H(\gamma) \leq 2n \cdot \E_{(Y_0, Y_0')\sim \gamma}\left[\sqrt{f(Y_0)} - \sqrt{f(Y_0')}\right]^2. \qedhere\]
\end{proof}

\begin{proof}[Proof of \cref{prop:opt-decomp}]
    Combining \cref{eq:mu1-mu2-fG}, \cref{lem:mu2-mu}, \cref{eq:stein}, \cref{eq:bound-by-H} and \cref{lem:bound-H-gamma} shows that $$(\sqrt{f_G(I_G)} - \sqrt{f_G(I_G \setminus u)})^2 \lesssim_{\Delta} \E_{(Y_0,Y_0') \sim \gamma}\left[\sqrt{f(Y_0)} - \sqrt{f(Y_0')}\right]^2\,. $$ Recalling the definition of $\gamma$ shows \[ \E_{(Y_0,Y_0') \sim \gamma}\left[\sqrt{f(Y_0)} - \sqrt{f(Y_0')}\right]^2 \lesssim_{\Delta} \E_{v} \E_{I_{-G}|I_G}  \left[\left(\sqrt{f(I)}-\sqrt{f(P_{u,v}(I))}\right)^2\right]\,. \qedhere\]
\end{proof}

\section*{Acknowledgments}
M.M. is supported in part by NSF grant DMS-2137623. H.T.P. is supported by a Two Sigma Fellowship. 
\bibliographystyle{abbrv}
\bibliography{main.bib}

\end{document}